\def\dOi{11(2:11)2015}
\subjclass{F.4.1}
\newcommand{\ignore}[1]{}
\newcommand{\eq}{\mathcal{EQ}}
\newcommand{\ueq}{=_{}^?}
\theoremstyle{plain}
\newtheorem{theorem}{Theorem}[section]
\newtheorem{lemma}[theorem]{Lemma}
\theoremstyle{definition}
\newtheorem{definition}[theorem]{Definition}
\newtheorem{example}[theorem]{Example}
\begin{document}

\title[Unification Modulo One-Sided Distributivity]{ On Unification
  Modulo One-Sided Distributivity: Algorithms, Variants and
  Asymmetry\rsuper*}

\author[A.~M.~Marshall]{Andrew M. Marshall\rsuper a}	
\address{{\lsuper a}University of Mary Washington}
\email{marshall@umw.edu}  

\author[C.~Meadows]{Catherine Meadows\rsuper b}	
\address{{\lsuper b}Naval Research Laboratory}	
\email{catherine.meadows@nrl.navy.mil}  

\author[p.~Narendran]{Paliath Narendran\rsuper c}	
\address{{\lsuper c}University at Albany--SUNY}	
\email{dran@cs.albany.edu}  

\keywords{Unification, One-Sided Distributivity, Equational Logic, Asymmetric Unification}

\titlecomment{{\lsuper*}A preliminary version of a portion of this work appeared as~\cite{marshall12c}}

\begin{abstract}
An algorithm for unification modulo {\em one-sided\/} distributivity 
is an early result by Tid\'{e}n and Arnborg.  
More recently this theory has been of interest in cryptographic protocol analysis 
due to the fact that many cryptographic operators satisfy this property.  
Unfortunately the algorithm presented in the paper, although correct, 
has recently been shown not to be polynomial time bounded as 
claimed. 
In addition, for some instances,  
there exist most general unifiers that are exponentially large 
with respect to the input size. In this paper we 
first present a new polynomial 
time algorithm that solves the \emph{decision problem} for
a non-trivial subcase, based on a typed theory, of 
unification modulo one-sided distributivity.
Next we present a new polynomial algorithm that
solves the decision problem for unification modulo one-sided distributivity. 
A construction, employing string compression, is used to achieve the
polynomial bound. Lastly, we examine the one-sided distributivity problem in the 
new asymmetric unification paradigm. We give the first asymmetric unification algorithm 
for one-sided distributivity. 
\end{abstract}

\maketitle
\section{Introduction}
Equational unification has long been a core component of automated deduction and 
more recently has found application in symbolic cryptographic protocol analysis~\cite{Escobar06}.  
In particular, the algorithm for 
unification modulo a \textit{one-sided} distributivity axiom \[ X \times (Y + Z) = X
\times Y + X \times Z \] is an early result by Tid\'{e}n and 
Arnborg~\cite{TidenArnborg87}.  
More recently this theory has been of interest in protocol analysis 
due to the fact that many cryptographic operators satisfy this property. 
Unfortunately the algorithm presented in the paper, although elegant and correct, 
has recently been shown not 
to be polynomial time bounded as claimed~\cite{narendran10}. 
In  addition, for some instances,  
there exist most general unifiers (mgus) that are exponentially large 
with respect to the input size.
In this paper we examine the \emph{decision} problem
for one-sided distributivity. 
More formally we  consider the decision problem 
for {\em elementary\/} unification modulo this theory,
where the terms can only
contain symbols in the signature of the theory and variables.
This is the theory considered by Tid\'{e}n and Arnborg~\cite{TidenArnborg87}. 
We first present a new polynomial 
time algorithm which solves the \emph{decision} problem for
a non-trivial subcase, based on a typed theory, of 
unification modulo one-sided distributivity.
This subcase happens to be sufficient to express the negative 
complexity result in~\cite{narendran10}.
Next we present a new polynomial algorithm which
solves the \emph{decision} problem for unification modulo 
one-sided distributivity. 
We employ string compression through the use of straight line programs,
which allows us to achieve the polynomial bound. 
Compression by straight line programs proves to be sufficient for our results, however
the use of compression in unification and matching is not novel to this paper. See 
for example~\cite{gascon11} and ~\cite{Levy08} for some pioneering work
on using compression in unification and other related problems. 

Since our initial results~\cite{marshall12c}, a new unification paradigm has been developed in~\cite{Erbatur13a} and is based on newly identified requirements arising
from the symbolic analysis of cryptographic protocols. In order to satify these requirements
and to apply state space reduction techniques, it is usually necessary for at least part of this state to be in normal form, and  to remain in normal form even after unification is performed.  This requirement can be expressed as an {\em asymmetric} unification
problem $\{ s_1 =^{\downarrow} t_1, ~\ldots, ~s_n =^{\downarrow} t_n \} $ where the $=^{\downarrow}$ denotes a unification problem with the  restriction that any unifier leaves the right-hand side of each equation irreducible. Given our motivation in protocol analysis,
our final result is to consider the theory in the newly developed paradigm and give the first asymmetric unification algorithm  for one-sided distributivity. 

\ignore{
\section{Contribution}
We present several new algorithms that tackle various unification problems
for the theory of one-sided distributivity. 
We present a new polynomial algorithm for a restricted form of the one-sided
distributivity problem. This problem, although a sub case of the general 
problem, is still sufficiently complex that the exponential problem
example is still a valid problem. In addition, the sub-problem provides
a nice illustration, minus the complex compression, of the method used
for the general problem. We present a new polynomial time bounded algorithm
for the unification decision problem for the theory one-sided 
distributivity. This algorithm employs string compression to achieve the result. Lastly we present a new algorithm for the asymmetric unification
problem in the theory of one-sided distributivity. The new
algorithm is not polynomial and the compression method does not directly
transfer from the unification to the asymmetric unification case the 
algorithm. However, the new algorithm is still interesting since it is the
first such algorithm, it is a easily understandable and implementable algorithm (much as the original Tid\'{e}n and Arnborg~\cite{TidenArnborg87}
 algorithm) and runs quickly on most cases.  
}

\section{Paper Outline}

Let us give a brief preview of the remaining portions of the paper. 
\begin{itemize}
\item Section~\ref{prelims} presents the preliminary background material.
\item Section~\ref{TA_Section} presents an overview of the complexity 
result concerning the original Tid\'{e}n and Arnborg~\cite{TidenArnborg87}
algorithm. 
\item Section~\ref{single_hom} presents the first contribution of this paper. We consider a restricted version of the one-sided distributivity 
problem, which is still sufficiently expressive to contain the family
of problems presented in Section~\ref{TA_Section}. For this new restricted 
version of the problem we develop a new polynomial time bounded 
decision algorithm (Algorithm~\ref{alg_A_2}). This section also provides an introduction to the 
methods used to solve the main problem. The solution to the main problem builds on Algorithm~\ref{alg_A_2} primarily by the addition of string compression.
\item Section~\ref{general_section} contains the main contribution of this
paper. Here we present a new polynomial bounded algorithm 
(Algorithm~\ref{alg_A_1}) for the decision unification problem over a 
theory of one-sided distributivity. The result is achieved by building
on Algorithm~\ref{alg_A_2} and using polynomial methods for
solving several problems on compressed strings. 
\item Section~\ref{asym_unif_sec} considers the one-sided distributivity 
problem in the new asymmetric unification paradigm. This new 
paradigm has only recently been identified (see~\cite{Erbatur13a}) and
is important to the area of cryptographic protocol analysis. Here
we present the first asymmetric unification algorithm for the theory
of one-sided distributivity. Although the algorithm is not polynomial
bounded, it should (much like the original Tid\'{e}n and Arnborg algorithm)
perform well on most problems. In addition, the algorithm is relatively 
simple, consisting of a small set of inference rules 
(see Figure~\ref{Asym_Inf_Rules}). One could ask why Algorithm~\ref{alg_A_1}
is not extended to the asymmetric problem? Unfortunately, the string compression methods required for the polynomial result in 
Section~\ref{general_section} do not easily extend to encapsulate the
additional information needed in an asymmetric unification problem.
This remains an open problem.      
\end{itemize}

\section{Preliminaries and General Results}\label{prelims} 

We use the standard notation of equational unification~\cite{BaaderSnyder}
and term rewriting systems~\cite{Baader98}.
The set of $\Sigma$-terms, denoted by $T(\Sigma, \mathcal{X})$, is built over the signature $\Sigma$ and the
(countably infinite) set of variables $\mathcal{X}$.  The terms $t|_p$ and $t[u]_p$ denote respectively the subterm of $t$ at the
position $p$, and the term $t$ having $u$ as subterm at position $p$. The symbol of $t$ occurring at the
position $p$ (resp. the top symbol of $t$) is written $t(p)$
(resp.  $t(\epsilon)$). 
The set of positions of a term $t$ is denoted by $Pos(t)$, the set of non variable positions for a term $t$ over a signature
$\Sigma$ is denoted by $Pos(t)_{\Sigma}$. 
A $\Sigma$-rooted term is a term whose top
symbol is in $\Sigma$.  The set of variables of a term $t$ is denoted
by $Var(t)$. 

A $\Sigma$-substitution $\theta$ is an endomorphism of $T(\Sigma,\mathcal{X})$ denoted by $\{ X_1 \mapsto t_1, \dots , X_n
\mapsto t_n \}$ if there are only finitely many variables
$X_1,\dots,X_n$ not mapped to themselves. We call domain of
$\theta$ the set of variables $\{ X_1,\dots, X_n\}$ 
and range of $\theta$ the set of terms $\{t_1,\dots,t_n\}$.  Application of a substitution $\theta$ to a term
$t$ (resp. a substitution $\phi$) may be written $t\theta$ (resp.~$\phi\theta$) or in functional notation as $\theta(t)$.

Given a first-order signature $\Sigma$, and a set $E$ of
$\Sigma$-axioms (i.e., pairs of $\Sigma$-terms, denoted by $l = r$),
the {\it equational theory} $=_E$ is the congruence closure of $E$
under the law of substitutivity. By a slight abuse
of terminology, $E$ will be often called an equational theory. 
\ignore{
An axiom $l=r$ is \emph{variable-preserving} if $Var(l) = Var(r)$. 
An axiom $l=r$ is \emph{linear} (resp. \emph{collapse-free}) if $l$ and $r$ are linear (resp. non-variable terms).
An equational theory is \emph{variable-preserving} (resp. linear/collapse-free) 
if all its axioms are variable-preserving (resp. linear/collapse-free).
An equational theory $E$ is \emph{finite} if for each term $t$, there are finitely many terms $s$ such that $t=_E s$.
}

Given an equational theory $E$, an $E$-unification problem 
is a set of equations \[ \mathcal{S} = \{ s_{1} =^{?} t_{1} , \dots , s_{m} =^{?} t_{m}\} \] 
A solution to $\mathcal{S}$, called an \emph{E-unifier\/}, is a 
substitution $\delta$ such that
$\delta(s_i) =_E^{} \delta(t_i)$ for all $1 \leq i \leq m$.
A substitution $\delta$ is \emph{more general
modulo\/} $E$ than $\theta$ on a set of variables $V$, 
denoted as $\delta \leq_{E}^V \theta$,
if and only if there is a substitution $\tau$ such that $\delta\tau(X)
=_{E} \theta(X)$ for all $X \in V$. 
Two substitutions $\theta_1^{}$ and $\theta_2^{}$ are 
{\em equivalent modulo\/} $E$ on a set of variables $V$, 
denoted as $\theta_1^{} \equiv_{E}^V \theta_2^{}$,
if and only if 
$\theta_1^{} (X) =_{E} \theta_2^{} (X)$ for all $X \in V$. 
For a substitution~$\theta$ and a set
of variables~$V$, $\theta|_V^{}$ denotes the restriction of the substitution
to the variables in~$V$, i.e., \[ \theta|_V^{} ~ = ~
\{ \, X \mapsto \theta(X) ~ | ~ X \in V \, \} \]

We call a set $\Gamma$ of substitutions a \emph{complete set of E-unifiers\/}
of $\mathcal{S}$ if and only if (i)  
for every $\theta \in \Gamma$, $\theta$ is an $E$-unifier 
and (ii) for every $E$-unifier $\theta$, 
there is a substitution $\delta \in \Gamma$ where $\delta \leq_{E}^{Var(S)} \theta$ holds.
A complete set of $E$-unifiers $\Gamma$
of a unification problem $\mathcal{S}$ is \emph{minimal\/} 
if and only if
for any two $E$-unifiers $\delta$ and $\theta$ in $\Gamma$, 
$\delta \leq_{E}^{Var(S)} \theta$ implies that $\delta = \theta$.

Equational unification problems are  classified based on the
function symbols that appear in them, i.e.,
their signature (\textit{Sig}). 
An $E$-unification problem~$S$
is {\em elementary\/} if and only if $Sig (S) \, = \, Sig(E)$. $S$ is
called an $E$-unification problem {\em with constants\/} if \mbox{$Sig (S)
\, \smallsetminus \, Sig(E)$} contains only free constants. Finally, if
there are uninterpreted function symbols in \mbox{$Sig (S) \, 
\smallsetminus \, Sig(E)$}, $S$ is called a general $E$-unification
problem. 

A set of equations $S$ is said to be in {\em standard form\/} 
over a signature $F$ if and only if
every equation in $S$ is of the form $ X =_{}^? t $ where $X$ is a variable
and $t$, a term over $F$, is one of the following: (a) a variable different from $X$, 
(b) a constant, or (c) a term of depth~1 that contains no constants.
We say $S$ is {\em in standard form\/} if and only if it is in standard form over the entire
signature. For a set of equations $S$ in standard form, ${lhs}(S)$
denotes the set of left-hand sides of equations in $S$. 
It is not generally difficult 
to decompose equations of a given problem into simpler standard forms.

A set of equations is said to be in {\em dag-solved form\/} if and only if they can be arranged as a 
list  \[ X_1 =_{}^? t_1 , \; \ldots , \; X_n =_{}^? t_n \] where
  (a) each left-hand side $X_i$ is a distinct variable, and
  (b) $\forall \, 1 \le i \le j \le n$: $X_i$ does not occur in
$t_j$~(\cite{JouannaudKirchner91}). A set of equations~$S$ is said to be in {\em $F$-solved form\/}
if and only if it is in standard form and the subset of equations
$S \, \cap \, (V \times T(F, V))$ is in dag-solved form.
Note, a unification problem in dag solved form has a unique most general idempotent unifier 
(see~\cite{JouannaudKirchner91}).

An equation $l = r$ is called a \emph{subterm collapsing} equation iff one term is a proper subterm of the
other. An equational theory, $E$, is called \emph{simple} or \emph{subterm collapse free} if there is no equation 
in $E$ that is subterm collapsing: $t \neq_{E} s$ for all proper subterms $s$ of $t$. Note, an important property
of simple theories is: a variable $X$ and a term $t$ are not $E$-unifiable if $X \in Var(t)$ (see~\cite{Burckert89}).   

\begin{definition}\label{slp_def}
A straight-line program ($SLP$) is a context-free grammar, 
$G=(\Sigma , N, P)$.
Where $\Sigma$ is the set of terminal symbols (these will correspond to
a set of ``label'' variables in this paper),
$N$ is a set of nonterminal symbols and $P$ as set of grammar productions.
$P$ contains only two types of productions: $N_i \rightarrow a$ and $N_i \rightarrow N_j
N_k$ with $i > j,k$, where $N_i^{}$, $N_j^{}$, $N_k^{}$ are
nonterminals and $a$ is a terminal. The SLP generates exactly one
string corresponding to the top nonterminal.
\end{definition}

As an example consider the string $(ab)^{32}$
\[abababababababababababababababababababababababababababababababab\] 
over the set of terminals $\left\lbrace a, b \right\rbrace $.
A corresponding $SLP$ for this string is [ $N_1 \rightarrow a$, $N_2 \rightarrow b$, $N_3 \rightarrow N_1 N_2$,
$N_4 \rightarrow N_3 N_3$, $N_5 \rightarrow N_4 N_4$, $N_6 \rightarrow N_5 N_5$, $N_7 \rightarrow N_6 N_6$,
$N_8 \rightarrow N_7 N_7$ ].
The size of a $SLP$ can be defined in several ways. We use the following definition from
~\cite{levy04}. We note that the name $SLP$ is not used in~\cite{levy04}
and \cite{Levy08} rather they use the name Singleton Context-Free Grammars.
For any terminal, $a$, define the $depth(a) = 0$ and for any nonterminal $N$ 
\[depth(N) = max_{\{k=1,2\}}\left\lbrace 
depth(N_{k}) + 1 ~|~ N \rightarrow N_{1}N_{2} 
\right\rbrace \]
We can define the depth of the $SLP$ as the depth of its top nonterminal.
The \textit{size} of a $SLP$, $S$, is defined to be the number of productions and is denoted as $|S|$.
We denote the length of a string produced by a $SLP$ $S$ by $\|S\|$, $64$ in the above example. Note that the
lengths are represented in \emph{binary}.


\section{The Tid\'{e}n-Arnborg Algorithm and the Exponential Examples}\label{TA_Section}
Here we wish to very briefly 
review the Tid\'{e}n-Arnbog Algorithm~\cite{TidenArnborg87} 
and the exponential time result~\cite{narendran10}.
We can assume, without loss of generality, that the input
is given as a set of equations, where each equation is in one of
the following forms: \[ X =_{}^? Y, ~ X =_{}^? Y + Z, ~
\mathrm{and} ~ X =_{}^? Y \times Z \]
   
A simple decomposition algorithm can transform a set of equations into
the above form and maintain unifiability (see~\cite{TidenArnborg87}).

\subsection{The Tid\'{e}n-Arnborg Algorithm} In~\cite{TidenArnborg87} 
Tid\'{e}n and Arnborg developed an elegant  algorithm which is based
on the following results.
\begin{theorem}\label{TA_sound}
\textit{(Tid\'{e}n and Arnborg~\cite{TidenArnborg87})}\\
In the theory of one-sided distributivity:
 \begin{enumerate}
 \item The set of equations $\{ U =^{?} X_1 \circ X_2, ~U =^{?} T_1 \circ T_2 \}$
has precisely the same unifiers as the set of equations 
$\{ U =^{?} X_1 \circ X_2,  ~X_1 =^{?} T_1, ~X_2 =^{?} T_2\}$,
where $\circ$ is $\times$ or $+$.
\item  Every unifier for the set of equations\\ 
$\qquad \{ U =_{}^? V \times W, \; W =_{}^? W_1^{} + W_2^{}, \; X =_{}^? V \times W_1^{}, \; Y =_{}^? V \times W_2^{} \}$ \\ 
is a unifier for the set of equations
$\{ U =_{}^? V \times W, \; U =_{}^? X + Y \}$, where 
$W_1$ and $W_2$ are fresh variables. 
 \end{enumerate}
\end{theorem}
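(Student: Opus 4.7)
My plan is to treat the two items separately: (1) is a ``top-symbol stability'' property of the two binary operators, while (2) is a direct semantic calculation using the distributivity axiom.

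For (1), I would reduce to the lemma: for $\circ \in \{+, \times\}$, if $p \circ q =_E p' \circ q'$, then $p =_E p'$ and $q =_E q'$. Granting this, the ``$\Rightarrow$'' inclusion of unifier sets is immediate, since any $\theta$ satisfying $\theta(U) =_E \theta(X_1 \circ X_2)$ and $\theta(U) =_E \theta(T_1 \circ T_2)$ forces $\theta(X_i) =_E \theta(T_i)$ by the lemma; the ``$\Leftarrow$'' inclusion is immediate from substitutivity. To prove the lemma, I would orient the axiom as the rewrite rule $X \times (Y + Z) \to X \times Y + X \times Z$. Termination follows from a standard polynomial interpretation such as $[X \times Y] = [X](2[Y] + 1)$ together with $[X + Y] = [X] + [Y] + 1$. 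The only non-variable proper subterm of the left-hand side is $Y + Z$, which is $+$-rooted and cannot overlap with the $\times$-rooted left-hand side, so there are no critical pairs; hence the system is locally confluent and, by Newman's lemma, convergent. Each $=_E$-class therefore has a unique normal form. One then observes that $\mathrm{nf}(p + q) = \mathrm{nf}(p) + \mathrm{nf}(q)$, while $\mathrm{nf}(p \times q)$ is either $\mathrm{nf}(p) \times \mathrm{nf}(q)$ (when $\mathrm{nf}(q)$ is not $+$-rooted) or a binary $+$-tree whose leaves are exactly the terms $\mathrm{nf}(p) \times \ell_i$, where the $\ell_i$ enumerate the non-$+$ leaves of $\mathrm{nf}(q)$. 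From either description both factors are recoverable syntactically, which yields the lemma.

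For (2), a direct calculation suffices. Suppose $\theta$ unifies the four-equation set. Then
\begin{align*}
\theta(U) &=_E \theta(V) \times \theta(W) \;=_E\; \theta(V) \times \bigl(\theta(W_1) + \theta(W_2)\bigr) \\
          &=_E \theta(V) \times \theta(W_1) \,+\, \theta(V) \times \theta(W_2) \;=_E\; \theta(X) + \theta(Y),
\end{align*}
applying the equations $W =^? W_1 + W_2$, $X =^? V \times W_1$, and $Y =^? V \times W_2$, and the distributivity axiom in the middle step; combined with $\theta(U) =_E \theta(V) \times \theta(W)$, this shows $\theta$ also unifies $\{U =^? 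V \times W,\; U =^? X + Y\}$.

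The main obstacle, and essentially the only nontrivial part of the argument, is the top-symbol stability for $\circ = \times$: unlike $+$, the root $\times$ can be destroyed by distributivity, so a $\times$-rooted term can be $=_E$-equivalent to a $+$-rooted one. The normal-form analysis above handles exactly this case by showing that, even when the outer $\times$ collapses into a $+$-tree of products, the common left factor $\mathrm{nf}(p)$ and the sequence of leaves coming from $\mathrm{nf}(q)$ jointly determine $p$ and $q$ up to $=_E$, which is precisely what the lemma requires.
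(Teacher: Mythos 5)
Your proof is correct, but note that the paper does not actually prove this statement: Theorem~\ref{TA_sound} is stated with attribution and imported wholesale from Tid\'{e}n and Arnborg~\cite{TidenArnborg87}, so there is no in-paper argument to compare against. What you have written is a self-contained reconstruction along the standard lines, and it goes through: orienting the axiom as $X \times (Y+Z) \to X\times Y + X\times Z$ gives a convergent system (your interpretation works once variables are assigned integers $\geq 2$, since strict decrease reduces to $[X]>1$, and there are no critical pairs because the only non-variable proper subterm of the left-hand side is $+$-rooted), and root-cancellation for both operators then falls out of the shape of normal forms; part~(2) is exactly the intended one-line computation with the distributivity axiom. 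The only point I would ask you to make explicit is the recoverability step in the $\times$-case of your cancellation lemma: the \emph{sequence} of leaves $\mathrm{nf}(p)\times\ell_i$ alone does not determine $\mathrm{nf}(q)$ --- you also need the maximal top $+$-skeleton of $\mathrm{nf}(p\times q)$, which coincides with that of $\mathrm{nf}(q)$ because the leaves are $\times$-rooted and hence the skeleton is unambiguous; since the two normal forms are syntactically identical, their skeletons agree, the common left factor and the $\ell_i$ agree, and $\mathrm{nf}(q)=\mathrm{nf}(q')$ follows (with the case split on whether $\mathrm{nf}(q)$ is $+$-rooted forced to match on both sides by the root symbol of the normal form). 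Your route is also consistent with the paper's own use of this convergent system in the proof of Theorem~\ref{TA_cycle}.
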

 
\noindent The key steps in the algorithm
can be described by the deduction rules of Figure~\ref{TA_Inf_Rules}. 

\begin{figure}[ht]
  \centering
\scalebox{.95}{
\fbox{
\begin{small}
\begin{tabular}{lcc}
(a) & $\qquad$ & $\vcenter{
\infer[\qquad \mathrm{if} ~ U ~ \mathrm{occurs ~ in} ~ \eq ]{\{U =_{}^? V\} \cup \,  \{ U \mapsto V \}(\eq) }
      { \{U =_{}^? V\} ~ \uplus ~ \eq }
}
$\\[+20pt]
(b) & & $\vcenter{
\infer{\eq ~ \cup ~ \{ U =_{}^? V \times W, \; V =_{}^? X, \; W =_{}^? Y \}}
{\eq ~ \uplus ~ \{ U =_{}^? V \times W, \; U =_{}^? X \times Y \}}
}
$\\[+20pt]
(c) & & $\vcenter{
\infer{\eq ~ \cup ~ \{ U =_{}^? V + W, \; V =_{}^? X, \; W =_{}^? Y \}}
{\eq ~ \uplus ~ \{ U =_{}^? V + W, \; U =_{}^? X + Y \}}
}
$\\[+20pt]
(d) & & $\vcenter{
\infer{\eq ~ \cup ~ \{ U =_{}^? V \times W, \; W =_{}^? W_1^{} + W_2^{}, \; X =_{}^? V \times W_1^{}, \; Y =_{}^? V \times W_2^{} \}}
{\eq ~ \uplus ~ \{ U =_{}^? V \times W, \; U =_{}^? X + Y \}}
}
$
\end{tabular}
\end{small}
} }
 \caption{Tid\'{e}n and Arnborg Inference Rules}
  \label{TA_Inf_Rules}
 \end{figure} 

The $W_1, W_2$ in rule~(d) are fresh variables and $\uplus$ is
{\em disjoint union.\/} Furthermore, rule~(d) (the ``splitting
rule'') is applied only when the other rules cannot be applied. A
set of equations is said to be {\em abc-reduced\/}\footnote{Such a system
of equations is called \emph{simple} in ~\cite{TidenArnborg87}. However,
simple has come to denote a theory that is subterm collapse free 
(see~\cite{Burckert89}).}
 if and only if none
of the rules (a), (b) and (c) can be applied to it.  A {\em sum
transformation\/} is defined as a binary relation between two
abc-reduced systems, $S_1^{}$ and $S_2^{}$, where $S_2^{}$ is obtained
from $S_1^{}$ by applying rule~(d), followed by exhaustive
applications of rules (a), (b) and (c). Clearly, a sum
transformation is applicable if and only if some variable occurs
as the left-hand side in more than one equation.  

The algorithm also makes use of two graphs. The graphs are used to detect
two types on non-unifiability errors. We include the definitions here for completeness. 

\begin{definition}\label{orig_DG}
 The {\em dependency graph\/} ($D(S)$) of an abc-reduced system, $S$, is an 
edge labeled, directed multi-graph.
It has as vertices the variables of $S$. For an equation $X= Y + Z$ in $S$
it has an $l_{+}$-labeled edge $(X,Y)$ and an $r_{+}$-labeled edge $(X,Z)$. An equation
$X = Y \times Z$ similarly generates two edges with labels $l_{\times}$ and $r_{\times}$.
\end{definition}

\begin{definition}\label{orig_PL}
The {\em sum propagation graph\/} ($P(S)$) of an abc-reduced system $S$ is a
directed simple graph.  It has as vertices the equivalence classes of
the symmetric, reflexive, and transitive closure of the relation
defined by the $r_{\times}$-edges in the dependency graph of
$S$.  It has an edge $(V,W)$ iff there is an edge in the
dependency graph from a vertex in $V$ to a vertex in $W$ with label
$l_{+}$ or $r_{+}$.
\end{definition}

It can be seen that by using cycle checking on $D(S)$ we can detect 
all the occur-check like errors that may develop as the algorithm works
with the system of equations. We know these are indeed errors due to the following
property.

\begin{theorem}\label{TA_cycle}
The one-sided distributive axiom is subterm-collapse free.
\end{theorem}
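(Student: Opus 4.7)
My plan is to give a semantic argument: exhibit a model of the one-sided distributive axiom in which no term can equal any of its proper subterms. The natural candidate is the non-negative integers $\mathbb{N}$ with $+$ and $\times$ interpreted as ordinary addition and multiplication. Since $\mathbb{N}$ satisfies the full distributive law it in particular validates one-sided distributivity, so for any assignment $\sigma : \mathcal{X} \to \mathbb{N}$ the induced evaluation map from terms to $\mathbb{N}$ is preserved by the equational theory: if $u =_E v$ then $\sigma$ evaluates $u$ and $v$ to the same integer.

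The crucial choice is the assignment itself: I would send every variable to the constant $2$ and write $v(t)$ for the resulting integer. A short structural induction gives the auxiliary bound $v(u) \geq 2$ for every term $u$, since sums and products of values $\geq 2$ stay $\geq 4$. A second structural induction then shows $v(t) > v(s)$ whenever $s$ is a proper subterm of $t$: for an immediate child under $+$ the omitted summand contributes at least $2$, while under $\times$ the omitted factor is at least $2$, so the product is at least $2\,v(s) > v(s)$; for deeper subterms the inductive hypothesis inside the relevant child combines with monotonicity of $+$ and $\times$ on positive integers to propagate the strict inequality up to the root.

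Combining the two observations closes the proof by contradiction: if $s$ were a proper subterm of $t$ with $t =_E s$, then $v(t) = v(s)$ by preservation, contradicting the strict inequality just established. The one subtlety---and the only place where the argument could easily be botched---is the lower bound $2$ on variable values. If the constant $1$ were chosen instead, a factor of $1$ would not strictly grow a product and the $\times$-case would collapse; any constant $>1$, on the other hand, suffices. Beyond this, I foresee no real obstacle, since the strength of the full arithmetic model does the heavy lifting that the asymmetric (one-sided only) axiom would not by itself provide.
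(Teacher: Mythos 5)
Your proof is correct, and it takes a genuinely different route from the paper's. The paper argues syntactically: it orients the axiom into the convergent rewrite rule $X \times (Y + Z) \rightarrow X \times Y + X \times Z$, observes that this rule is non-size-reducing, and concludes that no term can be $E$-equal to a proper subterm of itself (since equal terms share a normal form, and normalizing a term containing $s$ as a proper subterm yields something strictly larger than the normal form of $s$). You instead give a semantic argument: evaluate in $\mathbb{N}$ with every variable sent to $2$, note that this interpretation validates the axiom and hence is invariant under $=_E$, and show by induction that the value of a term strictly exceeds the value of each proper subterm. Both are sound. Your version is entirely self-contained and elementary --- it needs no appeal to confluence or termination of the oriented rule, and your remark that the constant must exceed $1$ (so that multiplication is strictly inflationary) is exactly the right point of care; the argument also generalizes to any theory admitting a model in which the operations are strictly inflationary on some invariant subset. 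The paper's version is terser and fits the rewriting viewpoint used elsewhere in the paper (the same convergent system underlies the asymmetric unification section), but it leaves implicit the small lemma that a non-size-reducing convergent system cannot collapse a term onto a proper subterm. One minor caveat for your approach: it relies on the unification problem being elementary (no free function symbols), so that every symbol in a term actually receives an interpretation in $\mathbb{N}$; free constants could be handled by also assigning them the value $2$, which is consistent with the setting of the paper.
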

\begin{proof}
 If we consider the convergent system 
\[ X \times (Y + Z) \rightarrow X \times Y + X \times Z \]
we can see that the rule is non-size-reducing.
Therefore, we cannot reduce a term $t$ to a subterm of itself.  
\end{proof}

This implies that the system is simple (\cite{Burckert89}) and therefore occur-checks must be detected as they imply non-unifiability.

The propagation graph is needed to detect non-unifiable systems that
cause infinitely many applications of the splitting rule~(d).
An
example of this type of system is the following two
equations:
\begin{center}
 $Z =^? V_2 + V_3$,  $Z =^? V_1 \times V_3$.
\end{center}
These types of systems are shown not to have a finite unifier~\cite{TidenArnborg87}. However,  they will never produce a cycle in the dependency graph, 
thus the propagation graph is needed.

We can conclude this overview of the original algorithm with some of the 
results proven for it in~\cite{TidenArnborg87}:
\begin{theorem}\label{TA_Results}
\textit{From Tid\'{e}n and Arnborg~\cite{TidenArnborg87}:}
\begin{enumerate}
\item The algorithm formed by applying the \emph{sum transformation}
with the rules of Figure~\ref{TA_Inf_Rules} is sound, complete and terminating. 
\item If the system is not unifiable either the dependency graph
(Definition~\ref{orig_DG}) or the propagation graph 
(Definition~\ref{orig_PL}) will contain a cycle after a finite number of steps. 
\item If either the dependency or the propagation graph contain a cycle,
the initial system is not unifiable. 
\item The algorithm produces a final solved form, which provides a unique most general unifier for the initial system. 
\end{enumerate}
\end{theorem}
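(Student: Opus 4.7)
The plan is to handle the four parts of the theorem in sequence, since each builds on the preceding. For \emph{soundness}, I would argue rule-by-rule that any unifier of the derived system is a unifier of the original. Rules (a), (b), (c) preserve the unifier set exactly: rule (a) is variable elimination, and rules (b), (c) are instances of part~1 of Theorem~\ref{TA_sound} applied with $\circ = \times$ and $\circ = +$. For rule~(d), soundness is exactly part~2 of Theorem~\ref{TA_sound}. Composing these, every completed derivation ending in an $abc$-reduced dag-solved system produces a substitution that unifies the initial equations.

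For \emph{completeness}, I would use the same rules in the other direction. Again rules (a)--(c) are bidirectional, giving equality of unifier sets. The subtle case is rule~(d): given any unifier $\sigma$ of $\{U =^? V\times W,\; U =^? X + Y\}$, I would construct fresh-variable assignments $\sigma(W_1), \sigma(W_2)$ making $\sigma$ a unifier of the split system. Concretely, because $\sigma(U)$ is forced to normalize to a sum of $\times$-monomials, one can partition the monomials in $\sigma(U)$ according to which ones come from $\sigma(X)$ and $\sigma(Y)$, divide out the common left factor $\sigma(V)$, and define $\sigma(W_1), \sigma(W_2)$ to be the resulting ``quotient'' sums; this uses the subterm-collapse-freeness of Theorem~\ref{TA_cycle} to justify the division step.

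The main obstacle is \emph{termination}. Rules (a)--(c) strictly reduce a lexicographic measure (number of variables that occur more than once on a left-hand side, plus total size), so between two applications of rule~(d) only finitely many $abc$-steps occur. For rule~(d) itself, the plan is a double argument: each sum transformation either (i) enlarges the dependency graph $D(S)$ with a new vertex/edge, (ii) produces a cycle in $D(S)$ or $P(S)$ and halts, or (iii) strictly reduces a suitably chosen well-founded measure on $abc$-reduced systems. The invariant I would use is that, once the propagation and dependency graphs are acyclic and stable, the collection of equivalence classes under $r_\times$-closure stops growing; splitting then can only reduce depth of ``sum-over-product'' nesting, which is bounded by the original term structure. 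Making this measure precise for the $r_\times$-equivalence classes is the delicate point; I would follow the argument structure of~\cite{TidenArnborg87} here.

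For parts~2--4, I would argue as follows. For part~3 (cycles imply non-unifiability), a cycle in $D(S)$ is an occur-check in disguise, and Theorem~\ref{TA_cycle} together with the property of simple theories cited after Definition~\ref{slp_def} shows no such system is $E$-unifiable; a cycle in $P(S)$ means any unifier must satisfy $\sigma(Z)$ containing $\sigma(Z)$ as a proper subterm after $+/\times$-normalization, again precluded by subterm-collapse-freeness. For part~2, the contrapositive follows from termination together with the fact that a terminating run without cycles must end in a dag-solved system, which by standard results (\cite{JouannaudKirchner91}) has a unique most general idempotent unifier; if the input were non-unifiable this mgu would contradict soundness, so some cycle must have been produced. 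Part~4 is then immediate: the final system is in dag-solved form, so the unique idempotent mgu yields the claimed most general unifier of the initial problem.
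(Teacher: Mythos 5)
The paper does not actually prove this statement: Theorem~\ref{TA_Results} is imported wholesale from Tid\'{e}n and Arnborg~\cite{TidenArnborg87} and is presented in the survey Section~\ref{TA_Section} purely as background, so there is no in-paper proof to compare yours against. Judged on its own terms, your sketch reproduces the standard structure of the original argument, and the easy parts are handled correctly: soundness of (a)--(c) and (d) from the two parts of Theorem~\ref{TA_sound}; completeness of (d) by observing that $\sigma(U)=\sigma(V)\times\sigma(W)$ can only equal a sum if $\sigma(W)$ normalizes to a sum $w_1+w_2$, which supplies the witnesses for $W_1,W_2$ (your ``partition the monomials and divide out $\sigma(V)$'' is a more roundabout description of the same fact); part~3 via subterm-collapse-freeness (Theorem~\ref{TA_cycle}); and part~4 from the dag-solved-form lemma of~\cite{JouannaudKirchner91}.

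The genuine gap is in your treatment of termination and, consequently, part~2. You propose that each application of rule~(d) either adds to the dependency graph, creates a cycle, or decreases a well-founded measure, and you defer the measure to~\cite{TidenArnborg87}. But as the paper itself points out with the example $Z=^?V_2+V_3$, $Z=^?V_1\times V_3$, there are systems on which the splitting rule applies \emph{infinitely} often while the dependency graph never acquires a cycle; no well-founded measure on the rewriting alone can exist. The sum transformation is \emph{not} terminating by itself --- the algorithm terminates only because the propagation-graph cycle check is interleaved and cuts off exactly these runs. So the content of part~2 is not a corollary of termination (your argument for part~2 presupposes termination, which is circular here); rather, one must prove directly that every system admitting an unbounded sequence of sum transformations develops a propagation-graph cycle within finitely many steps, which is the substantive Lemma~11-style argument of~\cite{TidenArnborg87}. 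Relatedly, your justification of part~3 for propagation cycles (``$\sigma(Z)$ contains $\sigma(Z)$ as a proper subterm'') is not quite the right invariant; the correct statement is that a propagation cycle forces an infinite descending chain of sum decompositions, so no finite unifier exists. Since these are precisely the points your sketch delegates back to the cited paper, the proposal is best read as a correct outline of the easy parts plus a citation for the hard one, which is in effect what the paper itself does.
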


\subsection{Complexity Result}
In~\cite{narendran10} a family of \textit{unifiable}, \emph{abc-reduced} systems
is presented, on which the Tid\'{e}n-Arnborg
algorithm runs in exponential time. 

\begin{definition} \label{EQ} \cite{narendran10}
 Let $EQ$ be a subset of the set of abc-reduced systems defined as follows: all
 multiplications are of the form $X_{i} \ueq T \times Y_{j}$ (or
 $Y_{j}  \ueq  T \times X_{i}$) where $T$ is a unique variable and all
 additions are of the form $X_{i} \ueq X_{i1}+X_{i2}$ or
 $Y_{i} \ueq Y_{i1}+Y_{i2}$.
\end{definition}
That is variables are represented using $X$ and $Y$ along with subscripts.
The actual family of instances that causes the exponential growth is 
a subset of $EQ$ defined as: 
\begin{definition} \label{sigman} \cite{narendran10}
For $n \ge 0$, let $\sigma(n)$ be the set of equations
\begin{eqnarray*}
X_{1^{i}} &  \ueq  & X_{1^{i+1}}+X_{1^{i}2}, \\
Y_{2^{i}} &  \ueq  & Y_{2^{i}1}+Y_{2^{i+1}}, \\
Y_{2^{i}1} &  \ueq  & T \times X_{1^{i}2},\\
X &  \ueq  & T \times Y, \\
X_{1^{i+1}} &  \ueq  & X_{1^{i+2}}+X_{1^{i+1}2}
\end{eqnarray*}
for all $0 \le i \le n$. Where $X_{l^{i}}$ denotes $i$ concatenations of $l \in \{1,2\}$, i.e.,
$X_{1^{3}2} = X_{1112}$. 
\end{definition}
It is shown in~\cite{narendran10} that a system of equations, as defined in
Definition~\ref{sigman}, will result in exponentially many applications of
the sum transformation. 

The result can be viewed graphically in the following manner.  Let
variables represent nodes in a graph and create downward edges for
variables related by an addition operation and lateral edges for
variables related by a multiplication operation (this is essentially the $D(S)$ definition). 
The edges to the unique variable $T$ will not affect the complexity and so can be
ignored.  We can see such graphs in the following two examples.
Figure~\ref{fig1} represents an initial set of equations and
Figure~\ref{fig2} is the same system after application of the
Tid\'{e}n-Arnborg algorithm.  Essentially, we see that the new variables
and paths that are created at each level of the graph are the cause of
the complexity growth and will need to be avoided.

\begin{figure}[H]
  \centering
   \scalebox{.96}{
   \input{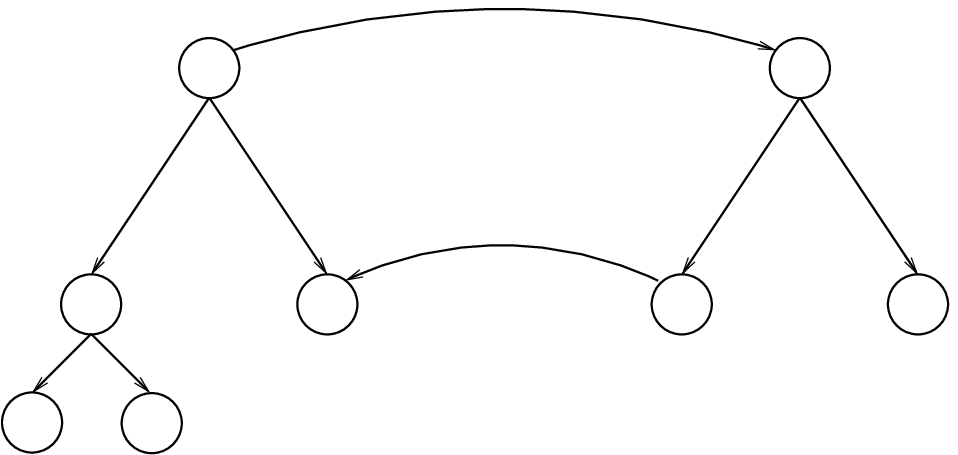} 
   }
  \caption{Graph for $\sigma(0)$} \label{fig1}
\end{figure}

\begin{figure}[H]
  \centering
   \scalebox{.96}{
   \input{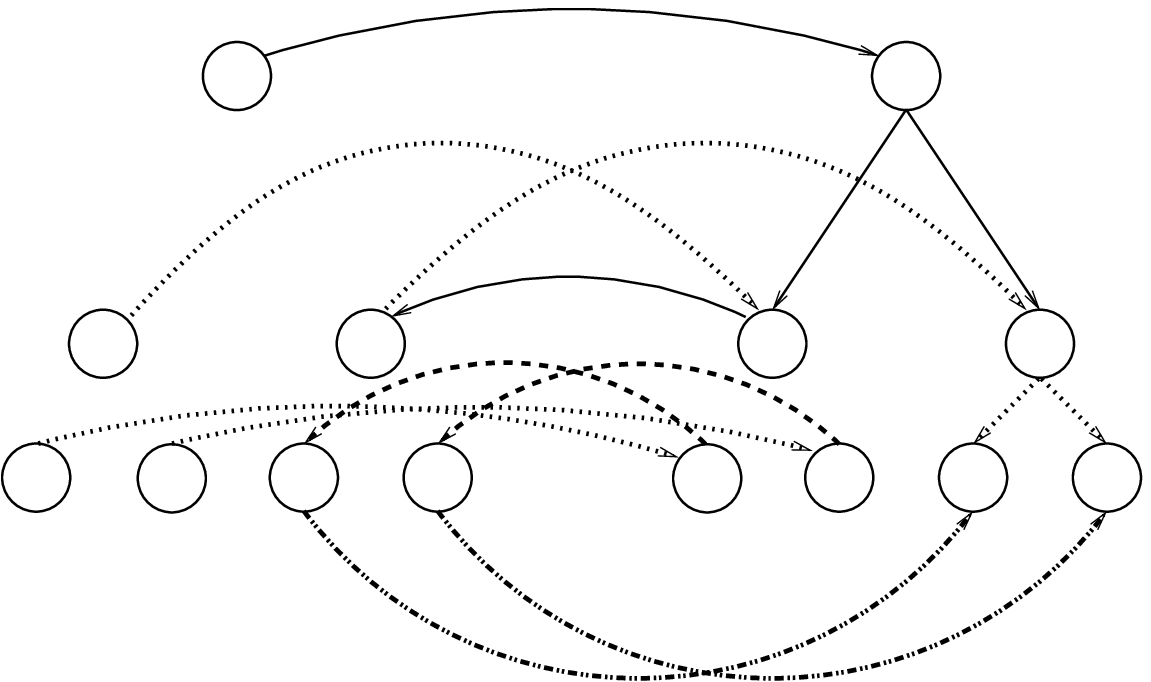}
   }
  \caption{After 4 applications of the sum transformation} \label{fig2}
\end{figure}

In the Tid\'{e}n-Arnborg algorithm this exponential behavior is due to
an exponential number of application of rule (d) from Figure~\ref{TA_Inf_Rules}. This is the rule that creates the new variables and
paths seen in Figure~\ref{fig2}. We develop a new algorithm in 
Section~\ref{single_hom} which ensures a polynomial number of application
of a rule equivalent to rule (d) from Figure~\ref{TA_Inf_Rules} and this
algorithm is sufficient to ensure polynomial time and solve the unification
problem for the Single Homomorphism (introduced in the next section) 
restricted form of the problem. However, when applied to the full
problem it proves insufficient, see example~\ref{string_length_ex}.
The solution is to introduce the use of string compression, which
is done in Section~\ref{general_section}.     


\section{Typed System and Single Homomorphism}\label{single_hom}
We present a typed system interpretation of one-sided distributive
unification.  We begin with the simplest non-trivial subcase, the case
of a \textit{single homomorphism}.  This is non-trivial because the
exponential complexity result in~\cite{narendran10} holds in this case
as well. Consider a `type' system based on two types $\tau_1$ and
$\tau_2$. We
let all left multiplication variables be of type $\tau_1$ and all right 
variables of type $\tau_2$. Thus \[
\begin{array}{lccc}
\times : & \tau_1 * \tau_2 &  \rightarrow  & \tau_2, \\
+: & \tau_2 * \tau_2 &  \rightarrow  & \tau_2, 
\end{array} \] 
If there is only a single variable of type $\tau_1$ in the input
equations then we can consider the multiplication operation as a
homomorphism $h$ over $+$. Thus, we can view an equation of the form
$X = T \times Y $, where $T$ is the single variable of type $\tau_1$, as
the homomorphism equation $X = h(Y)$.
This is the single homomorphism case, 
it restricts the number of valid terms from the general case but 
it is sufficient for encoding the exponential example
in~\cite{narendran10} and it yields a much simplified decision
algorithm.
\subsubsection{Single homorphism and the General Algorithm}
There are two primary reasons for considering this sub-case:
\begin{enumerate}
\item The Algorithm for the single homomorphism case (Algorithm~\ref{alg_A_2}) is more efficient than the algorithm 
that solves the general case (Algorithm~\ref{alg_A_1}).\\
This is due to the $SLPs$. In Algorithm~\ref{alg_A_1}, every step dealing with compression must use $SLPs$ and thus
must employ various subroutines for dealing with $SLPs$, of which the best
complexity measures are all of quadratic or greater polynomial complexity. 
However, in this restricted case binary encoding provides suitable compression. Operations dealing with compressed
objects are reduced to addition and subtraction, i.e., linear complexity. 
\item Algorithm~\ref{alg_A_1} is built from Algorithm~\ref{alg_A_2}.\\
Algorithm~\ref{alg_A_2} uses the same underlying method
used in Algorithm~\ref{alg_A_1}. 
Both algorithms approach the problem by ordering the
equivalence classes (defined below) and ``processing'' each class, one at a time. However, the processing is less complex in this restricted theory
since it does not need to deal with $SLPs$. 
This leads to a similar algorithm which is easier to
understand.
\end{enumerate}

\subsection{Data Structures}
\begin{definition}
We define the following relations ($X$, $Y$ and $Z$ are variables):

\begin{itemize}

\item  $X \succ_{h} Y$ if $X = h(Y)$.

\item  $X \succ_{l_+} Y$ if $X = Y + Z$.

\item  $X \succ_{r_+} Z$ if $X = Y + Z$. 

\item $X \succ_{a} Z$ if $X = Y + Z$ or $X = Z + Y$.
\end{itemize}
\end{definition}

We use the following two graphs, that are similar to the
dependency and propagation graphs used in~\cite{TidenArnborg87},
see Definitions~\ref{orig_DG} and~\ref{orig_PL}.
For a unification problem $S$ in standard form we construct the following
two graphs.
\begin{definition} \label{MD}
A {\em path labeled dependency graph\/} ($\mathcal{LD}$) is a
directed graph such that the nodes in the graph correspond to
variables of type $\tau_2$.  We form two kinds of edges:\\
(i) {\em Lateral} edges, where for each equation of the form $X =_{}^? h(Y)$, 
we have an edge from node $X$ to node $Y$ labeled with
a {\em label variable\/}, $h^{1}$. 
Thus, for single edges corresponding to a single homomorphism the label is $h^{1}$.
For paths corresponding to multiple homomorphisms
(compound paths which will be constructed during the running of the algorithm) 
the label is $h^{n}$, $n \in \mathbb{N}^{+}$, where
$n$ is the number of homomorphisms/single edges composing the path.
We will use $\pi$ (possible with subscripts) to denote a path in the graph.
For example, for a path between nodes $X$ and $Y$ we write,  
$X \; \stackrel{\pi}{\longrightarrow} \; Y$. Where $\pi$ is understood to represent
some $h^{j}$, $j \in \mathbb{N}$. 
The path length, denote $|\pi|$, is $j$. \\
(ii) {\em Downward} edges, where for
each equation of the form $X =_{}^? X_1^{} + X_2^{}$, we have
directed edges from node $X$ to node $X_1$ and from node $X$ to
node $X_2$.
\end{definition}

\begin{definition} \label{MP}
The {\em path labeled propagation graph} ($\mathcal{LP}$) is a
directed simple graph.  Its vertices are the equivalence classes
of the symmetric, reflexive, and transitive closure of the
relation defined by $\succ_{h}$ on the $\mathcal{LD}$ graph for
the same system.  Edges exist between equivalence classes
$\left[X \right]$ and $\left[ Y \right]$ if there exist
variables $U \in \left[ X \right] $ and $V \in \left[ Y \right] $
such that $U \succ_{a} V$. 
\end{definition} 

Note, that we can order the equivalence classes/nodes of $\mathcal{LP}$.
Let~$\sim_h^{}$ stand for the reflexive, symmetric and transitive
closure of~$\succ_{h}$. Thus $\sim_h^{}$ defines a set of equivalence
classes over a set of variables. Denote these classes as $\left[ Y \right]_h$.
The $\mathcal{LP}$ graph has exactly these classes as its nodes. 
We can define a strict partial ordering $\gtrdot_h$ on the $\sim_h^{}$-equivalence classes
based on $\succ_a^{}$. That is, $\left[ X \right]_h \gtrdot_h \left[ Y \right]_h$ 
if and only if there exist 
$ K_1 \in \left[ X \right]_h$ and $K_2 \in \left[ Y \right]_h$
such that $K_1 \succ_a^{} K_2$, i.e., an edge from the node $\left[ X \right]_h$ to the node $\left[ Y \right]_h$. 
This ordering  will be important as it provides an ordering strategy  for applying the unification algorithm. 

These graphs, mainly the $\mathcal{LD}$, 
will be the primary data structure and will be modified via the set of graph {\em saturation\/} rules. 
The rules are very similar to the original Tid\'{e}n-Arnborg rules however they primarily act not on the set of equations
but on the $\mathcal{LD}$ graph. This is due to the need for compression, where acting on a fully uncompressed
set of equations results in the original algorithm.   
Note, that we still need the $\mathcal{LP}$ graph for detecting the set of non-unifiable systems. 
An example of this is the following set of equations.
\[ \{
X =_{}^? V + Y, \;
X =_{}^? h(Y) \} \]

The $\mathcal{LP}$ graph and the sum propagation graph of~\cite{TidenArnborg87} 
(Definition~\ref{orig_PL}) are the same for the single homomorphism systems.  
This is easy
to see as both graphs will contain the same equivalence classes and thus 
nodes and both graphs have the same edges. Therefore, each time the algorithm 
updates the $\mathcal{LD}$ graph (i.e., the inference rules modify the $\mathcal{LD}$ graph) 
it also updates the $\mathcal{LP}$ graph 
and checks for cycles. Likewise, if cycles are found the algorithm terminates with failure.  
\subsection{Algorithm Presentation}
Before presenting the  rules, we need to discuss several problems 
the algorithm needs to solve when dealing with compressed paths. 
During saturation we derive {\em path constraints\/} of 
the form
$\pi_1^{} =_{}^? \pi_2^{}$ or $\pi_1^{} {\prec}_{}^? \pi_2^{}$. For the single
homomorphism case, because there is just one homomorphism, $\pi_1^{} =_{}^? \pi_2^{}$
is simply a check if the lengths are equal, i.e., if $| \pi_1 | = | \pi_2 |$. 
For
the prefix check $\pi_1^{} {\prec}_{}^? \pi_2^{}$,
in the single homomorphism case we only need to check if the length
of $\pi_1$ is less then $\pi_2$, i.e., $| \pi_1 | < | \pi_2 | $. 
It is important to note that path lengths are kept in binary
representation.  This compression is significant as it allows us to
avoid exponential growth in the path lengths. In addition to path
constraints we will need to perform several {\em path computations\/},
specifically we need to {\em concatenate paths\/} and 
{\em compute path suffixes\/}. These operations can be accomplished, in the
single homomorphism case, by simple addition and subtraction.

We now introduce a set of inference rules.
\emph{Rule $(0)$ acts on the system $S$ and rules $(i)$ through $(vii)$
act on the $\mathcal{LD}$ graph of $S$}.
Rule $(0)$ is simple variable replacement.
Rules $(i)$ - $(iii)$ are cancellation rules that follow 
directly from the rules of Figure~\ref{TA_Inf_Rules}.
Rule $(vi)$ is a failure rule 
that corresponds to occur-check type errors. 
Rules $(iv)$, $(v)$, $(vii)$ are path completion rules. 
Rule $(vii)$ is the same path propagation rule 
from the Tid\'{e}n-Arnborg algorithm, 
justified by the axioms of the system; 
see Figure~\ref{TA_Inf_Rules} rule(d).
\emph{However, in rule $(vii)$ we do not create 
the new variables $W_1$ and $W_2$
unless $W$ has no child variables related along a $\succ_a$ edge}.

\begin{figure}[ht]
  \centering
\scalebox{.82}{
\fbox{
\begin{tabular}{lcc}
$(0)$ & $\qquad$ & $\vcenter{
\infer[\qquad \mathrm{if} ~ U ~ \mathrm{occurs ~ in} ~ S ]{ \{ U \mapsto V \}(S) \cup 
\, \{U =_{}^? V\} }
      { S ~ \uplus ~ \{U =_{}^? V\} }
}
$\\[+20pt]
$(i)$ & & $\vcenter{
\infer{U=_{}^? U_1 + U_2 , \; U_3 = U_1 , \; U_4 = U_2}
{ U=_{}^? U_1 + U_2 , \; U =_{}^? U_3 + U_4, \; } }
$\\ [+20pt] 
$(ii)$ & $\qquad$ & $\vcenter{
\infer{X \, {=}_{}^? \, Z, \; \qquad Z \; \stackrel{\pi}{\longrightarrow} \; Y}
{X \; \stackrel{\eta}{\longrightarrow} \; Y, \; \qquad Z \; \stackrel{\pi}{\longrightarrow} \; Y \; \qquad | \eta | \, = \, | \pi |}
}
$\\[+20pt]
$(iii)$ & $\qquad$ & $\vcenter{
\infer{X \; \stackrel{\pi}{\longrightarrow} \; Z, \; \qquad Y \, {=}_{}^? \, Z}
{X \; \stackrel{\eta}{\longrightarrow} \; Y, \; \qquad X \; \stackrel{\pi}{\longrightarrow} \; Z, \; \qquad | \eta | \, = \, | \pi |}
}
$\\[+20pt]
$(iv)$ & $\qquad$ & $\vcenter{
\infer{Y \; \stackrel{h^{i-j}}{\longrightarrow} \; Z, \; \qquad X \; \stackrel{h^{i}}{\longrightarrow} \; Z}
{X \; \stackrel{h^{j}}{\longrightarrow} \; Y, \; \qquad X \; \stackrel{h^{i}}{\longrightarrow} \; Z, \; \qquad j \, < \, i}
}
$\\[+20pt]
$(v)$ & $\qquad$ & $\vcenter{
\infer{X \; \stackrel{h^{i+j}}{\longrightarrow} \; Z, \; \qquad Y \; \stackrel{h^{j}}{\longrightarrow} \; Z}
{X \; \stackrel{h^{i}}{\longrightarrow} \; Y, \; Y \stackrel{h^{j}}{\longrightarrow} \; Z }
}
$\\[+20pt]
$(vi)$ & $\qquad$ & $\vcenter{
\infer[\qquad \mathrm{if} ~ \mathcal{LP} ~\mathrm{or} ~ \mathcal{LD} ~\mathrm{are ~cyclic} ]{FAIL}
{S}
}
$\\[+20pt]
$(vii)$ & & $\vcenter{
\infer{U \; \stackrel{\eta}{\longrightarrow} \; \; W, \; 
W =_{}^? W_1^{} + W_2^{}, \; U_1 \; \stackrel{\eta}{\longrightarrow} \; W_1^{},
 \; U_2 \; \stackrel{\eta}{\longrightarrow} \; W_2^{} }
{ U \; \stackrel{\eta}{\longrightarrow} \; \; W, \; U =_{}^? U_1 + U_2, \; } }
$\\ 
\end{tabular}
} }
 \caption{Inference Rules for the Single Homomorphism Problem.}
  \label{Typed_Inf_Rules}
\end{figure}

Before giving the algorithm details let us give a high-level
overview of the process.
\begin{enumerate}
 \item The algorithm begins with a unification problem, $S$, in standard form.
\item From the set of equations $S$ it generates the
$\mathcal{LD}$ graph and from the $\mathcal{LD}$ graph it generates the $\mathcal{LP}$ graph.
\item Next the algorithm applies the set of ``cancellation'' inference rules. These are rules which do not create new edges
or nodes in the graph and clearly terminate. 
\item The algorithm works in a top down ordering on the equivalence classes, using the relation $\gtrdot_h$
to order the classes. Each class is ``processed'' using the inference rules. This is done by applying the rules
to the nodes in $\mathcal{LD}$ graph which are contained in the current ``selected'' class.
\item After each new class is processed  the algorithm applies the cancellation rules and re-checks for any errors.
\item During this process two things can happen:
\begin{enumerate}
 \item Cycles can be found in either graph implying non-unifiability.
 \item The inference rules are exhaustively applied and no cycles occur, implying unifiability.  
\end{enumerate}
\end{enumerate}

\noindent The algorithm for the single homomorphism subcase is presented in Algorithm~\ref{alg_A_2}. 
 
\begin{algorithm}
\caption{Unification modulo a Single Homomorphism}
\label{alg_A_2}
\begin{algorithmic}
\vspace{0.1in}
\STATE (Input: A system of equations in standard form)
\STATE \textbf{(1: Generate data structures)} Generate the graphs, $\mathcal{LD}$ and $\mathcal{LP}$. 

\STATE \textbf{(2: Clean up the system)} Exhaustively apply the rules
$(0)$, $(i)$, $(ii)$, $(iii)$ and $(iv)$. 

\STATE \textbf{(3: Error checking)} 
Apply graph cycle checking to the two graphs (i.e., rule $(vi)$). 
If a cycle is found \textit{stop} with failure. 

\STATE \textbf{(4: Process equivalence class)}
Select an equivalence class based on the strict partial ordering $\gtrdot_h$.
That is, we select the largest element of $\gtrdot_h$ that
has not yet been processed. Thus, if we select the class
$\left[ X \right]_h$ then there does not exist a class $\left[ Y \right]_h$
such that $\left[ Y \right]_h$ has not been processed and 
$\left[ Y \right]_h$~$\gtrdot_h^{+}$~$\left[ X \right]_h$.
Clearly, if $\gtrdot_h$ is not a
strict partial ordering then there is a cycle in the $\mathcal{LP}$ graph.\\[+5pt]

First we apply rule $(v)$ --- this is done by starting with the
sink node of the path and working back to the start node of the
path.  Once rule~$(v)$ has been exhaustively applied we apply
rule~$(vii)$ if applicable.

\STATE \textbf{(5: Check if Complete)}
If no inference rules can be applied and no cycles exist,
then exit with success, else return to Step~2.
\vspace{0.1in}
\end{algorithmic}
\end{algorithm}

We next discuss the
correctness and complexity of Algorithm~\ref{alg_A_2}; most of these
results will follow directly from~\cite{TidenArnborg87}. 
 
\subsection{Correctness}
Correctness of the inference rules can be assured due to the 
correctness proof of the algorithm presented in~\cite{TidenArnborg87}
and the following lemmas.
\begin{lemma} \label{sh_sound}
 Soundness of rules $(i)$ through $(vii)$ are direct consequences of the 
``sum transformation''\footnote{See Section~\ref{TA_Section} for the definition of sum transformation. } 
method of ~\cite{TidenArnborg87} and variable replacement. 
\end{lemma}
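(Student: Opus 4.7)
The overall strategy is to treat each inference rule of Figure~\ref{Typed_Inf_Rules} individually and exhibit it either as a direct instance of one of the Tid\'{e}n--Arnborg rules in Figure~\ref{TA_Inf_Rules}, or as a short derivation in the Tid\'{e}n--Arnborg calculus applied to the ``uncompressed'' form of the configuration. The key translation step is that a labelled edge $X \;\stackrel{h^n}{\longrightarrow}\; Y$ in the $\mathcal{LD}$ graph is merely a compressed representation of the chain of standard equations $X =_{}^? h(X_1),\; X_1 =_{}^? h(X_2),\; \ldots,\; X_{n-1} =_{}^? h(Y)$, and likewise $h(Z)$ abbreviates $T \times Z$ for the unique variable $T$ of type $\tau_1$. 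Under this translation, rules from Figure~\ref{Typed_Inf_Rules} correspond to derivations in Figure~\ref{TA_Inf_Rules} plus straightforward variable replacement.

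First I handle the trivial rules. Rule~$(0)$ is literally TA rule~(a). Rule~$(i)$ is literally TA rule~(c) (the additive cancellation). Rule~$(vi)$ is sound by Theorem~\ref{TA_cycle} and the results of Theorem~\ref{TA_Results}: a cycle in the dependency graph would correspond to a subterm-collapse, and a cycle in the propagation graph to an infinite splitting chain with no finite unifier; either case forces non-unifiability.

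Next I handle the path manipulation rules $(ii)$--$(v)$. Unpacking each compressed edge into its underlying chain of $h$-equations, rules $(ii)$ and $(iii)$ become repeated applications of TA rule~(b): at each step, two equations $W =_{}^? T \times Y'$ and $W =_{}^? T \times Z'$ force $Y' =_{}^? Z'$, after which rule~$(0)$ replaces one variable by the other; iterating along the two chains of equal length yields the stated conclusion. Rules $(iv)$ and $(v)$ are purely structural: if the underlying chains for $X \;\stackrel{h^i}{\longrightarrow}\; Z$ and $X \;\stackrel{h^j}{\longrightarrow}\; Y$ (with $j<i$) share a common prefix up to $Y$, then the residual chain from $Y$ to $Z$ is a valid sequence of homomorphism equations in the system, which is exactly what rule~$(iv)$ records; the additive case for rule~$(v)$ is obtained by concatenating the two chains and observing that the compressed edge $X \;\stackrel{h^{i+j}}{\longrightarrow}\; Z$ denotes precisely that concatenation.

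The main obstacle is rule~$(vii)$, both because of the path-propagation aspect and because of the optimisation that forbids creating fresh $W_1, W_2$ when $W$ already has children along $\succ_a$. For the first aspect, unpacking the edge $U \;\stackrel{\eta}{\longrightarrow}\; W$ into a chain of multiplication equations $U =_{}^? T \times U', \; U' =_{}^? T \times U'', \; \ldots$ and iteratively applying TA rule~(d) down the chain produces exactly a pair of parallel chains from $U_1$ and $U_2$ down to fresh copies of $W_1, W_2$; recompressing these yields the edges $U_1 \;\stackrel{\eta}{\longrightarrow}\; W_1$ and $U_2 \;\stackrel{\eta}{\longrightarrow}\; W_2$ together with $W =_{}^? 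W_1 + W_2$. For the optimisation, suppose $W$ already has children $W_1', W_2'$ with $W =_{}^? W_1' + W_2'$. Then applying TA rule~(d) introduces \emph{new} fresh $W_1, W_2$ with $W =_{}^? W_1 + W_2$, which together with the existing equation triggers TA rule~(c) to force $W_1 =_{}^? W_1'$ and $W_2 =_{}^? W_2'$; one round of rule~$(0)$ then replaces the fresh names by $W_1', W_2'$, justifying the reuse. Since every step of this reconstruction is a sum transformation in the sense of Section~\ref{TA_Section}, soundness of each rule in Figure~\ref{Typed_Inf_Rules} follows.
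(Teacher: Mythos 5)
Your proof is correct and follows essentially the same route as the paper's: both reduce soundness to Theorem~\ref{TA_sound} (the Tid\'{e}n--Arnborg transformations) together with variable replacement, with your version spelling out the rule-by-rule decompression of $\mathcal{LD}$-edges into chains of standard equations that the paper's one-line proof leaves implicit. One cosmetic point: in rule $(ii)$ the two chains share their \emph{target}, so the conclusion $X =^{?} Z$ comes from both variables being equated to the identical term $h^{n}(Y)$ (congruence plus variable replacement) rather than from cancellation via rule~(b), which applies to the shared-source case of rule $(iii)$; this does not affect soundness.
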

\begin{proof}
The soundness of the rules follow from Theorem~\ref{TA_sound}.
Therefore, we know that the set of equations
$\{ X=^{?} Y \circ Z, ~X=^{?} V \circ W \}$, where $\circ$ is $+$ or $\times$, has the same solutions
as the set $\{ X =^{?} Y \circ Z, ~Y =^{?} V, ~Z=^{?} W \}$, that the set of equation 
$\{ X=^{?} Y \times Z, ~X=^{?} V + W \}$ has the same solutions, over the shared variables, as
the set $\{ X =^{?} Y \times Z, ~Z=^{?} V_1 + V_1,  ~W =^{?} Y \times V_1, ~V=^{?} Y \times V_2 \}$.
\end{proof}
The $\mathcal{LD}$, and $\mathcal{LP}$, graphs are
simply graphical representations of a system of equations, which 
Algorithm~\ref{alg_A_2} transforms by application of one or more of the inference rules. 
Lemma~\ref{sh_sound} ensures that each transformation is sound. 
It remains to be shown that if the algorithm terminates without 
failure then the system is indeed unifiable.

\begin{lemma}\label{dagsolved2}
Given a system of equations $S$ in standard form  
if no failure errors occur Algorithm~\ref{alg_A_2} transforms $S$, 
through its $\mathcal{LD}$ graph representation, into dag-solved form.  
\end{lemma}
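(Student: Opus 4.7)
The plan is to show that when Algorithm~\ref{alg_A_2} halts without failure, the system carried by the final $\mathcal{LD}$ graph has the two properties characterising dag-solved form: each variable occurs at most once as a left-hand side, and the LHS-to-RHS dependency relation is acyclic.

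Acyclicity comes essentially for free from termination without invoking rule~$(vi)$: the final $\mathcal{LD}$ graph is a DAG, so any topological sort of its nodes produces a listing $X_1, \ldots, X_n$ in which every $\mathcal{LD}$-edge $X_i \to Y$, and hence every LHS-to-RHS dependency, goes from an earlier to a later variable. This is exactly the positional condition $X_i \notin Var(t_j)$ for $i \leq j$ required for dag-solved form.

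Uniqueness of each LHS is verified rule by rule from the inapplicability of the saturation rules at termination. Rule~$(0)$ being inapplicable forces every remaining $U =^? V$ equation to be ``solved,'' with $U$ absent elsewhere in $S$. Rule~$(i)$ being inapplicable excludes two addition equations with the same LHS. Rules~$(ii)$, $(iii)$ and $(iv)$ being inapplicable, combined with rule~$(v)$ having been exhaustively applied, leave each node with at most one outgoing lateral edge and merge any pair of equal-length edges that would otherwise split a common endpoint. Most importantly, rule~$(vii)$ being inapplicable forbids a variable from simultaneously carrying an outgoing lateral edge and an outgoing addition equation. Together these facts yield a unique defining equation per variable, which combined with acyclicity delivers the desired dag-solved form.

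The main obstacle I expect is the last point, since rule~$(vii)$ fires only when its target $W$ has no $\succ_a$ child --- a restriction essential for termination but awkward for the uniqueness argument. Here the top-down processing of $\gtrdot_h$-ordered equivalence classes in step~4 plays the decisive role: by always processing the $\gtrdot_h$-largest remaining class first, one maintains the invariant that every addition living strictly below $[U]_h$ has already been pushed downward along the relevant lateral paths before $[U]_h$ is handled, so the guard of rule~$(vii)$ fires on exactly the variables still requiring splitting. Making this invariant precise, and verifying that it is preserved across iterations of steps~2--5 until no rule applies, is the crux of the proof.
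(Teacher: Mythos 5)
Your overall strategy is the same as the paper's (which is considerably terser): acyclicity of the final $\mathcal{LD}$ graph gives the ordering condition of dag-solved form, and inapplicability of the cancellation and propagation rules at termination gives distinctness of left-hand sides. Both halves of that argument are fine. The one place you go astray is the ``crux'' you flag at the end: you have misread the side condition on rule~$(vii)$. The paper's restriction is that the \emph{fresh variables} $W_1, W_2$ are not created unless $W$ has no $\succ_a$-children --- if $W$ already has children, the rule fires anyway and simply reuses them. It is not a guard on the applicability of the rule. Consequently, at termination rule~$(vii)$ really is inapplicable to every variable, which directly rules out a variable carrying both an outgoing lateral edge and an addition equation; no invariant about the $\gtrdot_h$-processing order is needed for the correctness claim (that order matters for the complexity analysis, not here). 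With that correction your proof closes, and it coincides with the paper's two-bullet argument, just spelled out rule by rule.
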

\begin{proof}
Let $D$ be the final $\mathcal{LD}$ graph and consider 
the definition of dag-solved form. 
\begin{itemize}
\item The first condition is satisfied as each variable is
represented by a node in the graph. If the left hand sides~$X_i$
were not distinct, then a cancellation or path propagation rule, 
$(vii)$, could be applied. 
\item The second condition is satisfied as the paths correspond
to a distinct ordering and there are no cycles in the graph. \qedhere
\end{itemize}
\end{proof}

\noindent Therefore,  if the system is unifiable the algorithm
will report that fact. We need to show that if the system is not
unifiable the algorithm correctly reports that as a failure.
Directly from~\cite{TidenArnborg87} we get the following two results.

\begin{lemma} \label{LD_cycle}
Cycles in the $\mathcal{LD}$ graph for a system 
$S$ in standard form imply that $S$ is not unifiable.
\end{lemma}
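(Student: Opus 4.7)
The plan is to show that any cycle in the $\mathcal{LD}$ graph forces the system $S$ to require that some variable be unified with a term that contains a $=_E$-equivalent of itself as a proper subterm, which is impossible since one-sided distributivity is subterm-collapse free (Theorem~\ref{TA_cycle}).

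First I would fix a cycle $X_0 \to X_1 \to \cdots \to X_{n-1} \to X_n = X_0$ in $\mathcal{LD}$ and read off what each edge contributes. A lateral edge $X_i \stackrel{h^{k_i}}{\longrightarrow} X_{i+1}$ arises (inductively, via the saturation rules $(iv)$, $(v)$, $(vii)$) from a chain of equations in (the extension of) $S$ that forces $X_i =_E h^{k_i}(X_{i+1})$, i.e., $X_i$ equals a term built by iterated application of the homomorphism (left-multiplication by the single type-$\tau_1$ variable) to $X_{i+1}$. A downward edge $X_i \to X_{i+1}$ comes from an equation $X_i =^? X_{i+1} + Z$ or $X_i =^? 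Z + X_{i+1}$ in $S$, so $X_i =_E X_{i+1} + Z$ (or $Z + X_{i+1}$) for some term $Z$. Hence every edge of the cycle expresses that $X_i$ is $=_E$-equal to a term having $X_{i+1}$ as a proper subterm.

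The next step is to compose along the cycle. Any $E$-unifier $\sigma$ of $S$ would then satisfy $\sigma(X_0) =_E t[\sigma(X_0)]_p$ where $p$ is a nonempty position obtained by concatenating the positions contributed by each successive edge (a $+$-edge contributes position $1$ or $2$, an $h^{k_i}$-edge contributes $k_i$ positions inside the multiplication argument). Since at least one edge is present, $p \neq \epsilon$, so $\sigma(X_0)$ is $=_E$-equal to a term that contains $\sigma(X_0)$ at a proper position. By Theorem~\ref{TA_cycle} the theory is subterm-collapse free, so such an equality is impossible (equivalently, the occur-check style argument for simple theories referenced after Definition~\ref{slp_def} applies). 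Therefore no $E$-unifier of $S$ exists.

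The only real obstacle is verifying that the ``virtual'' lateral edges $X \stackrel{h^{k}}{\longrightarrow} Y$ produced by the saturation rules truly reflect a derived equality $X =_E h^{k}(Y)$, rather than just a bookkeeping annotation. I would handle this by a short induction on the number of rule applications used to create the edge: the base case $k=1$ is immediate from the defining equation $X =^? h(Y)$; rule $(iv)$ yields $Y =_E h^{i-j}(Z)$ from $X =_E h^{j}(Y)$ and $X =_E h^{i}(Z)$ and injectivity of the outer homomorphism symbols; rule $(v)$ is simple composition; and rule $(vii)$ preserves the invariant by Theorem~\ref{TA_sound}(2), since the fresh $W_1,W_2$ inherit the relevant lateral paths from $U_1,U_2$. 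Once this invariant is established, the cycle argument above goes through verbatim.
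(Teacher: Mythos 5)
Your proposal is correct and follows essentially the same route as the paper: a cycle in $\mathcal{LD}$ forces any unifier to make some variable $=_E$-equal to a term containing it at a proper position, contradicting subterm-collapse freeness (Theorem~\ref{TA_cycle}). The extra induction you perform to justify that composite lateral edges $X \stackrel{h^k}{\longrightarrow} Y$ genuinely encode $X =_E h^k(Y)$ is a worthwhile elaboration, but in the paper it is already discharged separately by the soundness of the saturation rules (Lemma~\ref{sh_sound}), so the paper's own proof can afford to be terser.
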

\begin{proof}
This is due to Theorem~\ref{TA_cycle},
which shows that the one-sided distributive axiom is 
subterm-collapse free. The constraint to a typed system does not remove the property that the system is simple. Therefore, a cycle in the
$\mathcal{LD}$ graph will imply a cycle in the system of equations
and a non-unifiablity error for a simple system. 
\end{proof}

\begin{lemma} \label{LP_cycle}
Cycles in the $\mathcal{LP}$ graph for a system 
$S$, in standard form, imply that $S$ is not unifiable.
\end{lemma}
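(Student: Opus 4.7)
The plan is to argue by contradiction using a size measure on normal forms. Suppose toward contradiction that $\sigma$ is a unifier of $S$ and that $\mathcal{LP}$ contains a cycle $[X_1]_h \gtrdot_h [X_2]_h \gtrdot_h \cdots \gtrdot_h [X_k]_h \gtrdot_h [X_1]_h$. For the convergent rewrite system $X \times (Y + Z) \to X \times Y + X \times Z$, which is subterm-collapse free by Theorem~\ref{TA_cycle}, let $d(t)$ denote the number of occurrences of $+$ in the normal form of $t$. I will derive $d(\sigma(X_1)) > d(\sigma(X_1))$, the desired contradiction.

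The first step is to establish two invariants on $d \circ \sigma$. Invariant~(A): if $U \sim_h V$ then $d(\sigma(U)) = d(\sigma(V))$. By induction on the length of a $\succ_h$-chain connecting $U$ and $V$ it suffices to treat a single step $U = h(V)$, i.e.\ $\sigma(U) = T \times \sigma(V)$ for the single type-$\tau_1$ variable $T$. Because in the single-homomorphism typed theory every type-$\tau_1$ term is a variable, normalizing $T \times \sigma(V)$ amounts to distributing $T$ across every top-level $+$ of the normal form of $\sigma(V)$, producing a normal form of $\sigma(U)$ with exactly the same number of $+$ symbols. Invariant~(B): if $U \succ_a V$, say $U = V + W$, then $\sigma(U) = \sigma(V) + \sigma(W)$; the top $+$ is never a redex, so the normal form of $\sigma(U)$ is the $+$-combination of the normal forms of $\sigma(V)$ and $\sigma(W)$, and hence $d(\sigma(U)) = d(\sigma(V)) + d(\sigma(W)) + 1 > d(\sigma(V))$.

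By the definition of $\mathcal{LP}$, each edge in the cycle is witnessed by some $U_i \in [X_i]_h$ and $V_i \in [X_{i+1}]_h$ (indices taken mod $k$) with $U_i \succ_a V_i$. Applying Invariant~(A) to $X_i \sim_h U_i$ and to $V_i \sim_h X_{i+1}$, together with Invariant~(B) applied to $U_i \succ_a V_i$, yields
\[ d(\sigma(X_i)) \;=\; d(\sigma(U_i)) \;>\; d(\sigma(V_i)) \;=\; d(\sigma(X_{i+1})) \]
for every $i$. Chaining these inequalities around the cycle gives $d(\sigma(X_1)) > d(\sigma(X_1))$, a contradiction, so no such unifier can exist.

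The main obstacle I anticipate is making Invariant~(A) fully rigorous; one must verify that normalizing $T \times \sigma(V)$ really just distributes $T$ outward without creating or destroying $+$ symbols. This depends on two features of the typed setting: that no $\times$ can appear above a $+$ in a normal form, and that every type-$\tau_1$ term is a variable, so the distribution terminates precisely when all top-level $+$s of the normal form of $\sigma(V)$ have been passed through, and at that point no further rewrites (neither above the new $+$s nor below the copies of $T$) are possible. Once (A) and (B) are in place, the cycle argument is purely combinatorial.
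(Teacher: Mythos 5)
Your proof is correct, but it takes a genuinely different route from the paper. The paper's own argument is essentially a one-line reduction: it observes that for single-homomorphism systems the $\mathcal{LP}$ graph coincides (same equivalence classes, same edges) with the sum propagation graph of Tid\'{e}n and Arnborg, and then invokes their result (Theorem~\ref{TA_Results}) that a cycle in that graph implies non-unifiability. You instead give a direct, self-contained argument via a strictly decreasing measure: $d(\sigma(\cdot))$, the number of $+$ occurrences in the normal form, is invariant across a $\sim_h$-class (since distributing the single $\tau_1$-variable image of $T$ over a normal form neither creates nor destroys $+$ symbols --- this uses exactly the two typed-setting facts you flag, namely that $\tau_1$-terms are variables and that no $+$ occurs below a $\times$ in a normal form) and strictly decreases along every $\succ_a$ edge, so a cycle forces $d(\sigma(X_1)) > d(\sigma(X_1))$. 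Your argument correctly covers the self-loop case $k=1$ (e.g.\ $\{X =^? V+Y,\ X =^? h(Y)\}$), which is the paradigmatic example the $\mathcal{LP}$ graph exists to catch. What each approach buys: the paper's proof is shorter and reuses established machinery, but it silently depends on the claimed identification of $\mathcal{LP}$ with $P(S)$ and on the (nontrivial) Tid\'{e}n--Arnborg analysis of systems requiring infinite unifiers; yours is elementary, independent of that machinery, and makes the *reason* for non-unifiability transparent. Note, however, that your counting argument exploits the typed restriction (duplication of the left factor cannot introduce new $+$ symbols because it is a variable), so it does not transfer verbatim to the general untyped theory, whereas the propagation-graph argument of Tid\'{e}n--Arnborg does. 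One cosmetic point: you should write $\sigma(U) =_E \sigma(T)\times\sigma(V)$ rather than an equality with the literal variable $T$, but since $\sigma(T)$ is itself a $\tau_1$-typed term and hence a variable, this does not affect the argument.
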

\begin{proof}
The $\mathcal{LP}$ graph contains the same information, 
for the single homomorphism systems,  as is
contained in the propagation graph of~\cite{TidenArnborg87} 
(Definition~\ref{orig_PL}). Both graphs will contain 
the same equivalence classes and thus nodes and both graphs 
have the same edges. The result then follows from 
Theorem~\ref{TA_Results}.  
\end{proof}

\begin{theorem}
Algorithm~\ref{alg_A_2} is correct.
\end{theorem}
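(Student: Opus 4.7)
The plan is to assemble correctness from three components using the lemmas already in place: (i) soundness of each inference rule, (ii) correctness of the failure detector, and (iii) termination. The first two reduce almost entirely to earlier lemmas.

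For the ``success $\Rightarrow$ unifiable'' direction, I would first observe that the $\mathcal{LD}$ graph faithfully represents a system of equations, so each application of a rule $(0)$--$(vii)$ corresponds to an equivalent rewrite of the underlying system. By Lemma~\ref{sh_sound} the unifier set is preserved through every step. If the algorithm reaches Step~5 without failure, Lemma~\ref{dagsolved2} yields a terminal system in dag-solved form, which (see Section~\ref{prelims}) admits a unique most general idempotent unifier; this is an $E$-unifier of the original input. Conversely, failure can only be reported through Rule~$(vi)$, i.e., a cycle in $\mathcal{LD}$ or $\mathcal{LP}$; Lemmas~\ref{LD_cycle} and~\ref{LP_cycle} then show that in either case the current system is not unifiable, and soundness of the previous rewrites transports this back to the original input.

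The main obstacle is termination. The cancellation rules $(0)$, $(i)$--$(iv)$ each strictly decrease easily defined measures (redundant equations, parallel edges with equal label length, or matched path endpoints). Rule~$(v)$ only composes existing edges and, on a finite set of nodes, can fire only finitely many times. The delicate case is Rule~$(vii)$, which may introduce fresh variables $W_1$ and $W_2$. Here the stated side condition restricts creation of fresh variables to targets $W$ having no existing $\succ_a$-child, so each node is split at most once; combined with the top-down discipline imposed by $\gtrdot_h$ in Step~4, which guarantees that newly created classes lie strictly below already-processed ones, rule~$(vii)$ fires only finitely often. One then wraps the three components together to conclude correctness of Algorithm~\ref{alg_A_2}; the step to double-check carefully is that applying $(vii)$ cannot ``reactivate'' a previously processed class in a way that would trigger another split of the same node, which is where the proviso and the ordering strategy are essential.
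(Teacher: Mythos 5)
Your proposal is correct and follows essentially the same route as the paper: the paper's proof of this theorem simply cites Lemma~\ref{sh_sound} (soundness of the rules), Lemma~\ref{dagsolved2} (success yields dag-solved form), and the cycle lemmas (Lemmas~\ref{LD_cycle} and~\ref{LP_cycle}) for the failure direction, exactly the three components you assemble. The termination analysis you add is not part of the paper's correctness proof but is handled separately in its complexity subsection via the same observations you make (each class is split at most once and new variables only enter pre-existing, not-yet-processed classes).
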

\begin{proof}
Follows from Lemma~\ref{sh_sound} to Lemma~\ref{LD_cycle}.
\end{proof}

\subsection{Complexity}
First we get the following result from the cancellative nature of the 
rules $(i)$ through $(iii)$.
\begin{lemma}
Given a $\mathcal{LD}$ graph rules $(0)$-$(iii)$ can only be applied
a polynomial number of times with respect to the initial set of nodes
in the graph.
\end{lemma}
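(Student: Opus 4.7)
The plan is to bound each of rules (0)--(iii) separately by a polynomial in the initial number of nodes $n$, using the fact that none of these rules introduces new variables or new edges into the $\mathcal{LD}$ graph. The key observation is that rules (0)--(iii) are purely ``cancellative'': rule (0) merges two existing variables, rule (i) collapses two $+$-equations with the same LHS into one (plus two variable equations), and rules (ii) and (iii) each replace a pair of lateral edges sharing an endpoint (and of equal path length) with a single edge plus one variable equation. In particular, the node set of $\mathcal{LD}$ is monotonically non-increasing, the set of lateral edges is monotonically non-increasing, and the set of $+$-equations is monotonically non-increasing.

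First I would count applications of rule (i). Each application strictly decreases the number of equations of the form $U =^? U_1 + U_2$ by one, because the rule consumes two such equations with identical LHS and produces a single one (plus the two pure variable equations $U_3 = U_1$, $U_4 = U_2$, which are not $+$-equations). Since none of the other rules (0), (ii), (iii) produces a new $+$-equation, and the initial number of $+$-equations is at most $n$, rule (i) fires at most $O(n)$ times. Next I would count applications of rules (ii) and (iii): each application strictly removes one lateral edge (the edge $X \stackrel{\eta}{\longrightarrow} Y$ in rule (ii), and the edge $X \stackrel{\eta}{\longrightarrow} Y$ in rule (iii)) and produces a variable equation between pre-existing nodes. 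Since the initial $\mathcal{LD}$ graph has at most $O(n)$ lateral edges and no rule in $(0)$--$(iii)$ creates a new lateral edge, rules (ii) and (iii) together fire at most $O(n)$ times.

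Finally I would count applications of rule (0). Each firing of rule (0) eliminates a distinct variable from the system, so it can fire at most once per available variable equation $U =^? V$. Such equations are present initially (at most $O(n)$ of them) and can be produced by rules (i), (ii), (iii) --- at most two per application of (i) and one per application of (ii) or (iii). Combining with the bounds above gives a bound of $O(n)$ applications of rule (0) in total, and hence $O(n)$ applications of each of the four rules. Summing yields a polynomial (in fact linear) bound in $n$.

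The only subtlety I would be careful about is verifying that no rule in $(0)$--$(iii)$ silently reintroduces a node, a lateral edge, or a $+$-equation. For rule (0) this requires checking that the substitution $\{U \mapsto V\}$ applied to existing equations only renames variables and may identify previously distinct equations, but cannot produce a fresh $+$-equation or a fresh lateral edge. For rules (ii), (iii) the new equation is between existing variables and is not of the form needed to add an $\mathcal{LD}$ edge. Once these monotonicity invariants are spelled out, the three counting arguments above fit together cleanly and the polynomial bound falls out.
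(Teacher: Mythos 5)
Your proof is correct and matches the paper's (unstated) reasoning: the paper offers no written proof for this lemma, simply attributing it to ``the cancellative nature'' of rules $(0)$--$(iii)$, and your argument is exactly that observation made precise --- each rule application strictly decreases one of three monotonically non-increasing quantities (number of $+$-equations with a repeated left-hand side, number of lateral edges, number of variables occurring in $S$), none of which is replenished by the other rules. The only nit is that the initial number of lateral edges and $+$-equations is bounded by the number of equations, which is $O(n^2)$ rather than $O(n)$ in the number of nodes, but this does not affect the polynomial bound claimed.
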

In addition, we get the following clear result.
\begin{lemma}
Given a $\mathcal{LD}$ graph rule $(iv)$ can only be applied
a polynomial number of times with respect to the initial set of nodes
in the graph.
\end{lemma}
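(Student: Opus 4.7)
The plan is to exhibit a non-negative integer potential $\Phi$ on states of the $\mathcal{LD}$ graph whose initial value is polynomial in the number of nodes $n$, and which strictly decreases whenever rule $(iv)$ fires. The starting observation is that $(iv)$ is node-conservative: its conclusion $Y \stackrel{h^{i-j}}{\longrightarrow} Z$, $X \stackrel{h^{i}}{\longrightarrow} Z$ uses only the vertices $X$, $Y$, $Z$ already present in the premise, so every edge ever produced has both endpoints among the original $n$ nodes.

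Provided rule $(vi)$ has not declared failure, $\mathcal{LD}$ remains acyclic, and I would then work with the semantic relation $\leadsto$ defined by $A \leadsto B$ iff the current equations force $A = h^{k}(B)$ for some $k \geq 1$. Rule $(iv)$ only materialises edges already implied by $\leadsto$ (if $X = h^{j}(Y)$ and $X = h^{i}(Z)$ with $j < i$, then $Y = h^{i-j}(Z)$), so $\leadsto$ is invariant under $(iv)$. For each node $A$, define $\rho(A)$ to be the length of the longest $\leadsto$-chain starting at $A$; then $0 \leq \rho(A) \leq n-1$ throughout the execution.

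Set $\Phi := \sum_{(A,B) \in E} \rho(A)$, where $E$ is the current set of lateral edges in $\mathcal{LD}$. A single application of $(iv)$ deletes the edge $X \stackrel{h^{j}}{\longrightarrow} Y$ (contributing $\rho(X)$) and inserts $Y \stackrel{h^{i-j}}{\longrightarrow} Z$ (contributing $\rho(Y)$), while the edge $X \stackrel{h^{i}}{\longrightarrow} Z$ is untouched. Since $X \leadsto Y$, one has $\rho(X) > \rho(Y)$, and so $\Phi$ drops by at least $1$. The initial $\mathcal{LD}$ contains at most one lateral edge per input equation of the form $X =_{}^? h(Y)$, giving $|E| \leq n$ and $\Phi \leq n(n-1)$, which bounds the total number of $(iv)$-applications by $O(n^2)$.

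The main obstacle is choosing a rank that is stable across the evolution of $\mathcal{LD}$: ranks read off the current graph fluctuate as edges are re-routed by the inference rules, whereas ranks read off the fixed $\leadsto$-order do not. Once this is settled, what remains is to check that other rules interleaved with $(iv)$ do not inflate $\Phi$: rules $(0)$ and $(i)$--$(iii)$ are strictly contractive (they merge nodes or delete edges), while rule $(v)$ replaces one edge with another sharing the same source $X$ and so leaves $\Phi$ unchanged, so the polynomial bound on $(iv)$ survives interleaving with the cancellation and concatenation rules of Figure~\ref{Typed_Inf_Rules}.
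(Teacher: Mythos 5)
The paper itself offers no proof of this lemma (it is introduced with ``we get the following clear result'' and left unproved), so your potential-function argument is not competing with an existing proof but supplying one; the core idea --- rank nodes by the invariant semantic relation $\leadsto$ rather than by the fluctuating graph, and observe that rule $(iv)$ deletes an edge charged $\rho(X)$ while inserting one charged $\rho(Y) < \rho(X)$ --- is sound and is exactly the right way to make the claim rigorous. Your observations that $(iv)$ is node-conservative, that $(0)$--$(iii)$ are contractive, and that $(v)$ rewires an edge without changing its source are all correct.

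There is, however, one concrete gap in the interleaving step: your list of ``other rules'' omits rule $(vii)$, the path-propagation rule, which is precisely the rule that \emph{does} inflate $\Phi$. Each application of $(vii)$ adds two fresh lateral edges $U_1 \stackrel{\eta}{\longrightarrow} W_1$ and $U_2 \stackrel{\eta}{\longrightarrow} W_2$ (and may create the nodes $W_1, W_2$), increasing $\Phi$ by $\rho(U_1)+\rho(U_2)$. Since Algorithm~\ref{alg_A_2} loops back to Step~2 after each class is processed, applications of $(iv)$ genuinely interleave with applications of $(vii)$ over the run, so the claim that $\Phi$ never increases is false as stated. The repair is routine but must be said: by Lemma~\ref{processing} and Lemma~\ref{paths_added} the total number of $(vii)$-applications is polynomially bounded, each raises $\Phi$ by at most $2(n-1)$ (with $n$ itself growing only by $O(1)$ per class by Lemma~\ref{nodes_added}), so the cumulative inflation of $\Phi$ is polynomial and the $O(\mathrm{poly}(n))$ bound on $(iv)$-firings survives. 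A second, smaller point: $\rho$ is only well defined when $\leadsto$ is acyclic, i.e.\ when the input is unifiable; a semantic $\leadsto$-cycle need not yet be visible as a cycle in the current $\mathcal{LD}$ graph, so ``rule $(vi)$ has not declared failure'' does not by itself license $\rho(X) > \rho(Y)$. You should either restrict the lemma to unifiable inputs and handle the failing case separately, or argue that a cycle is exposed within polynomially many steps.
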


\begin{lemma}\label{Single_Sink}
Each equivalence class formed by closure along $\succ_h$-related nodes
has a unique sink.
\end{lemma}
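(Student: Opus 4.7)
\textbf{Proof plan for Lemma~\ref{Single_Sink}.} The plan is to argue by combining acyclicity (enforced by rule $(vi)$) with the saturation forced by rules $(ii)$, $(iii)$, $(iv)$ and $(v)$, which together guarantee that within one $\sim_h$-class each node has a ``unique downstream direction'' along lateral edges.

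First I would establish existence. Restrict the $\mathcal{LD}$ graph to the lateral edges between nodes of the fixed equivalence class $[X]_h$. Because rule $(vi)$ did not fire, the $\mathcal{LD}$ graph is acyclic, so this restriction is a finite DAG and therefore has at least one sink (a node with no outgoing lateral edge within the class).

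Next I would handle uniqueness, which is where the main work lies. Suppose, for contradiction, that two distinct nodes $Y$ and $Z$ of $[X]_h$ are both sinks. By definition of $\sim_h$ as the reflexive, symmetric, transitive closure of $\succ_h$, there is an \emph{undirected} lateral path linking $Y$ and $Z$. Walking up this path from $Y$ and from $Z$ along the direction of $\succ_h$ (which is possible because each node is connected to its outgoing target or its incoming source along a lateral edge), one identifies a common ancestor $W \in [X]_h$ with two outgoing lateral paths $W \stackrel{h^{a}}{\longrightarrow} Y$ and $W \stackrel{h^{b}}{\longrightarrow} Z$; such a $W$ exists because the undirected path between $Y$ and $Z$ must at some point switch from going ``up'' (against $\succ_h$) to going ``down'' (with $\succ_h$), and that apex node is $W$. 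Note that rule $(v)$, applied exhaustively in Step~$4$ of Algorithm~\ref{alg_A_2}, ensures that these compound paths are present as single labeled edges in the saturated graph.

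The crucial step is then the case analysis on $|a|$ versus $|b|$. If $a = b$, rule $(iii)$ applies and forces $Y =_{}^? Z$, contradicting $Y \neq Z$ after rule $(0)$ is exhausted. If $a \neq b$, say $a < b$ without loss of generality, then rule $(iv)$ applies and produces an outgoing edge $Y \stackrel{h^{b-a}}{\longrightarrow} Z$, contradicting the assumption that $Y$ is a sink. Either way we reach a contradiction, so $Y = Z$ and the sink is unique. The main obstacle, and the reason the saturation strategy is essential, is precisely the need to rule out the ``fork at a common ancestor'' configuration: without the combined effect of $(iii)$, $(iv)$ and $(v)$ the class could carry two independent downward branches, so the proof hinges on observing that Step~$4$ of Algorithm~\ref{alg_A_2} has already propagated every such fork into either a merge of endpoints or a new edge between them.
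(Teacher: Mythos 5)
Your proof is correct and follows essentially the same route as the paper's (much terser) argument: acyclicity gives existence of a sink, and rules $(ii)$--$(v)$ resolve any fork at a common ancestor so that every node in the weakly connected class ends up with at most one outgoing lateral edge, forcing the sink to be unique. The only slight imprecision is the claim that a single apex $W$ with outgoing paths all the way to $Y$ and $Z$ can be read off an arbitrary undirected path (which may have several peaks); this is easily repaired by inducting on the number of direction switches or by invoking the out-degree-at-most-one property directly, as the paper does.
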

\begin{proof}
If a class has no sink then there is a cycle and the system is not unifiable.
Now assume we have at least one sink. 
Rules $(iii)$, $(ii)$, $(iv)$ and $(v)$ ensure
that each node can have at most one lateral outgoing edge.
\end{proof}

\begin{lemma}\label{processing}
Processing an equivalence class (Step 4)
takes polynomial time with 
respect to the number of variables in the class.
\end{lemma}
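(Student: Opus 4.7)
The plan is to bound, separately, the cost of the two rule types that fire in Step~4 --- the concatenation rule $(v)$ and the propagation rule $(vii)$ --- in terms of $n$, the number of $\mathcal{LD}$-graph nodes in the equivalence class being processed. Each individual firing will perform only constant graph bookkeeping plus one arithmetic operation on path lengths, and since those lengths are maintained in binary their bit-size stays polynomially bounded throughout, so a polynomial bound on the number of firings will suffice.

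For rule $(v)$, I would first invoke Lemma~\ref{Single_Sink} to obtain the unique sink $W$ of the class. Because the cancellation rules $(ii)$--$(iv)$ have been applied exhaustively in Step~2, each node entering Step~4 has at most one outgoing lateral edge within the class, so the class begins with at most $n-1$ lateral edges forming a tree rooted at $W$. Working back from $W$ toward the leaves, rule~$(v)$ combines consecutive edges by summing their labels $h^i$, $h^j$, and can produce at most one new edge per (source, ancestor) pair in the tree, giving at most $O(n^2)$ firings. Since each firing is a single binary addition, this phase is polynomial in $n$.

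For rule $(vii)$, I would observe that it fires at a node $U$ of the class only when the input system contains an equation $U =_{}^? U_1 + U_2$, so at most $n$ times per class. Because classes are processed top-down under $\gtrdot_h$, by the time $(vii)$ is applied the sink $W$ already possesses $\succ_a$-children $W_1, W_2$ inherited from the processing of a parent class; the side condition explicitly attached to $(vii)$ then suppresses the creation of fresh variables, and each firing simply installs the two lateral edges $U_1 \; \stackrel{\eta}{\longrightarrow} \; W_1$ and $U_2 \; \stackrel{\eta}{\longrightarrow} \; W_2$. Thus rule $(vii)$ contributes $O(n)$ new edges and constant arithmetic work per firing.

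The subtlety worth flagging is that the edges created by $(vii)$ land on endpoints $W_1, W_2$ lying in strictly lower $\gtrdot_h$-classes, and may trigger additional rule firings; that work, however, is correctly charged to the processing of those lower classes, not the current one, because Step~4 only touches nodes of the class it has selected. Combining both counts, processing a single class performs $O(n^2)$ graph updates, each requiring a constant number of pointer manipulations together with one polynomial-size binary addition or comparison, yielding the claimed polynomial bound.
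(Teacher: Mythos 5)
Your proposal is correct and follows essentially the same route as the paper's (much terser) proof: bound the firings of rule $(v)$ and rule $(vii)$ separately, using the single-sink/single-outgoing-edge structure guaranteed by the cancellation rules, and note that each firing costs only constant bookkeeping plus one binary arithmetic operation (the paper gets a linear rather than $O(n^2)$ count for $(v)$ by exploiting the sink-to-source order, but either bound suffices). One small inaccuracy: the sink $W$ does \emph{not} always already possess $\succ_a$-children when $(vii)$ first fires --- the paper instead observes that $(vii)$ may create two fresh variables, but at most once per class --- yet this does not affect the polynomial bound you derive.
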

\begin{proof}
By rules $(i)$ through $(iv)$ each variable in the class
will have at most one outgoing edge and all paths will lead to the sink. 
Applying $(v)$ exhaustively starting from the sink is therefore
bounded linearly by the number of variables in the class. In addition,
$(vii)$ is also bounded by the number of variables in the class as it can be applied at
most once for each variable in the class. Moreover, it can create 2 new variables
at most once for each class. 
\end{proof}

\begin{lemma}\label{num_equiv_class}
The number of $\sim_h^{}$-equivalence classes for a system
$S$ can never increase.
\end{lemma}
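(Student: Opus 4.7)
The plan is to argue rule-by-rule that each inference rule in Figure~\ref{Typed_Inf_Rules}, together with any substitutions it triggers via rule $(0)$, preserves or reduces the number of $\sim_h^{}$-equivalence classes. The key observation to carry through the analysis is that a $\sim_h^{}$-equivalence class is nothing more than a connected component of the undirected graph obtained by symmetrizing the $h$-labeled (compound) edges of $\mathcal{LD}$. Under this view, the lemma reduces to checking that no rule produces a new connected component.

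For the ``non-splitting'' rules the analysis is routine. Rule $(0)$ identifies two variables, which can only fuse two classes into one. Rule $(i)$ introduces equalities between existing variables that are subsequently dispatched by rule $(0)$. Rules $(iv)$ and $(v)$ remove one $h$-edge while also introducing or retaining $h$-edges that keep the three vertices occurring in their premises within a common class: in $(iv)$, after $X \stackrel{h^{j}}{\to} Y$ is removed, both $X$ and $Y$ remain connected to $Z$ via the kept edge $X \stackrel{h^{i}}{\to} Z$ and the new edge $Y \stackrel{h^{i-j}}{\to} Z$; symmetrically for $(v)$. Rules $(ii)$ and $(iii)$ also remove one $h$-edge, but the removal is compensated by the variable equality in the conclusion: in $(ii)$, the premise $Z \stackrel{\pi}{\to} Y$ is retained and the new equation $X =^? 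Z$ causes $(0)$ to merge $X$ back into $[Z]_h^{} = [Y]_h^{}$; rule $(iii)$ is symmetric.

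The main obstacle is rule $(vii)$, the splitting rule, since it is the unique rule that introduces fresh variables $W_1$ and $W_2$ into the graph. The hard part will be verifying that these do not constitute new equivalence classes. By inspection of the conclusion, $W_1$ and $W_2$ are introduced \emph{simultaneously} with compound $h$-edges $U_1 \stackrel{\eta}{\to} W_1$ and $U_2 \stackrel{\eta}{\to} W_2$, where $U_1$ and $U_2$ are pre-existing variables guaranteed by the premise $U =^? U_1 + U_2$. Consequently $W_1$ joins the existing class $[U_1]_h^{}$ and $W_2$ joins $[U_2]_h^{}$, so each fresh vertex enlarges an existing component rather than starting a new one. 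Assembling the cases, no rule strictly increases the number of $\sim_h^{}$-equivalence classes, which proves the lemma.
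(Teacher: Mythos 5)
Your proposal is correct and rests on the same key observation as the paper's (one-line) proof: rule $(vii)$ is the only rule that introduces fresh variables, and $W_1$, $W_2$ arrive already attached by $h$-edges to the pre-existing $U_1$, $U_2$, hence join existing classes. Your rule-by-rule check that the edge-removing rules $(ii)$--$(v)$ cannot split a class (because each deletion is compensated by a retained edge or a variable identification) is a point the paper leaves implicit, but it does not change the underlying argument.
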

\begin{proof}
New variables created by rule $(vii)$ don't create new equivalence
classes as they are added to pre-existing classes.
\end{proof}

Due to the fact that each equivalence class contains a single sink, 
by Lemma~\ref{Single_Sink}, we get the following.

\begin{lemma}\label{nodes_added}
A maximum of $2$ new nodes can be added to an equivalence class
from any one higher equivalence class.
\end{lemma}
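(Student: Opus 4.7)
The plan is to leverage a fact already established in the proof of Lemma~\ref{processing} --- that rule $(vii)$ is the only source of new variables and fires with fresh-variable creation at most once per processed higher class --- and then observe where its two fresh outputs land in the $\sim_h$-classes.

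By inspection of the inference rules of Figure~\ref{Typed_Inf_Rules}, rules $(0)$--$(vi)$ never introduce new variables: $(0)$ is a substitution; $(i)$--$(iv)$ merge existing nodes or equate them via path-length matching; $(v)$ only adds edges between pre-existing nodes; and $(vi)$ is a failure rule. Only rule $(vii)$ creates variables, and each effective firing deposits exactly the pair $W_1, W_2$. The conclusion of the rule forces $U_1 \stackrel{\eta}{\longrightarrow} W_1$ and $U_2 \stackrel{\eta}{\longrightarrow} W_2$ for the triggering equation $U =_{}^? U_1 + U_2$, so $W_1 \in [U_1]_h$ and $W_2 \in [U_2]_h$. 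Hence a single firing distributes its two new variables over at most two lower classes, contributing at most two new nodes to any single such class --- and exactly two only if $[U_1]_h = [U_2]_h$ happens to equal the target class.

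It then remains to justify that rule $(vii)$ fires with new-variable creation at most once per processed higher class $C$, which is the content of the ``at most once for each class'' statement in the proof of Lemma~\ref{processing}. By Lemma~\ref{Single_Sink}, $C$ has a unique sink $s$. Step~$4$ of Algorithm~\ref{alg_A_2} applies rule $(v)$ exhaustively before $(vii)$, so every $U \in C$ carries a direct lateral edge to $s$, making $s$ the canonical choice for $W$. The gating condition ``$W_1, W_2$ are created only when $W$ has no $\succ_a$-child'' then blocks further creations at $s$: a later triggering equation $U' =_{}^? U_1' + U_2'$ reuses the existing $W_1, W_2$, merging $[U_i']_h$ with $[U_i]_h$ rather than spawning variables, and any firing at an intermediate $W \neq s$ is absorbed into the sink firing by cancellation rules $(ii)$ and $(iii)$, which identify equal-length lateral descendants of a common ancestor. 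The main obstacle I anticipate is this absorption bookkeeping for intermediate $W$, but it is essentially already present in the argument for Lemma~\ref{processing}; carrying it out rigorously is a case analysis on how the new edges produced by $(vii)$ compose (via $(v)$) with the pre-existing structure and are then collapsed by $(ii)$ and $(iii)$.
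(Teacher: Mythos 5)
Your proof is correct and takes essentially the same route as the paper: the paper states this lemma without a formal proof, justifying it only by the unique-sink property of Lemma~\ref{Single_Sink} together with the observation inside the proof of Lemma~\ref{processing} that rule $(vii)$ creates its pair of fresh variables at most once per class. Your write-up simply makes that implicit reasoning explicit (including the fact that the exhaustive application of rule $(v)$ before $(vii)$ forces $W$ to be the sink, so the gating condition blocks any second creation), and correctly notes that the two fresh nodes land in $[U_1]_h$ and $[U_2]_h$, giving at most two per lower class.
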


In addition by rule $(vii)$ we get the following.

\begin{lemma}\label{paths_added}
During processing the number of paths added to a
$\sim_h^{}$-equivalence class from a higher, by $\gtrdot_h$,
$\sim_h^{}$-equivalence class cannot exceed the number of nodes
in the lower equivalence class.
\end{lemma}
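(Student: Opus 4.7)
The plan is to leverage the structure of Rule~$(vii)$, which is the only inference rule in Step~4 that creates new lateral edges. After exhaustively applying Rule~$(v)$ at the start of processing the higher class $H$, every node $U \in H$ has a direct lateral edge to the unique sink $W_0$ of $H$, by Lemma~\ref{Single_Sink}. Therefore every application of Rule~$(vii)$ during the processing of $H$ uses $W = W_0$ and produces (or reuses, by the caveat in the rule) a downward equation $W_0 \,{=}_{}^?\, W_1 + W_2$ together with lateral edges $U_1 \stackrel{\eta}{\longrightarrow} W_1$ and $U_2 \stackrel{\eta}{\longrightarrow} W_2$.

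Next I would observe that the children $W_1, W_2$ of $W_0$ are instantiated at most once throughout processing, consistent with Lemma~\ref{nodes_added}, and every subsequent application of Rule~$(vii)$ during the processing of $H$ reuses those same $W_1, W_2$. Hence every lateral edge newly contributed to a lower class has its target confined to the fixed pair $\{W_1, W_2\}$. Fix a lower class $V$; any new edge in $V$ has source $X \in V$ and target $W_j \in V \cap \{W_1, W_2\}$.

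To reach the bound $|V|$, I would invoke the invariant established in the proof of Lemma~\ref{Single_Sink}: Rules $(ii)$, $(iii)$, $(iv)$, $(v)$ together enforce that each node has at most one outgoing lateral edge, and these cancellation rules are re-applied in Step~2 immediately after the processing iteration for $H$ completes. The resulting graph has at most one outgoing lateral edge per node of $V$, so the total number of lateral edges leaving $V$ is at most $|V|$, and the subset added by processing $H$ is a fortiori bounded by $|V|$.

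The main obstacle is maintaining consistent accounting between the transient mid-processing graph---where Rule~$(vii)$ may briefly introduce two equal-length edges out of the same source when, e.g., a node appears as both a left- and a right-addition-child of nodes in $H$---and the post-cleanup graph where the one-outgoing-lateral-edge invariant holds cleanly. Making this precise requires specifying that ``paths added'' refers to edges surviving the subsequent cancellation step, so that redundant edges produced in a single burst of Rule~$(vii)$ applications (which Rule~$(iii)$ then collapses by merging targets of equal path length) are counted only once.
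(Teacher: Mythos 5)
Your argument is correct and is essentially the elaboration the paper intends: the paper states this lemma with no proof beyond the remark that it follows from rule~$(vii)$, and your chain --- every node of the higher class points directly at its unique sink (Lemma~\ref{Single_Sink}) after rule~$(v)$, so all applications of rule~$(vii)$ target the single fixed pair $W_1, W_2$, and the one-outgoing-lateral-edge invariant restored by the cancellation rules then caps the surviving new edges in the lower class at one per node --- is the natural way to fill that in. Your closing caveat, that the bound of one path per node is really a statement about edges surviving the subsequent cleanup (transient duplicates from rule~$(vii)$ being collapsed or turned into failures), is the right reading and is harmless for the only place the lemma is used, namely the constant-growth step in the polynomial running-time theorem.
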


Combining the above results we get the following.

\begin{theorem}
The running time of 
Algorithm~\ref{alg_A_2} is polynomial with respect to 
the initial set of equations.
\end{theorem}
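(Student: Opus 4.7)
The plan is to aggregate the combinatorial bounds already established in the preceding lemmas to show that both the total work per iteration and the total number of iterations are polynomial in the size~$n$ of the initial system. Let $k_0^{}$ denote the initial number of $\sim_h^{}$-equivalence classes and let $n$ denote the initial number of variables; clearly $k_0^{} \le n$.

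First I would bound the total size of the $\mathcal{LD}$ graph ever seen during the run. By Lemma~\ref{num_equiv_class} the number of $\sim_h^{}$-equivalence classes never exceeds $k_0^{}$. By Lemma~\ref{nodes_added}, each fixed class receives at most two new nodes from each of the (at most $k_0^{} - 1$) strictly higher classes, so every class ever contains $O(k_0^{})$ nodes, and the total number of nodes ever generated is $O(k_0^2) = O(n^2)$. By Lemma~\ref{paths_added} the number of lateral paths added into a class is bounded by the number of nodes in that class, so the total number of edges is likewise polynomial in $n$.

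Next I would bound the number of times Step~4 is executed. Because Step~4 always selects a~$\gtrdot_h$-maximal unprocessed class, and because the $\gtrdot_h$ relation on classes is acyclic whenever the algorithm does not already terminate at Step~3 via rule~$(vi)$, each class is processed at most once (aside from possible re-entry that adds no fresh higher classes). Hence the loop executes $O(k_0^{}) = O(n)$ times. Inside each pass, Step~4 takes polynomial time in the size of the selected class by Lemma~\ref{processing}, and the interleaved Steps~2 and~3 take polynomial time by the two lemmas that bound the number of applications of rules $(0)$--$(iv)$ together with standard polynomial-time graph cycle detection for rule~$(vi)$.

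The main obstacle I anticipate is being careful about what ``polynomial'' means across the different notions of size: the lemmas are phrased in terms of ``the initial set of nodes'' for the cancellation rules but in terms of ``the number of variables in the class'' for processing, while newly created variables from rule~$(vii)$ must also be counted. The fix is the global polynomial bound on the total node count derived above, which lets me uniformly replace every per-class size estimate by a polynomial in~$n$. Putting these ingredients together, the total running time is dominated by $O(n)$ iterations each of cost polynomial in the $O(n^2)$-sized graph, yielding an overall polynomial bound in the size of the initial system and completing the proof.
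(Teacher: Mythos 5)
Your proposal is correct and follows essentially the same route as the paper's proof, which likewise combines Lemma~\ref{processing} (polynomial cost per class), Lemmas~\ref{nodes_added} and~\ref{paths_added} (bounded growth of each class), and Lemma~\ref{num_equiv_class} (no new classes) to conclude. Your version is somewhat more explicit — in particular about the $O(n^2)$ global node bound and the $O(n)$ iteration count — but the underlying argument is the same.
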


\begin{proof} 
Processing an equivalence class is polynomial bounded by
Lemma~\ref{processing}. \\ 
By Lemma~\ref{nodes_added} and
Lemma~\ref{paths_added} the classes can only grow by a constant amount as
each class is processed and by Lemma~\ref{num_equiv_class} the number
of classes cannot increase.
\end{proof}

This section covers a decision algorithm for the single homomorphism
subcase.  The obvious extension to this problem results in the
{\em multiple homomorphism\/} problem.  In the multiple homomorphism
case we may have a finite set of variables of type $\tau_1$ but we can
still consider them as homomorphisms $h_1, \ldots, h_n$.  Although we
do not go into any more details here, the multiple homomorphism
case is also interesting. Unlike the single homomorphism case
compression is needed for the multiple homomorphism case.  This is due
to the fact that, unlike the single homomorphism case, the label
variables are not the same and therefore just keeping the path lengths
is not sufficient.  But, the multiple homomorphism case does not
require all the methods presented in the next section for the general case, due to
the type system, i.e., labeled variables cannot also be nodes 
in the $\mathcal{LD}$ graph.

\section{General Algorithm}\label{general_section}
We now consider the general problem, with no type system.
Let us give a brief overview of the section.
\subsubsection*{Section Summary}
\begin{itemize}
\item We begin with a discussion on why the string compression methods are
required to achieve a polynomial bound, Example~\ref{string_length_ex}.
\item We next introduce the new graph data structures in Section~\ref{data_structs}.
\item Section~\ref{alg_overview} provides a high-level overview of the 
new algorithm.
\item Section~\ref{gen_alg_pres} presents the new algorithm.
\item Section~\ref{label_var_section} discusses issues with label variables
which are a key difference between the general algorithm developed in this
section and the Single Homomorphism algorithm developed in the previous section.
\item Details on the SLP operations used in the algorithm and their complexity is covered in Section~\ref{graph_op_section}.
\item Correctness is proven in Section~\ref{rules_sound_section}
and~\ref{failure_section}. 
\item Finally, the complexity proof is covered in Section~\ref{complexity_section}. 
\end{itemize}

\begin{example}\label{string_length_ex}
One consequence of this new graph interpretation is that the label
variable paths, if not compressed, could grow exponentially in length with respect to the
initial set of label variables.  This can be seen using the same
example used to prove the exponential result in Section~\ref{TA_Section}, $\sigma(n)$.
If the algorithm presented below (without compression) is applied to
the system $\sigma(n)$, we do not get an exponential number
of applications of the sum transformation; rather we get label paths
of exponential length.  The growth is due to the path string being
copied and then doubled at each consecutive level.  Although this
doubling of the string leads to the exponential growth, it also
requires the re-use of the string and this suggests the use of string
compression.  Therefore, we keep each of these paths compressed in the
form of straight line programs.  

Consider again the $\sigma(0)$ system and assume there
is a single label variable, $a$, for the initial system. 
If Algorithm~\ref{alg_A_1} is applied but does not use 
string compression the final length of the string
labeling the longest path at level $n$ will be $2^n-1$.
In $\sigma(0)$ (Figure~\ref{graph_exp}) this is 
$2^2 -1 = 3$. For larger $n$ the result is 
undesirably long paths, as seen in Figure~\ref{graph_exp2}. 
However, these strings can easily be compressed via a $SLP$.
 
\begin{figure}[H]
  \centering
\scalebox{.98}{
\input{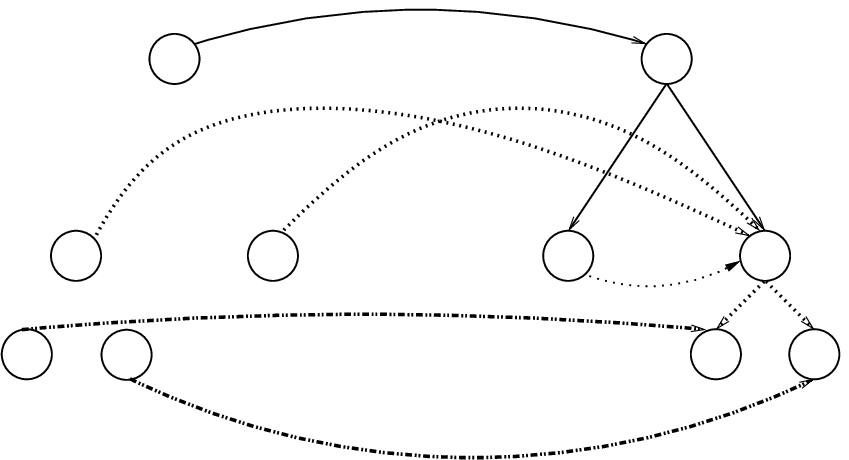}
   }
  \caption{Exponential path length, initial graph}\label{graph_exp}
\end{figure}

\begin{figure}[H]
  \centering
\scalebox{.95}{
   \input{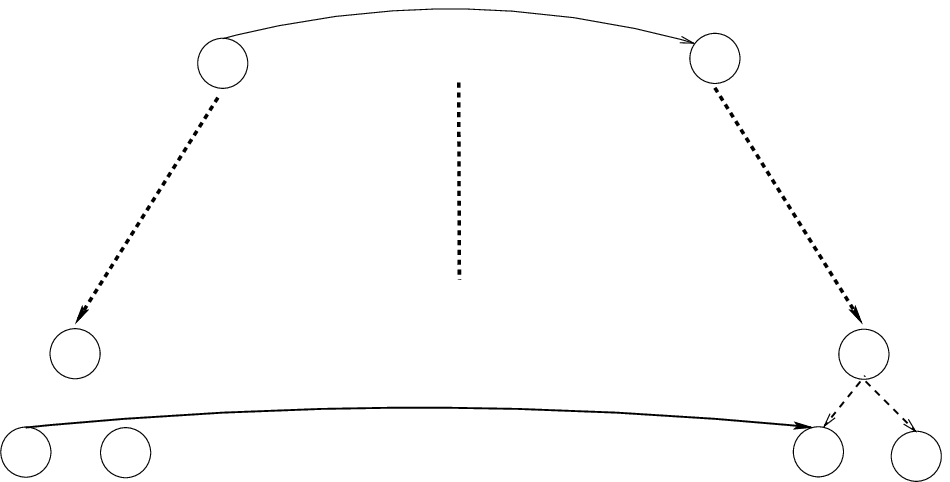}
   }
  \caption{Exponential path length, final graph}\label{graph_exp2}
\end{figure}
\end{example}

\subsection{Data Structures}\label{data_structs}
As in the single homomorphism case we interpret the equations of
a unification problem as graphs.  
\begin{definition}
 we define the following relations:
\begin{itemize}
 \item $X \succ_{r_*} Y$ if $X =  Z \times Y$.
\item $X \succ_{l_*} Z$ if $X =  Z \times Y$.
 \item $X \succ_{m} Y$ if $X \succ_{l_*} Y$ or $X \succ_{r_*} Y$.
 \item $X \succ_{l_+} Y$ if $X = Y + Z$.
 \item $X \succ_{r_+} Z$ if $X = Y + Z$.
\item $X \succ_{a} Z$ if $X = Y + Z$ or $X = Z + Y$.
\end{itemize}
\end{definition}
We denote the transitive closure of a relation $R$ as $R^{+}$.

\begin{definition} \label{LD}
A {\em path labeled dependency graph\/} ($\mathcal{LD}$) is a directed
graph such that the nodes in the graph correspond to variables of the
unification problem $S$.  We form three kinds of edges:

{\em (1) Lateral\/} Edges, where for each equation of the form $X =_{}^? Z \times Y$, 
we have an edge from node $X$ to node $Y$ \underline{labeled} with the top nonterminal of
a $SLP$ generating the {\em label variable\/}, $Z$. 
Label variables are kept as \textit{straight line programs}, 
where the terminals corresponds to the label variable. Each label variable, $Z$,
is given a unique single production $SLP$.
Therefore, lateral edge and path labels correspond to the \underline{top} nonterminal
of the $SLP$ generating the label variables corresponding to those edges.  \\
We denote a path and its label, $\pi$, of one or more lateral edges between nodes $X$ and $Y$ by 
$X \; \stackrel{\pi}{\longrightarrow} \; Y$.
For the general case paths are the composition of any number
of the label variables. A path $\pi$ is notation for a path 
$X_1, \ldots, X_n$ for some
$n \in \mathbb{N}$ and is kept altogether compressed as a SLP.
Therefore, $X \; \stackrel{\pi}{\longrightarrow} \; Y $ 
corresponds to the equations $X =^{?} \pi \times Y$
and is a compact representation of the equation 
\[X =^{?} X_1 \times X_2 \times \ldots \times X_n \times Y\]
where the string generated by $\pi$ is of the form 
$X_1 \cdotp X_2 \cdotp \ldots \cdotp X_n$

{\em (2) Downward\/} Edges, where for each equation of the 
form $X =_{}^? X_1^{} + X_2^{}$, we have
directed edges from node $X$ to node $X_1$ and from node $X$ to node $X_2$.

{\em (3) Relation\/} Edges, where for each node $X$ in the graph such
that there exists a path $X \; \stackrel{\pi}{\longrightarrow} \; Y$ and
for each terminal/label variable $K_i$ in the $SLP$ $\pi$, we have a
\underline{single} edge from $X$ to the {\em node\/} $K_i$ in the
graph.

These edges will only be used for cycle checking and could even be generated
just before the graph is checked for cycles in the algorithm.
\end{definition}
We explain several points below that should help clarify the need for such a graph.
\begin{itemize}
\item The initial $\mathcal{LD}$ graph will be built from an initial
unification problem, $S$, in standard form. That initial graph will not have any composite paths labeled by a $SLP$ 
with more then one production. The composite paths will be added later by the algorithm.
In addition, when constructing the $\mathcal{LD}$ graph each variable $X$ from the
set of label variables is given a unique $SLP$.
For example, a label variable $X$ would be given a $SLP$ $\pi_X \rightarrow X$
and all lateral edges formed by an equation with $X$ as the label variable would
be labeled by $\pi_X$.
This implies that different lateral edges can have the same edge label.
For example, in the $\mathcal{LD}$ graph of Figure~\ref{ld_examp} the edges 
$X \rightarrow Y^{}$ and $L_2 \rightarrow L_3$ have the same $SLP$ label
because they used the same label variable in the equations
$X =_{}^? Z_1 \times Y^{}$ and $L_2 =_{}^? Z_1 \times L_3$.

\item The algorithm presented later will build up the composite paths and 
unlike in the initial
graph it will not in general be the case that for a 
$X \; \stackrel{\pi}{\longrightarrow} \; Y$
all the terminals (label variables) in $\pi$ are $\succ_{l_*}$ related to $X$.
This is the reason for the additional ``Relation'' edges.

\item Lateral Paths are kept as $SLP$s because if not
kept compressed the paths could grow exponential in size, 
during running of the algorithm. By keeping the initial label variables as $SLPs$ 
when we build longer
composite paths we can create the new $SLP$ labels by ``concatenating'' the $SLPs$.

\item 
The {\em Relation\/} edges are only needed during graph cycle checking
operations required by the algorithm. Thus, as the information about
these edges is maintained by the set of terminals for each $SLP$, we
will just generate these edges before cycle checking.
\end{itemize}
\ignore{ 
We denote the length of a path $\pi$ by $| \pi |$, which is simply the
number of variables forming the path ($m$ in Definition~\ref{LD}).}

\begin{example}
Let us consider an example $\mathcal{LD}$ graph for the following system 
of equations ($re$ denotes relation edges).  \[ X =_{}^? X_1^{} + X_2^{}, 
 ~ X =_{}^? \pi_1 \times Y^{}, ~Y =_{}^? \pi_2 \times L^{},
 ~ L =_{}^? L_1^{} + L_2^{}, ~L_1 =_{}^? \pi_5 \times K^{},
~L_2^{} =_{}^{?} \pi_1 \times L_3
\] where the $SLP$s are: $\pi_1 \rightarrow Z_1$, $\pi_2 \rightarrow Z_2$ and
$\pi_5 \rightarrow \pi_3 \pi_4$, $\pi_3 \rightarrow \pi_2 \pi_2$, $\pi_4 
\rightarrow \pi_1 \pi_1$.
The corresponding $\mathcal{LD}$ graph is given in Figure~\ref{ld_examp}.
\end{example}

\begin{figure}
  \centering
   \input{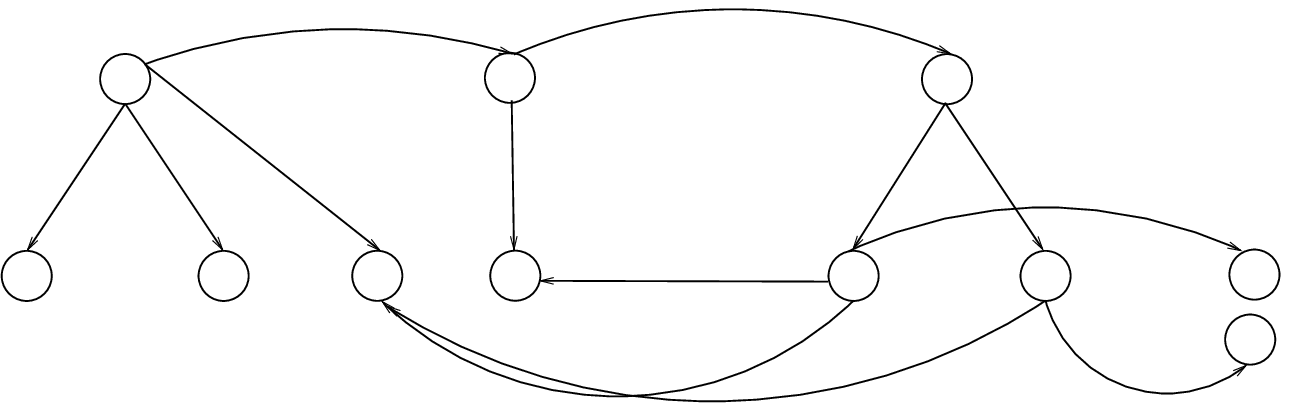}
  \caption{LD graph example}\label{ld_examp}
\end{figure}

\begin{definition} 
The {\em path labeled propagation graph\/}
$\mathcal{LP}$ is a directed simple graph.
Its vertices are the equivalence classes of the symmetric, 
reflexive, and transitive closure of the relation defined by 
$\succ_{r_*}$ on the $\mathcal{LD}$ graph for the same system. 
Edges exist between equivalence classes $\left[ X \right]$ 
and $\left[ Y \right]$ 
if there exist
variables $U \in \left[ X \right]$ and 
$V \in \left[ Y \right]$ such that $U \succ_{a} V$.
\end{definition}

Similar to the typed case we can order the equivalence classes/nodes of $\mathcal{LP}$.
Let~$\sim_r^{}$ stand for the reflexive, symmetric and transitive
closure of~$\succ_{r_*}$. Thus $\sim_r^{}$ defines a set of equivalence
classes over a set of variables. Denote these classes as $\left[ Y \right]_r$.
The $\mathcal{LP}$ graph has exactly these classes as its nodes. 
We can define a strict partial ordering $\gtrdot_r$ on the $\sim_r^{}$-equivalence classes
based on $\succ_a^{}$. That is, $\left[ X \right]_r \gtrdot_r \left[ Y \right]_r$ 
if and only if there exist $ K_1 \in \left[ X \right]_r$ and $K_2 \in \left[ Y \right]_r$
such that $K_1 \succ_a^{} K_2$. This ordering  will again be important 
as it provides an ordering strategy for applying the rules of the unification algorithm. 

Again, we also need  the $\mathcal{LP}$ graph due to a specific
type of non-unifiable system. 
These are systems that require infinite unifiers but will always cause a cycle in the $\mathcal{LP}$ graph. 
An example of this is the following set of equations.
\[ \{
X =_{}^? X_1 + X_2, \;
X =_{}^? V \times X_2 \} \]

\begin{lemma}\label{infinite}
 Let $\mathcal{S}$ be a system of equations with variables
$U, W $ in $S$ such that $U \succ_a W$ and $U \succ_{r_{*}} W$. 
Then $\mathcal{S}$ is not unifiable.
\end{lemma}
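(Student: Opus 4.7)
My plan is to derive a contradiction from the assumption that some substitution $\sigma$ unifies $\mathcal{S}$. The two hypotheses supply an equation $U =_{}^? Z \times W$ (from $U \succ_{r_*} W$) and an equation of the form $U =_{}^? W + Y$ or $U =_{}^? Y + W$ (from $U \succ_a W$); I will treat the former, since the latter is handled by the same argument descending the rightmost spine instead of the leftmost. First I would apply $\sigma$ and use transitivity of $=_E$ to obtain
\[ \sigma(W) + \sigma(Y) \;=_E\; \sigma(Z) \times \sigma(W). \]

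Next I would pass to normal forms with respect to the rewrite system $X \times (Y + Z) \to X \times Y + X \times Z$, which is terminating and confluent, so every term has a unique normal form. Write $w$, $y$, $z$ for the normal forms of $\sigma(W)$, $\sigma(Y)$, $\sigma(Z)$. No rule applies at a $+$-rooted position, so the normal form of the left-hand side above is simply $w + y$. On the right, $\sigma(Z) \times \sigma(W)$ reduces to $z \times w$, and further reduction at the root occurs only when $w$ itself is $+$-rooted; hence $w$ must already be $+$-rooted, say $w = w_L + w_R$, and the normal form of the right-hand side becomes $(z \times w_L)\!\downarrow +\, (z \times w_R)\!\downarrow$. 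Uniqueness of normal forms, applied position by position, then forces $(z \times w_L)\!\downarrow \;=\; w_L + w_R$ and $(z \times w_R)\!\downarrow \;=\; y$.

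The key observation is that the first of these equalities is structurally identical to our starting equation, with the proper subterm $w_L$ playing the role of $w$ and $w_R$ playing the role of $y$. Iterating the argument, $w_L$ must itself be $+$-rooted, hence so must $w_{LL}$, and so on down the leftmost branch of $w$. Since $w$ is a finite term, this branch eventually terminates at a non-$+$-rooted leaf, producing the desired contradiction.

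I expect the main obstacle to be cleanly justifying the recursive normal-form identity $(z \times w)\!\downarrow \;=\; (z \times w_L)\!\downarrow +\, (z \times w_R)\!\downarrow$ when $w = w_L + w_R$, together with the subterm-by-subterm matching --- both resting on termination and confluence of the one-sided distributivity rule, facts already invoked in the proof of Theorem~\ref{TA_cycle}. Once these are in hand, the remainder is a short structural induction on the depth of the leftmost spine of $w$, with the base case (a non-$+$-rooted leaf) providing a mismatch of root symbols and the inductive case strictly shrinking the candidate $w$.
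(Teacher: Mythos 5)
Your argument is correct, but it proceeds by a genuinely different route from the paper. The paper's proof is a one-line delegation: it observes that the hypotheses put $U$ and $W$ in the same $\sim_r$-equivalence class while also creating a $\succ_a$-edge between them, hence a self-loop in the sum propagation graph $P(S)$, and then invokes Lemma~11 of Tid\'{e}n--Arnborg (via Theorem~\ref{TA_Results}) to conclude non-unifiability. You instead give a self-contained semantic proof: from $\sigma(W)+\sigma(Y) =_E \sigma(Z)\times\sigma(W)$ and convergence of the rule $X \times (Y+Z) \to X\times Y + X\times Z$ (already asserted in the proof of Theorem~\ref{TA_cycle}), you show the normal form $w$ of $\sigma(W)$ must be $+$-rooted, and that matching the unique normal forms position by position reproduces the same equation with $w_L$ (resp.\ $w_R$ in the other orientation of $\succ_a$) in place of $w$, so $w$ would need an infinite $+$-spine --- impossible for a finite term. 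Your derivation checks out, including the key identity $(z\times(w_L+w_R))\!\downarrow = (z\times w_L)\!\downarrow + (z\times w_R)\!\downarrow$, which follows from confluence, and the observation that a variable or non-$+$-rooted leaf at the bottom of the spine yields a root-symbol clash. What your approach buys is independence from the Tid\'{e}n--Arnborg propagation-graph machinery and an explicit explanation of \emph{why} such systems fail (only an ``infinite unifier'' could satisfy them); what the paper's approach buys is consistency with its overall strategy of reusing the cited correctness results wholesale, which matters because the same citation is later leaned on for the general claim that \emph{all} $\mathcal{LP}$-cycles (not just the self-loop treated here) imply non-unifiability.
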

\begin{proof}
This system can be seen to cause a cycle not only in $\mathcal{LP}$ but
also in $P(S)$ (see Definition~\ref{orig_PL}). This is due to forming the 
equivalence classes by closure along $\succ_{r}$ related nodes. 
It is shown in~\cite{TidenArnborg87} Lemma 11, if $P(S)$ contains a cycle
there is no unifier for the system $S$ (See Theorem~\ref{TA_Results}).
\end{proof} 

Each time the algorithm updates the $\mathcal{LD}$ graph
it also updates the $\mathcal{LP}$ graph and checks for cycles. Likewise,
if cycles are found the algorithm terminates with failure. 

It can be seen that the propagation graph of~\cite{TidenArnborg87} and the
$\mathcal{LP}$ graph are the same. This is due to fact
that both graphs contain the same equivalence classes and equivalent
edges between the classes. 
Figure~\ref{lp_examp} is an example $\mathcal{LP}$ graph for the same
set of equations used to form the graph of Figure~\ref{ld_examp}.

\begin{figure}
  \centering
   \input{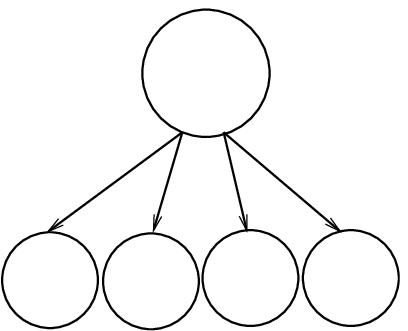}
  \caption{LP graph example}\label{lp_examp}
\end{figure}

The algorithm will work by ``saturating'' the graphs. A set of 
transformation rules is used to either convert the graph into 
a solved form or detect a cycle in the graph. The first case implies 
unifiability and the second non-unifiability.   
During saturation we derive {\em path constraints\/} of the form
$\pi_1^{} =_{}^? \pi_2^{}$ or $\pi_1^{} {\prec}_{}^? \pi_2^{}$. 
The constraint $\pi_1^{} {\prec}_{}^? \pi_2^{}$, is a prefix check 
(i.e., whether the string produced by the $SLP$ $\pi_1$
is a prefix of the string produced by the $SLP$ $\pi_2$)
and $\pi_1^{} =_{}^? \pi_2^{}$, similarly, is an equality check.
In addition to path constraints we will need to
perform several {\em path computations:} specifically we need 
to {\em concatenate paths\/}, {\em compute path suffixes\/} and find
a single pair of {\em mismatched terminals\/} in two equated $SLP$ produced strings, all
without decompressing the $SLPs$.

\subsection{High Level Overview}\label{alg_overview}
Before giving the inference rules and algorithm details let us give a high-level
overview of the process. The algorithm works based on the idea of collecting equations into sets. Each set correspond to the
equations forming one of the equivalence classes. We can then order the equivalence classes in such a way that
if we proceed top down in this order, converting each set into a solved form (processing), we do not have to revisit any class.
This combined with the fact that we don't create new classes requiring processing provides us with a well defined structure
for the execution of the algorithm. Briefly, the algorithm proceeds as follows:
\begin{itemize}
\item  The algorithm begins with a unification problem, $S$, in standard form.
From the set of equations $S$ it generates a graph interpretation, the
$\mathcal{LD}$ graph and makes note of the set of label variables,
$\mathcal{V}$. Note that this process does not discard $S$. 
From the $\mathcal{LD}$ graph the algorithm generate the $\mathcal{LP}$ graph.
The nodes of the $\mathcal{LP}$ graph are the equivalence classes, each of which corresponds to a set of nodes from
the $\mathcal{LP}$ graph. 
\item Next the algorithm reduces the graph to a normal-form by applying the set of cancellation inference rules.
These rules are applied to the entire system which results in a system where no additional cancellation rules can be applied.
All of these rules work directly on the $\mathcal{LD}$ graph, except rule $(0)$ which works on the set~$S$.
\item Next the algorithm selects an equivalence class to process based on the class ordering. It then applies a set of
rules on the nodes of the $\mathcal{LP}$ graph which correspond to that equivalence class. The effect of these rules is
to transform the equivalence class into a compressed solved form. 
\item This process is repeated along with error checking, each time reducing the number of classes remaining to be processed. 
\item Finally, it could happen that during the process two label variables are equated. If this occurs the algorithm updates $S$
with the new equality (rule $(0)$), rebuilds the two graphs and the process of processing equivalence classes is restarted.
Since there is only a finite number of label variables, which cannot be increased, each time two label variables are equated 
the number of label variables is reduced. Therefore the total number of times the process can be restarted is equal to the initial number of label variables in $S$.    
\end{itemize}

\subsection{Algorithm Presentation}\label{gen_alg_pres}
We first present the set of inference rules (Fig~\ref{Gen_Inf_Rules}) for a unification problem $S$ 
in standard form. The rules are applied to the
graph $\mathcal{LD}$, except rule $(0)$ which is applied to $S$, and as that graph is updated the $\mathcal{LP}$ graph
is updated.  

\begin{figure}[ht]
  \centering
\scalebox{.85}{
\fbox{
\begin{tabular}{lc}
$(0)$ &  $\vcenter{
\infer[\qquad \mathrm{if} ~ U ~ \mathrm{occurs ~ in} ~ S ]{ \{ U \mapsto V \}(S) \cup 
\, \{U =_{}^? V\} }
      { S ~ \uplus ~ \{U =_{}^? V\} }
}
$\\[+25pt]
$(i)$ &  $\vcenter{
\infer{U=_{}^? U_1 + U_2 , \; U_3 = U_1 , \; U_4 = U_2}
{ U=_{}^? U_1 + U_2 , \; U =_{}^? U_3 + U_4 \; } }
$\\[+25pt]
$(ii)$ &  $\vcenter{
\infer{X \, {=}_{}^? \, Z, \; \qquad  X \; \stackrel{\pi}{\longrightarrow} \; Y}
{X \; \stackrel{\pi}{\longrightarrow} \; Y, \; \qquad Z \; \stackrel{\eta}{\longrightarrow} \; Y \;  \qquad \eta \, = \, \pi}
}
$\\[+25pt]
$(iii)$ &  $\vcenter{
\infer{Y \, {=}_{}^? \, Z, \; \qquad X \; \stackrel{\pi}{\longrightarrow} \; Z}
{X \; \stackrel{\eta}{\longrightarrow} \; Y, \; \qquad X \; \stackrel{\pi}{\longrightarrow} \; Z, \; \qquad \eta \, = \, \pi}
}
$\\[+25pt]
$(iv)$ &  $\vcenter{
\infer{Y \; \stackrel{{\eta}^{-1} \, \pi}{\longrightarrow} \; Z, \; \qquad X \; \stackrel{\pi}{\longrightarrow} \; Z}
{X \; \stackrel{\eta}{\longrightarrow} \; Y, \; \qquad X \; \stackrel{\pi}{\longrightarrow} \; Z, \; \qquad \eta \, \prec \, \pi}
}
$\\[+25pt]
$(v)$ &  $\vcenter{
\infer{\eta \, {=}_{}^? \, \pi }
{X \; \stackrel{\eta}{\longrightarrow} \; Y, \; \qquad X \; \stackrel{\pi}{\longrightarrow} \; Y, \; \qquad \eta \, \neq \, \pi}
}
$\\[+25pt]
$(vi)$ &  $\qquad \qquad$ $\vcenter{
\infer[\qquad \mathrm{if} ~ | \eta | \, < \, | \pi |]{\eta \, {\prec}_{}^? \, \pi}
{X \; \stackrel{\eta}{\longrightarrow} \; Y, \; \qquad X \; \stackrel{\pi}{\longrightarrow} \; Z, \; \qquad \eta \, \not\prec \, \pi}
}
$\\[+25pt]
$(vii)$ &  $\vcenter{
\infer{W =_{}^? W_1^{} + W_2^{}, \; U_1 \; \stackrel{\eta}{\longrightarrow} \; W_1^{},
 \; U_2 \; \stackrel{\eta}{\longrightarrow} \; W_2^{}, \; U \; \stackrel{\eta}{\longrightarrow} \; \; W }
{ U \; \stackrel{\eta}{\longrightarrow} \; \; W, \; U =_{}^? U_1 + U_2, \; W =_{}^? W_1^{} + W_2^{} \; } }
$\\[+25pt] 
$(viii)$ &  $\vcenter{
\infer{W =_{}^? W_1^{} + W_2^{}, \; U_1 \; \stackrel{\eta}{\longrightarrow} \; W_1^{},
 \; U_2 \; \stackrel{\eta}{\longrightarrow} \; W_2^{}, \; U \; \stackrel{\eta}{\longrightarrow} \; \; W }
{ U \; \stackrel{\eta}{\longrightarrow} \; \; W, \; U =_{}^? U_1 + U_2, \; } }
$\\[+25pt] 
$(ix)$ &  $\vcenter{
\infer[\qquad \mathrm{if} ~ \mathcal{LP} ~\mathrm{or} ~ \mathcal{LD} ~\mathrm{are ~cyclic} ]{FAIL}
{S}
}
$\\[+25pt]
$(x)$ &  $\vcenter{
\infer{X \; \stackrel{\tau = (\pi \eta)}{\longrightarrow} \; Z, \; \qquad Y \; \stackrel{\eta}{\longrightarrow} \; Z}
{X \; \stackrel{\pi}{\longrightarrow} \; Y, \; Y \stackrel{\eta}{\longrightarrow} \; Z }
}
$\\
\end{tabular} } }
 \caption{Inference Rules for the One-sided Distributivity Decision Procedure.}
  \label{Gen_Inf_Rules}
\end{figure}

\subsubsection*{Rule Zero}
Rule $(0)$ is simply a variable replacement rule but it has a special
action on label variables: 
{\em if a label variable is equated to a
non-label variable, then the non-label variable is replaced by the
label variable.\/} This rule acts directly on the system~$S$
by doing variable replacement whenever there is an equation
between two variables. 
That is, for an equation of the form $U = V$
between two variables, the rule replaces all occurrences of $V$ in
$S$ with $U$. If one of the variables is a label variable, say $V$,
and one is not, say $U$, then the non-label variable is replaced in
$S$ by the label variable. So in this example, all occurrences of $U$ are replaced by $V$. Therefore, after any variables are equated, we
apply this rule {\em eagerly.\/} Note, \emph{whenever a rule
creates an equation of the form $X=^{?}Y$, those two nodes in the graph are equated and rule $(0)$ applies that equation to the set $S$}.
Therefore, rule $(0)$ is the only rule that acts on and changes the set
of equations $S$. All other rules  modify the two graph data structures.

Rule $(i)$ is due to the cancellative nature
of the $+$ operator and directly corresponds to the canceling
operation in~\cite{TidenArnborg87} (see Figure~\ref{TA_Inf_Rules}).
Rules $(ii)$, $(iii)$ and $(iv)$ are
due to the cancellative nature of $\times$~(\cite{TidenArnborg87}).
Rules $(v)$ and $(vi)$ check the path constraints and attempt to find
label variables that have to be equated in order to satisfy the path
constraint.  These rules and rules $(iv)$ 
and $(x)$ are explained in more
detail in Section~\ref{graph_op_section}.  Rules $(vii)$ and $(viii)$
directly correspond to the splitting rule 
(Rule (d) of Figure~\ref{TA_Inf_Rules}) of~\cite{TidenArnborg87} and
are direct consequences of the distributive axiom. These two versions
are just modifications to work in the modified graph setting. The
difference between the two rules is that rule $(viii)$ creates new variables ($W_1$ and $W_2$) and rule 
$(vii)$ does not.
Rule $(ix)$ is a failure rule, which corresponds to detecting a cycle
in the graphs in the Tid\'{e}n-Arnborg algorithm. 
Finally, rule $(x)$ is a path
completion rule, justified by the soundness of variable
replacement. This rule is also responsible for building the $SLPs$
with more then one production. The rule creates a new $SLP$,
$\tau$, corresponding to the ``concatenation'' of the two $SLP$s $\pi$
and $\eta$. More details on rule~$(x)$ are given in
Section~\ref{graph_op_section}.  
  
\ignore{
\begin{figure}[ht]
  \centering
\scalebox{.90}{
\fbox{
\begin{tabular}{lcc}
(d) & $\qquad$ & $\vcenter{
\infer{X \; \stackrel{l + m}{\longrightarrow} \; Z}
{X \; \stackrel{l}{\longrightarrow} \; Y, \; \qquad Y \; \stackrel{m}{\longrightarrow} \; Z}
}
$\\[+20pt]
(e) & & $\vcenter{
\infer{ U_1 \; \stackrel{l}{\longrightarrow} \; W_1^{}, \; \; U_2 \; \stackrel{l}{\longrightarrow} \; W_2^{} }
{ U \; \stackrel{l}{\longrightarrow} \; \; W, \; U =_{}^? U_1 + U_2, \; \; W =_{}^? W_1^{} + W_2^{} }
}
$\\[+20pt]
(f) & $\qquad$ & $\vcenter{
\infer{Y \; \stackrel{m - l}{\longrightarrow} \; Z}
{X \; \stackrel{l}{\longrightarrow} \; Y, \; \qquad X \; \stackrel{m}{\longrightarrow} \; Z, \; \qquad l < m}
}
$
\end{tabular}
} }
 \caption{Inference Rules for Tracking Path Lengths.}
  \label{Path_Length_Inf_Rules}
 \end{figure} 
}

We also keep and update the length of the string each $SLP$ generates. 
Note, that this information can be efficiently  computed in a bottom up manner for 
any $SLP$ since productions ending in a terminal symbol have string length $1$ and
productions with two non-terminals have length equal to the sum of the lengths of the strings
generated by the two non-terminals. However, since we build our $SLP$ bottom up we can
keep track of this information using simple addition and subtraction when 
constructing new $SLPs$ through concatenating and taking the suffix.  							 
 
\begin{algorithm}[ht]
\caption{One-sided Distributive Unification}
\label{alg_A_1}
\begin{algorithmic}
\vspace{0.1in}
\STATE (Input: A system of equations in standard form)
\STATE \textbf{(1: Generate data structures)}
Generate the 2 graphs, $\mathcal{LD}$ and $\mathcal{LP}$. 
Make a note of the initial label variables in
$S$; denote this set as $\mathcal{V}$.\\[+2pt]

\STATE \textbf{(2: Clean up the system)}
Exhaustively apply the following composite rule:
\[ (0 + i + ii + iii) \]

\STATE \textbf{(3: Error checking)} 
Apply graph cycle checking to the graphs (i.e., rule $(ix)$). If a cycle
is found {\em stop\/} with failure. If the graphs have no cycles
and are in dag-solved form, {\em exit\/} with success. \\[+2pt]

\STATE \textbf{(4: Process equivalence class)} 
Select an equivalence class based on the strict
partial ordering $\gtrdot_r$.
That is, we select the largest element of $\gtrdot_r$ that
has not yet been processed. Thus, if we select the class
$\left[ X \right]_r$ then there does not exist a class $\left[ Y \right]_r$
such that $\left[ Y \right]_r$ has not been processed and 
$\left[ Y \right]_r$~$\gtrdot_r^{+}$~$\left[ X \right]_r$\footnotemark.\\[+2pt]

Process the selected class using the following composite rule:
\[(v + vi)^{}(iv)^{!}(x)^{!}(viii)(vii)^{!}\]
\emph{Rule $(x)$ is applied by starting with the
sink node of the path and working back to the start node of the
path}. 
\emph{Rule $(iv)$ is applied based the $\succ_{r_*}$ partial ordering, starting from 
the source nodes and working down to the sinks.}
In addition, if rule $(v)$ or rule $(vi)$ is applied 
label variables will be equated. 

\STATE \textbf{(5: Checking)}
If any of the variables in $\mathcal{V}$ are equated 
go back to Step~1 else go back to Step~2. 
\vspace{0.1in}
\end{algorithmic}
\end{algorithm}

\footnotetext{Again, if $\gtrdot_r$ is not
strict partial ordering there must be a cycle in the $\mathcal{LP}$ graph}

Algorithm~\ref{alg_A_1} uses the following notation. 
Let $r_1$ and $r_2$ denote inference rules.  
Then, $r_1^{!}$ indicates {\em exhaustive\/} application of the rule $r_1$.
The composite rule $r_1^{!} r_2$ means, apply $r_1$
until it cannot be applied any more and then try to apply $r_2$.
Note that even if $r_1$ cannot be applied the rule 
$r_1^{!} r_2$ can still be used if $r_2$ can be applied.
Thus $r_1^{!}$ does not indicate that $r_1$ \textit{must} be applied
but rather that if $r_1$ can be applied we do so exhaustively. 
$r_1 + r_2$ indicates choice: apply rule $r_1$ \textit{or} rule $r_2$.
Therefore, the last composite rule in Algorithm~\ref{alg_A_1} 
implies that rule $(vii)$ has the 
lowest priority and that rule $(viii)$ is only applied \emph{once} in
the processing of a single equivalence class.

Algorithm~\ref{alg_A_1} is presented above. Let us now give some additional 
explanation of what each step accomplishes before proceeding to the proof details.
\begin{enumerate}
\item The first generates the two data structures
needed to check error conditions. 
\item Step two consists of an exhaustive application of the ``cancellation'' inference rules, 
i.e., rules $(0), ~(i), ~(ii), ~(iii)$. These rules are the simplest rules as they
either reduce the number of variables, rule $(0)$, or reduce the number of edges in the graph,
rules $~(i), ~(ii), ~(iii)$. In addition, the rules don't create any new SLPs or edges. By applying these inference rules first we reduce the problem to a ``normal-form'', where change
can now only occur via rules $(iv) - (x)$.  
Some of the rules must check equality between 
SLP, which can be done in polynomial time, see section~\ref{slp_ops}.  
Lemma~\ref{sound} shows that the inference rules
are sound and Lemma~\ref{step2_bound} shows that this step of the algorithm is polynomially bounded.
\begin{enumerate}
\item Note that redundant edges are removed by rule $(iii)$. 
That is, if there are two edges
\[
X \; \stackrel{\pi}{\longrightarrow} \; Y, \; \qquad X \; \stackrel{\eta}{\longrightarrow} \; Y \; \text{ and } \qquad \eta \, = \, \pi
\] 
Then, rule $(iii)$ will remove one of the redundant edges.  
\end{enumerate}
\item Step three is an error checking step which corresponds to the cycles checking of the two graphs. The
correctness of the step is due to the fact that the system is simple and therefore cycle, or occur-checks,
are errors. See section~\ref{failure_section} for full proofs. 
\item Step four is the step responsible for processing equivalence classes and corresponds to the application
of the inferences rules $(v), ~(vi), ~(iv), ~(x), ~(viii)$ and $(vii)$ in a specific order. The order that rules are applied is 
important as it ensures that the number of steps required to process any equivalence class is bounded by a polynomial, 
see Lemma~\ref{equiv_proc}.  
\item Step five checks to see if any label variables were equated and if so, it goes back to step one, where we rebuild the
graphs. This is done to ensure we don't miss any variables which are part of the compressed labels. Since the number
of label variables is reduced at each application, Lemma~\ref{label_vars_lemma_2} shows the number of times this can happen
is bounded by the initial number of label variables.    
\end{enumerate}
\ignore{
The remainder of this section has the following structure.  In
Section~\ref{label_var_section} issues with the label variables are
clarified. In Section~\ref{graph_op_section} we explain in more detail
the issues and methods for using $SLPs$ and for doing cycle checking. Sections~\ref{rules_sound_section} and~\ref{failure_section} cover the correctness proofs.  
The complexity of the Algorithm~\ref{alg_A_1} is analyzed in Section~\ref{complexity_section}. 
}

\subsection{Label Variables}\label{label_var_section}
We need several results about the label variables and their 
interaction with the new variables. 
Let $\mathcal{V}_0$ denote the set of initial label variables for a system $S$ and
$\mathcal{V}$ the set of label variables at any point during the application of
Algorithm~\ref{alg_A_1}.
Let $\mathcal{Z}$ denote the set of fresh variables created by rule~$(viii)$ 
during 
application of Algorithm~\ref{alg_A_1}.
\begin{lemma}\label{label_vars_lemma_1}
During and after the application of Algorithm~\ref{alg_A_1}, $\mathcal{V} \, \cap \, \mathcal{Z} = \emptyset$.
\end{lemma}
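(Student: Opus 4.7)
The plan is to proceed by induction on the application of the inference rules of Figure~\ref{Gen_Inf_Rules}, showing that no variable in $\mathcal{Z}$ ever occurs as a label on a lateral edge (equivalently, as a terminal in any $SLP$ labelling such an edge). Since $\mathcal{V}$ is by definition the set of variables that appear as labels of lateral edges at the current state, this is exactly what is needed.

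First I would note that the only rule that introduces fresh variables is rule $(viii)$, which produces $W_1^{}$ and $W_2^{}$. At the moment of their introduction, these variables appear only in the sum equation $W =_{}^? W_1^{} + W_2^{}$ and as the \emph{sinks} of the two lateral edges $U_1 \stackrel{\eta}{\longrightarrow} W_1^{}$ and $U_2 \stackrel{\eta}{\longrightarrow} W_2^{}$. In particular, they do not appear as the first component of any multiplication, so they are not label variables the instant they are added to~$\mathcal{Z}$.

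The critical step is to show that no subsequent rule can promote a variable in $\mathcal{Z}$ to a label variable. Inspection of rules $(i)$--$(x)$ shows that only rule~$(0)$ can change the label status of any existing variable, since the other rules manipulate edges or generate constraints/fresh decompositions but never relabel an occurrence of a non-label variable as a label. For rule $(0)$ the special convention stated in Section~\ref{gen_alg_pres} is decisive: when a label variable and a non-label variable are equated, it is the \emph{non-label} variable that is replaced by the label variable. Hence, if at some point a fresh $W_i \in \mathcal{Z}$ is equated to some $V \in \mathcal{V}$, rule~$(0)$ substitutes $V$ for every occurrence of $W_i$, effectively removing $W_i$ from the system rather than turning it into a label variable. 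If instead $W_i$ is equated to another non-label variable, both sides remain non-label, and again no new member of $\mathcal{V}$ is produced.

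Combining these observations, we obtain the invariant that after each rule application $\mathcal{V} \cap \mathcal{Z} = \emptyset$, and the lemma follows. The only subtlety to be careful about is rule~$(x)$, which builds new $SLP$s by ``concatenating'' existing ones: since concatenation only combines terminals that were already terminals of edge-labels on the graph (i.e., already in $\mathcal{V}$), no variable from $\mathcal{Z}$ is introduced as a terminal by this construction either. This is the step I expect to need the most care to state precisely, but it is immediate from the description of rule~$(x)$ and of the initial $SLP$s in Definition~\ref{LD}.
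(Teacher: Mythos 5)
Your proof is correct and follows essentially the same route as the paper's: both arguments reduce the claim to checking that every rule creating a new lateral edge (in particular $(iv)$, $(vii)$, $(viii)$ and the concatenations of rule $(x)$) only reuses pre-existing label variables in its $SLP$ labels, and that the replacement convention of rule $(0)$ prevents a fresh variable from ever being substituted into a label position. Your treatment is somewhat more detailed (explicitly handling the terminals of rule $(x)$ and the two cases of rule $(0)$), but the underlying invariant and case analysis are the same.
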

\begin{proof}
By the definition, it is not possible to apply rule $(0)$ such that a newly created variable 
is made a label variable. 
The only way to make a new variable a label variable is to create a new
lateral edge and make its label a new variable. The rules creating new lateral
edges are $(iv)$, $(vii)$ and $(viii)$. But, the labels of these edges are all 
composed of pre-existing label variables. 
\end{proof}

\begin{lemma}\label{label_vars_lemma_2}
During and after the application of Algorithm~\ref{alg_A_1} on a system $S$ in standard form, 
$|\mathcal{V}| \leq |\mathcal{V}_0|$
\end{lemma}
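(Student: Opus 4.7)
The plan is to show that no rule of Algorithm~\ref{alg_A_1} ever \emph{introduces} a new label variable, so $|\mathcal{V}|$ is monotonically non-increasing over the course of execution, from which $|\mathcal{V}| \leq |\mathcal{V}_0|$ follows immediately. I would proceed by a case analysis on the ten inference rules in Figure~\ref{Gen_Inf_Rules}, classifying each rule according to whether it can introduce a brand-new terminal symbol into the terminal alphabet of any SLP labelling a lateral edge (equivalently, a new label variable).

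First I would dispose of the cancellation-type rules $(i)$, $(ii)$, $(iii)$, the constraint-emitting rules $(v)$, $(vi)$, and the failure rule $(ix)$: each of these either deletes edges/equations or emits a path constraint, and none creates any new lateral edge nor any new SLP production whose right-hand side contains a fresh terminal. Next consider the rules that do produce new lateral edges, namely $(iv)$, $(vii)$, $(viii)$, and $(x)$. In every case, the label on the newly created edge is either (a) an existing SLP label ($\pi$ or $\eta$) already occurring in the graph (rules $(vii)$, $(viii)$), (b) the concatenation SLP $\tau = (\pi\eta)$ of two existing labels (rule $(x)$), or (c) a suffix ${\eta}^{-1}\pi$ of an existing label (rule $(iv)$). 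In each case the new SLP is built from already-existing productions, whose terminals are the label variables already present in $\mathcal{V}$. Thus no new label variable appears in $\mathcal{V}$ through the action of these rules. By Lemma~\ref{label_vars_lemma_1}, the only genuinely fresh variables ever created --- those from rule $(viii)$ --- are disjoint from $\mathcal{V}$, which rules out the only other way a new label variable could enter $\mathcal{V}$.

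Finally, I would examine rule $(0)$, the only rule that can \emph{decrease} $|\mathcal{V}|$. Whenever rule $(0)$ processes an equation $U =_{}^? V$ in which both $U$ and $V$ are label variables, the eager-application convention replaces one of them throughout $S$, shrinking $\mathcal{V}$ by one. When exactly one of $U, V$ is a label variable, rule $(0)$'s special action keeps the label variable and eliminates the non-label one, so $|\mathcal{V}|$ is unchanged; and when neither is a label variable, $\mathcal{V}$ is untouched. In no case does rule $(0)$ increase $|\mathcal{V}|$.

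Combining these observations, no execution step of Algorithm~\ref{alg_A_1} increases $|\mathcal{V}|$, hence at every moment during and after execution we have $|\mathcal{V}| \leq |\mathcal{V}_0|$. The only subtlety worth double-checking is rule $(x)$: one must confirm that concatenating two SLPs does not silently introduce a new terminal symbol. This is immediate from Definition~\ref{slp_def}, since concatenation just adds one new nonterminal production of the form $N \rightarrow N_1 N_2$ on top of the existing grammars, without expanding the terminal alphabet. I do not anticipate any real obstacle; the proof is essentially bookkeeping over the rules, and the potential pitfall is only to make sure that \emph{label variable} is interpreted as a terminal of the SLP edge-labels, so that each created edge-label draws its terminals from the pre-existing pool.
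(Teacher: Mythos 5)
Your proof is correct and follows the same route as the paper: the paper's own proof is simply ``follows directly from rule $(0)$ and Lemma~\ref{label_vars_lemma_1},'' and your case analysis is an expanded version of exactly that argument (rule $(0)$ can only shrink or preserve $\mathcal{V}$, and Lemma~\ref{label_vars_lemma_1} excludes fresh variables from ever entering $\mathcal{V}$). The additional rule-by-rule bookkeeping you supply is sound and merely makes explicit what the paper leaves implicit.
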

\begin{proof}
Follows directly from rule $(0)$ and Lemma~\ref{label_vars_lemma_1}. 
\end{proof}

Actually we can get a similar result for the original Tid\'{e}n-Arnborg
algorithm if we also assume that variable replacement in that
algorithm replaces newly created variables by original variables. Stated
more precisely we get the following lemma.

\begin{lemma}
Let $\mathcal{V}_l^{}$ denote the set of left multiplication variables, i.e., 
$Z \in \mathcal{V}_l^{}$ iff there exists an equation of the 
form $X =Z \times Y$ for some
variables $X$ and $Y$.
Let {\em Sum\/} denote the sum transformation operation 
as defined in~\cite{TidenArnborg87} 
(See Section~\ref{TA_Section} to recall the definition).
Also, assume that if a pre-existing variable is equated to 
a variable created by {\em Sum,\/}
the new variable is replaced by the pre-existing one. 
Then it is never the case that a new variable
created by {\em Sum\/} is in $\mathcal{V}_l^{}$.
\end{lemma}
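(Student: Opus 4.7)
The plan is to prove the stronger invariant that every variable introduced by an application of rule~(d) in the course of the derivation occurs, from the moment of its introduction onwards, only either as a summand of a $+$-equation or as the right-hand argument of a $\times$-equation, and never as the left-hand argument of a $\times$-equation. Since $Z \in \mathcal{V}_l^{}$ precisely when $Z$ appears in a left-multiplication position of some equation in the current system, this invariant immediately yields the lemma. I would establish the invariant by induction on the length of the derivation, counted as the total number of applications of rules (a), (b), (c) and (d).

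The base case is vacuous. For the inductive step I would analyse each rule in turn. Rule~(d), fired on $\{U \, =_{}^? \, V \times W,\; U \, =_{}^? \, X + Y\}$, introduces the fresh variables $W_1^{}$ and $W_2^{}$ via the new equations $W \, =_{}^? \, W_1^{} + W_2^{}$, $X \, =_{}^? \, V \times W_1^{}$ and $Y \, =_{}^? \, V \times W_2^{}$. In every one of these, $W_1^{}$ and $W_2^{}$ occupy only summand positions or right-hand $\times$-argument positions, so the invariant is satisfied for the newly minted variables; previously fresh variables are untouched. Rules~(b) and~(c) only delete a matched pair of $\times$- or $+$-equations and emit variable-to-variable equalities; they create no new $\times$-equation, and so cannot move any variable into a left-multiplication slot.

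The main obstacle is rule~(a), the only rule that can relocate an existing variable, and it is the step I would treat carefully, since it is the one that actually uses the hypothesis of the lemma. I would split on the triggering equation $U \, =_{}^? \, V'$. By the hypothesis stated in the lemma, whenever a pre-existing variable and a fresh variable are equated the fresh one is replaced, so a pre-existing variable is never substituted into a former position of a fresh variable, and, symmetrically, a fresh variable is never substituted into any position of a pre-existing one. If the equation is between two fresh variables $W_i^{}$ and $W_j^{}$, say $W_j^{}$ is eliminated in favour of $W_i^{}$, then $W_i^{}$ only acquires positions formerly held by $W_j^{}$; by the induction hypothesis all such positions are summand or right-hand $\times$ positions, so $W_i^{}$ is still not in $\mathcal{V}_l^{}$. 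If both sides are pre-existing, no fresh variable is relocated at all. Combining the three cases, the invariant is preserved throughout, and the lemma follows.
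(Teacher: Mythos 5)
Your proposal is correct and takes essentially the same approach as the paper, whose proof is just a two-sentence compression of your argument: the sum transformation never places a fresh variable in a left-multiplication position, so a fresh variable could only enter $\mathcal{V}_l^{}$ by being equated with a variable already there, and the stated replacement convention rules that out. Your explicit induction on the derivation, with the case analysis of rules (a)--(d), simply makes the paper's assertion rigorous.
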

\begin{proof} 
The $Sum$ operation does not create new
left multiplication variables. Therefore, the only way to get a
new variable $Z$ into $\mathcal{V}_l$ is by equating it with a
variable already in $\mathcal{V}_l$.
\end{proof}

Thus, \emph{we can assume that there will never be an equation of the form
$X =^{?} Z \times Y$, where $Z$ is a fresh variable created by Algorithm~\ref{alg_A_1}}.  
\ignore{
However, we could have a
relation $Z \succ_{l_*} X$ where $X$ is some pre-existing
variable. This implies that there exists a variable $K$ and an
equation of the form $K =^{?} L \times Z$.  This leads to the observation
(assuming the modified variable replacement method) that we only need
to apply the sum transformation like rule creating new variable (rule~$(viii)$) 
to the last variable in a path of $\succ_{r_*}$-related variables and we need only do this
if the last node in the path has no downward edges. 
\emph{This is essentially the reason we don't need to generate all the new variables in the same way as is done in~\cite{TidenArnborg87}.}
} 

\subsection{Graph and SLP Operations}\label{graph_op_section}
We first examine the problem of graph cycle checking and then we cover
the details of the $SLP$ operations.

\subsubsection{Graph Cycle Checking and Updating}
The $\mathcal{LD}$ graph is updated by the algorithm as the inference rules
operate on it. The $\mathcal{LP}$ graph is built from the $\mathcal{LD}$ graph and thus
can be updated after updating the $\mathcal{LD}$ graph. 
We note that the $\mathcal{LD}$ and $\mathcal{LP}$ graphs can use 
standard cycle checking 
algorithms. With the additional observation that 
we can add the relation edges
in 
polynomial time with respect to the
number of nodes in the $\mathcal{LD}$ graph, we get the
following Lemma.

\begin{lemma}\label{graph_cycle_poly}
The $\mathcal{LP}$ and $\mathcal{LD}$ graphs for a system $S$ in standard form
can be checked for cycles in polynomial time with respect to 
the size of $S$.
\end{lemma}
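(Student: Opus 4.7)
The plan is to reduce cycle-checking on both graphs to standard directed-graph cycle detection (e.g., depth-first search or Kahn's topological-sort algorithm), and then argue that the two graphs themselves have size polynomial in $|S|$ and can be constructed in polynomial time. Since DFS on a directed graph runs in $O(|V|+|E|)$, once polynomial bounds on the vertex and edge sets are established, the lemma follows immediately.

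First I would bound $\mathcal{LD}$. The node set is exactly the variables of $S$, so $|V(\mathcal{LD})| \le |S|$. The edges come in three flavors. Downward edges are introduced one pair per addition equation, so their number is linear in the number of additions in $S$. Lateral edges are present at most once per multiplication equation initially; paths added during saturation are still labeled by a single SLP top nonterminal, and Algorithm~\ref{alg_A_1} only creates such paths while processing equivalence classes, so their number is polynomially bounded by the earlier complexity analysis. The delicate case is the \emph{relation} edges: for every lateral path $X \stackrel{\pi}{\longrightarrow} Y$ we add, for each terminal $K_i$ appearing in the SLP $\pi$, an edge $X \to K_i$. Here I would use the observation that an SLP of size $s$ has at most $s$ terminal productions $N_i \to a$, so it contains at most $s$ distinct terminal symbols. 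Enumerating those terminals requires only a single traversal of the productions of $\pi$, costing $O(|\pi|)$ time per lateral path. Summed over all lateral paths, the total number of relation edges, and the time to install them, is polynomial in $|S|$.

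Next I would handle $\mathcal{LP}$. Its vertices are the $\sim_r$-equivalence classes, which can be built from the $\succ_{r_*}$-edges of $\mathcal{LD}$ using a standard union-find structure in time $O(|V(\mathcal{LD})| \cdot \alpha(|V(\mathcal{LD})|))$. Then, for each $\succ_a$ edge in $\mathcal{LD}$, I insert a single edge between the corresponding two class-vertices of $\mathcal{LP}$, suppressing duplicates with a hash table. This gives $|V(\mathcal{LP})| \le |V(\mathcal{LD})|$ and $|E(\mathcal{LP})| \le |E(\mathcal{LD})|$, both polynomial. With both graphs in hand and of polynomial size, a DFS detects cycles in polynomial time.

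The main obstacle I anticipate is the relation-edge accounting: one must be careful that, even though an SLP $\pi$ can generate an exponentially long string, the relation edges are keyed to the \emph{distinct} terminal symbols of $\pi$ (of which there are at most $|\pi|$) rather than to positions in the generated string. Once that is spelled out, together with the bound from Definition~\ref{slp_def} that an SLP has only one terminal-producing production per nonterminal of the right form, the rest of the proof is a straightforward assembly of standard polynomial-time graph and union-find routines.
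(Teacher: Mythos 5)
Your proposal is correct and follows essentially the same route the paper takes: the paper dispenses with this lemma by noting that standard cycle-detection algorithms apply once the relation edges are installed, and that those edges can be added in time polynomial in the number of nodes of $\mathcal{LD}$. Your write-up simply fills in the details the paper leaves implicit, the key one being that relation edges are keyed to the distinct terminals of an SLP (at most one per production) rather than to positions of the exponentially long generated string, which is exactly the right observation.
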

\ignore{
Because checking $X \; \succ_d \; X$ is essentially cycle checking in
the $\mathcal{LD}$ graph Lemma~\ref{graph_cycle_poly} also implies
that $X \; \succ_d \; X$ can be checked in polynomial time.
}
\subsubsection{SLP Operations}\label{slp_ops}
Algorithm~\ref{alg_A_1} requires the use of some type of string
compression due to the need to keep path labels from growing
exponentially. But, we still need to know how the label variables are
related along paths, e.g., for error checking. Therefore, we cannot
just keep a {\em set\/} of the variables forming the path, the
terminals of the $SLP$, because this removes essential information.
We first examine how the $SLP$ are formed and then we will discuss
how the operations are used by each rule in Algorithm~\ref{alg_A_1} 
along with a presentation of their complexity. For convenience, 
Table~\ref{table0} gives a listing of the $SLP$ algorithms
required by Algorithm~\ref{alg_A_1} and a listing of where 
polynomial time algorithms have been developed and studied for that problem. See~\cite{Lohrey12, Rytter99} for surveys on algorithms on compressed strings, including
$SLPs$.
More details are given in the following discussion 
\subsubsection*{Forming SLPs}
We first encode the label variables as $SLPs$. Each unique label variable is encoded as a unique $SLP$. 
For example, when creating the $\mathcal{LD}$ graph for
two equations $X=^{?} Y * Z$ and $K =^{?} Y * L$ only
\underline{one} $SLP$ is created, $\pi_Y \rightarrow Y$,
and two edges are labeled by that $SLP$, i.e., by the top nonterminal
$\pi_Y$. 
Then, larger or additional $SLP$s are formed, bottom up, by the
inference rules $(x)$ and $(iv)$. 
In addition, \textit{we only keep a single copy of each unique SLP}.
This implies we only keep the \underline{set} of all productions. When creating a new larger $SLP$ we need only create a new top 
production.
For example, if we have two pre-existing $SLPs$ $\pi_i$ and $\pi_j$
that we wish to concatenate we don't need to duplicate all the productions;
simply adding a top production $\pi_{(ij)} \rightarrow \pi_i \pi_j$
to the set of productions is sufficient. Likewise, when constructing a suffix,
we may need to create new productions that are added to the set of productions
but we do not delete the productions contained in the prefix since they generate other
$SLPs$. Note that rules $(vii)$ and $(viii)$ don't
create new $SLPs$ but just use pre-existing ones.

\begin{table}[h]
\begin{center}
    \begin{tabular}{|p{8.0cm}|l|}
    \hline
\emph{$SLP$ operations required by Algorithm~\ref{alg_A_1}}& 
\emph{Reference to Polynomial Algorithm}   \\ \hline
The concatenation of two $SLP$ & ~\cite{Claude09, levy04, Levy08}\\ \hline
$SLP$ equality & ~\cite{Lifshits07, MiyazakiST97, Rytter03, plandowski94} \\ \hline
$SLP$ prefix and suffix & ~\cite{Lifshits07, gascon09, gascon11, Levy08} \\ \hline
Find one pair of non-equal terminals in a pair of non-equal $SLPs$
& ~\cite{gascon09, Lifshits07, MiyazakiST97}  \\ \hline
    \end{tabular}
\caption{Algorithms for $SLP$ Operation}
  \label{table0}
\end{center}
\end{table}  
\noindent
\subsubsection*{Using SLPs} We now examine the $SLP$ operations used by
each rule and their complexity.

\noindent 
\underline{Rule $(x)$:} Rule $(x)$ forms a new $SLP$ by 
concatenating two existing $SLPs$.
\emph{Rule $(x)$ is applied by starting with the
sink node of the path and working back to the start node.}\\
 This ensures a minimal number of applications of rule $(x)$.
To concatenate two $SLPs$, $I = (\Sigma, N_I, P_I)$ and $J = (\Sigma, N_J, P_J)$,\footnote{
For every $SLP$ the set of terminals will be a subset of $\Sigma$, the initial set of label
variables.}
we create a new $SLP$, $K =(\Sigma, N_K, P_K)$.
Let $\pi_I$ and $\pi_J$ be the top nonterminals of $I$ and $J$ respectively.
Then, $N_K = N_I \cup N_J \cup \left\lbrace \pi_K \right\rbrace $ and $P_K = P_I \cup P_J \cup 
\left\lbrace \pi_K \rightarrow \pi_I \pi_J \right\rbrace $.
This is a simplified version, with just two $SLPs$, of the method presented
in~\cite{Levy08}, for concatenating $n$ strings. There it is shown,
by a constructive proof,\footnote{
We have replaced ``singleton context free grammar''
with $SLP$ in the statement, just to stay consistent with the naming in this paper.}
that the $SLP$, $G$, generating the new string satisfies
$|G^{'} | \leq | G | + n -1 $ and 
$ depth(G^{'}) \leq depth(G) + \lceil log(n) \rceil $.
Rule $(x)$ is just concatenating the $SLPs$ but we could also balance the resulting 
$SLP$. It is shown in~\cite{Rytter03} that for a $SLP$, $G$, generating a text of length $m$
with $n$ rules we can construct a $SLP$, $G'$, in $\mathcal{O}(n~log(m))$ time, such that
$G'$ has a depth of $\mathcal{O}(log(m))$ and $\mathcal{O}(n~log(m))$ rules. This could
improve results which depend on the depth of a $SLP$. However, for our purposes we will
use the simple concatenation method as it is sufficient for our results and allows for a simpler
complexity analysis. The following result easily follows.
\begin{lemma}\label{slp_concat_bound}
Let  $I = (\Sigma, N_I, P_I)$ and $J = (\Sigma, N_J, P_J)$ be two $SLPs$. 
Then we can construct in linear time, without decompression, 
a $SLP$ $K =(\Sigma, N_K, P_K)$ that generates the concatenation of the two strings generated by $I$ and $J$ such that  $| K | = |P_I \cup P_J| + 1 $ and 
$depth(K) \leq max \left\lbrace  depth(I), depth(J) \right\rbrace  +  1 $.
\end{lemma}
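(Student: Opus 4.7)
The plan is to give a direct constructive proof, essentially formalizing the recipe sketched in the paragraph preceding the lemma. First I would define $K$ explicitly: take a fresh nonterminal $\pi_K$ that appears in neither $N_I$ nor $N_J$, let $\pi_I$ and $\pi_J$ be the top nonterminals of $I$ and $J$ respectively, and set $N_K = N_I \cup N_J \cup \{\pi_K\}$ together with $P_K = P_I \cup P_J \cup \{\pi_K \rightarrow \pi_I \pi_J\}$. I would then verify that $K$ meets Definition~\ref{slp_def}: the only nontrivial condition is the ordering requirement $i > j,k$ for productions of the form $N_i \rightarrow N_j N_k$. This holds for productions inherited from $I$ and $J$ (since they already satisfied the condition in their respective SLPs), and for the new production $\pi_K \rightarrow \pi_I \pi_J$ it holds by taking $\pi_K$ to be strictly greater than every nonterminal in $N_I \cup N_J$.

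Next I would verify the correctness of the output, namely that $K$ generates exactly the concatenation of the string produced by $I$ with the string produced by $J$. This is immediate from the top production $\pi_K \rightarrow \pi_I \pi_J$: unfolding $\pi_K$ one step yields $\pi_I \pi_J$, and since the productions of $I$ and $J$ are present unchanged in $P_K$ and no production in $P_I$ mentions a nonterminal of $N_J$ or vice versa, $\pi_I$ still derives exactly the string of $I$ and $\pi_J$ still derives exactly the string of $J$.

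For the size bound, I would simply count: $|K| = |P_K| = |P_I \cup P_J| + 1$, which is the stated equation. For the depth bound, I would recall that $depth(\pi_K) = \max\{depth(\pi_I), depth(\pi_J)\} + 1$ by the recursive definition given in the preliminaries, and this equals $\max\{depth(I), depth(J)\} + 1$ since depth of an SLP is defined as depth of its top nonterminal.

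Finally, for the linear-time claim I would observe that the construction only requires allocating one new nonterminal, appending one new production to the shared set of productions, and recording $\pi_K$ as the new top nonterminal; no traversal or decompression of $P_I$ or $P_J$ is needed. I do not expect any serious obstacle: the only subtle point is being careful that the single-copy sharing convention described above (\emph{we only keep a single copy of each unique SLP}) makes the union $P_I \cup P_J$ well defined and ensures no duplication of productions when $I$ and $J$ share sub-SLPs; I would state this explicitly so that the bound $|K| = |P_I \cup P_J| + 1$ is read correctly as a union rather than a multiset sum.
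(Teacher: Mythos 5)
Your proposal is correct and follows exactly the construction the paper gives immediately before the lemma (form $P_K = P_I \cup P_J \cup \{\pi_K \rightarrow \pi_I \pi_J\}$ with a fresh top nonterminal), from which the paper says the result ``easily follows'' without further argument. Your additional checks --- the ordering condition $i > j,k$ for the new production and the well-definedness of the union under the single-copy sharing convention --- are sound and only make explicit what the paper leaves implicit.
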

\noindent
Additional algorithms and notes on concatenation can be found 
in~\cite{Claude09, levy04, Rytter03}.

\noindent
\underline{Rules $(ii)$ and $(iii)$:} These two rules require that we can 
decide if two compressed strings are equal, $\pi_1 =^? \pi_2$. The area
of fully compressed pattern matching is an active area and there are
many algorithms that will solve this problem in polynomial time
($\mathcal{O}(n^{3})$ time for a $SLP$ of size $n$~\cite{Lifshits07}). 
We cite the following, non-exhaustive, list of papers
for excellent algorithms;~\cite{Lifshits07, MiyazakiST97, Rytter03, plandowski94}.

\noindent
\underline{Rule $(iv)$:} 
We can partially order the nodes in each equivalence class based on the 
lateral edges, i.e., based on the $\succ_{r_*}$ relation. 
\emph{Rule $(iv)$ is applied based on this partial ordering, starting from 
the source nodes and working down to the sinks.}\\ 
We do not
apply rule $(iv)$ to a node $X$ if rule $(iv)$ can be applied to a node $Y$
such that there is a lateral path from $Y$ to $X$.  
Rule $(iv)$ requires that we can decide if one
$SLP$ $\pi_1$ is a prefix of an $SLP$ $\pi_2$, $\pi_1 \prec_{}^? \pi_2
$, in polynomial time with respect to $\pi_2$. This problem has been
solved in~\cite{Lifshits07}, $\mathcal{O}(n^{3})$ time for a $SLP$, $\pi_2$, of size $n$. 
We also need to extract the suffix in
compressed form, $\pi_3 = {\pi_1}_{}^{-1} \pi_2$. Because we build the
$SLP$s bottom up and keep the length information. A simple
polynomial-time recursive algorithm can accomplish this. 
See also~\cite{gascon09, Levy08, Lifshits07, schmidtschauLIPIcs2012} 
for additional efficient methods
for computing the suffix (and prefix). 
For example, it has been shown in~\cite{Levy08}
that if $G$ is a $SLP$ generating the word $v$, then for any suffix $v^{'}$
there exists a $SLP$ $G^{'}$ that generates $v^{'}$ and satisfies
$|G^{'}| \leq |G| + depth(G)$ and $depth(G^{'}) \leq depth(G)$.
For completeness a simple, 
but polynomial, suffix algorithm is presented in Appendix A 
and from this algorithm we have the following result. 
 
\begin{lemma}\label{slp_suffix_bound}
Let  $I = (\Sigma, N_I, P_I)$ and $J = (\Sigma, N_J, P_J)$ be two $SLPs$
such that the string generated by $J$ is a prefix of the string 
generated by $I$. Then, in $\mathcal{O}(|I|^{4})$ time a $SLP$ 
$K =(\Sigma, N_K, P_K)$ can be constructed that generates the suffix of $I$ after removing the prefix $J$ such that $| K | \leq |I| + depth(I) $ and $depth(K) \leq depth(I)$.  
\end{lemma}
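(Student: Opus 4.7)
The plan is to describe a recursive suffix-extraction procedure that operates directly on the $SLP$ $I$ without decompressing, guided by the target length $\ell = \|J\|$, and then to analyze its output size, its depth, and its running time.

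First, I would run a single bottom-up pass through $I$ to precompute the string length $\|\pi\|$, stored in binary, for every nonterminal $\pi$ of $I$. Using these lengths, I would define a recursive procedure $\mathrm{Suffix}(\pi, \ell)$ that returns the top nonterminal of an $SLP$ generating the suffix of the string produced by $\pi$ after dropping its first $\ell$ symbols. The cases are: if $\ell = 0$, return $\pi$ with no new production; if $\pi \to \pi_1 \pi_2$ and $\ell < \|\pi_1\|$, recursively compute $\pi_1' = \mathrm{Suffix}(\pi_1, \ell)$, add a fresh top production $\pi_K \to \pi_1'\, \pi_2$, and return $\pi_K$; if $\ell = \|\pi_1\|$, return $\pi_2$ unchanged; and if $\ell > \|\pi_1\|$, return $\mathrm{Suffix}(\pi_2, \ell - \|\pi_1\|)$. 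The desired result is $K = \mathrm{Suffix}(\pi_I, \|J\|)$, where $\pi_I$ is the top nonterminal of $I$. Correctness is an easy structural induction on $\pi$: the three cases correspond exactly to whether the skipped prefix lies strictly inside the string produced by $\pi_1$, ends at its boundary, or extends into the string produced by $\pi_2$, and the hypothesis that $J$ is a prefix of $I$ guarantees $\ell \leq \|\pi_I\|$ throughout.

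The size and depth bounds follow from tracking what each recursive call contributes. Every call either terminates without adding a production (the $\ell = 0$ and $\ell = \|\pi_1\|$ cases, as well as Case~3, which merely passes control to a subnonterminal) or adds exactly one new production and recurses into $\pi_1$ (Case~1). Since each call moves to a strictly smaller subnonterminal, the recursion depth is bounded by $depth(I)$, so at most $depth(I)$ new productions are ever created, giving $|K| \leq |I| + depth(I)$. For the depth bound I would argue by induction along the call stack: in Case~2 and Case~3 the returned nonterminal already has depth at most $depth(\pi_2) \leq depth(\pi) - 1$, while in Case~1 the inductive hypothesis gives $depth(\pi_1') \leq depth(\pi_1)$, so the new production $\pi_K \to \pi_1'\,\pi_2$ has depth at most $1 + \max\{depth(\pi_1), depth(\pi_2)\} = depth(\pi)$. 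Unrolling to the top yields $depth(K) \leq depth(I)$.

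The main obstacle is the complexity accounting, because lengths $\|\pi\|$ are represented in binary with up to $O(|I|)$ bits. The precomputation of all lengths performs $O(|I|)$ binary additions of such numbers and thus costs $O(|I|^2)$. The recursive suffix call performs at most $depth(I) \leq |I|$ steps, each doing a constant number of binary comparisons and one binary subtraction on operands of bit-length $O(|I|)$, contributing another $O(|I|^2)$. Registering new productions and wiring them into the combined grammar is absorbed in the same budget. The resulting $O(|I|^2)$ running time is comfortably within the stated $O(|I|^4)$ bound, which leaves room for the auxiliary bookkeeping used when assembling $K$ into its final form. Combining correctness with the three quantitative bounds completes the proof.
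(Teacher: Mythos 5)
Your proof is correct and follows essentially the same route as the paper, which proves this lemma via the recursive \textsc{BuildSuffix} procedure in Appendix~A: a length-guided descent through the grammar that adds at most one fresh production per level, yielding the same $|K| \leq |I| + depth(I)$ and $depth(K) \leq depth(I)$ bounds. The only differences are cosmetic --- you parameterize the recursion by the length of the prefix to drop rather than the length of the suffix to keep, and your accounting of the binary arithmetic on lengths is somewhat more explicit than the paper's.
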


\noindent
\underline{Rules $(v)$ and $(vi)$:} These rules handle the
situation where two label paths should be equal, or one a prefix of
the other, but are found not to be.  We then need to check if they can
be made equal. We accomplish this by finding at least one pair,
$(X,Y)$, of terminals (label variables) in the corresponding $SLP$s
such that these terminals form a mismatch, $X \neq Y$. One pair will do
for each application of the rule because by the cancellative nature of
$\times$, all mismatched pairs of terminals \textit{must} be equated.
Therefore, we do not have to try all different combinations
of setting pairs equal or unequal.  It is sufficient to select the first mismatch, equate
the variables and construct the resulting new problem. 
Note that we are finding a single pair of terminals that form a
mismatch in the string, not finding {\em all\/} the positions where
the strings generated by the $SLPs$ differ, a NP-hard
problem~(\cite{Lifshits07}). In \cite{gascon09} the authors have developed
a nice polynomial, $\mathcal{O}(n^3)$, algorithm for finding the first mismatch.  
A mismatch can also be found using the
algorithms in~\cite{Lifshits07, MiyazakiST97} or by a simple recursive
algorithm, using the $SLP$ equality algorithm
of~\cite{Lifshits07} as a subroutine. The result is a simple
$\mathcal{O}(n_{}^{4})$ algorithm, $n$ being the size of the largest $SLP$.

The way rules $(v)$ and $(vi)$ work in Algorithm~\ref{alg_A_1} is
if in the $\mathcal{LD}$ graph one of the rules is satisfied, then 
a pair of label variables will be found (by the $SLP$ algorithm)
and equated (through the use of rule~$(0)$).
This will cause the set of label variables, $\mathcal{V}$, to be reduce and 
thus the number of label variables in the system $S$ to be reduced.
The algorithm then returns to step~1 
and rebuilds a new $\mathcal{LD}$ graph from the newly modified system.

\subsection{Correctness}\label{rules_sound_section}
We now examine the correctness of the above algorithm. 
Rather then reconstructing all the proofs from ``scratch'' 
we can reuse some result proven by Tid\'{e}n and Arnborg
since we are working on a compressed version of the original algorithm. 

\begin{lemma}\label{sound}
The non-failure rules of Algorithm~\ref{alg_A_1} maintain the set of unifiers.
\end{lemma}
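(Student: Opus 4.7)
The plan is to verify soundness rule by rule, partitioning the ten non-failure rules into three groups: those that are essentially restatements of Tid\'en--Arnborg rules (for which Theorem~\ref{TA_sound} already does the work), the substitution rule~$(0)$, and the genuinely new ``SLP-aware'' rules $(iv)$, $(v)$, $(vi)$, and $(x)$ that exploit the compressed path representation. For each rule I would show that the set of $E$-unifiers of the premise, restricted to the original variables, coincides with that of the conclusion (with rule~$(viii)$ handled up to introduction of fresh variables, via standard conservativity of fresh-variable abstraction).

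For the direct TA rules, the argument is to unfold the abbreviation $X \stackrel{\pi}{\longrightarrow} Y$, which by Definition~\ref{LD} stands for the chain of equations $X =^? X_1 \times X_2 \times \cdots \times X_n \times Y$ where $X_1 \cdots X_n$ is the string generated by the SLP $\pi$, and then appeal to Theorem~\ref{TA_sound}. Specifically, rule~$(i)$ is a direct instance of the $+$-cancellation half of part~(1); rules~$(ii)$ and~$(iii)$ are instances of the $\times$-cancellation half of part~(1), once we note that paths labeled by SLPs generating identical strings denote identical multiplication chains; and rules~$(vii)$ and~$(viii)$ are translations of the splitting schema of Theorem~\ref{TA_sound} part~(2), with $(viii)$ introducing the fresh witnesses $W_1^{}, W_2^{}$ as in TA rule~(d) and $(vii)$ re-using existing ones. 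Rule~$(0)$ is variable replacement, whose soundness is standard; the only subtlety is the convention of replacing the non-label variable by the label variable, which only affects the representation of the substitution, not its solution set.

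The interesting cases are rules $(x)$, $(iv)$, $(v)$, $(vi)$. Rule~$(x)$ follows from pure substitution: if $X =^? \pi \times Y$ and $Y =^? \eta \times Z$ are both satisfied, then $X =^? \pi \times (\eta \times Z)$, and the concatenated SLP $\tau$ denotes exactly the resulting multiplication chain $\pi \cdot \eta$; so $X \stackrel{\tau}{\longrightarrow} Z$ is equivalent to the conjunction of the premises. For rule~$(iv)$, the hypothesis $\eta \prec \pi$ lets us write $\pi = \eta \cdot (\eta^{-1}\pi)$; unfolding gives $X =^? \eta \times Y$ and $X =^? \eta \times ((\eta^{-1}\pi) \times Z)$, and iterated application of the $\times$-cancellation part of Theorem~\ref{TA_sound} peels off the common $\eta$-prefix to yield $Y =^? (\eta^{-1}\pi) \times Z$, i.e.\ $Y \stackrel{\eta^{-1}\pi}{\longrightarrow} Z$. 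Rules~$(v)$ and~$(vi)$ do not produce new graph edges at all; they merely \emph{record} the constraint $\eta =^? \pi$ (resp.\ $\eta \prec^? \pi$) which, again by iterated $\times$-cancellation along the matching positions of the two chains, must be satisfied by every unifier of the premise paths, and conversely any unifier satisfying the recorded constraint satisfies the premises.

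The main obstacle I anticipate is the careful bookkeeping around rule~$(iv)$ (and to a lesser extent the cancellation half of $(v)$ and $(vi)$): one must argue that iterated $\times$-cancellation down a common SLP-prefix is a valid consequence of the binary cancellation principle of Theorem~\ref{TA_sound}, even though the intermediate ``nodes'' $X_1, \ldots, X_n$ along the chain are never explicit graph vertices. The cleanest way to discharge this is by induction on $\mathrm{depth}(\eta)$, using the recursive structure of SLPs and the fact (assumed from Section~\ref{graph_op_section}) that the suffix operation on SLPs produces a grammar generating exactly the tail string. Once this lemma is in hand, every rule either reduces to Theorem~\ref{TA_sound} or to pure substitution, and so the set of $E$-unifiers is preserved throughout.
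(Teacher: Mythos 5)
Your proposal is correct and follows essentially the same route as the paper's own proof: unfold the compressed path $X \stackrel{\pi}{\longrightarrow} Z$ into the underlying chain of standard-form equations, then justify the cancellation rules $(i)$--$(vi)$ by (iterated) applications of Theorem~\ref{TA_sound} part~1, the splitting rules $(vii)$ and $(viii)$ by part~2, and rules $(0)$ and $(x)$ by variable replacement. Your extra care about iterating binary cancellation down an SLP-compressed prefix (via induction on depth) is a reasonable tightening of a step the paper simply asserts as ``$|\pi|$ applications'' of part~1.
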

\begin{proof}
The new rules are essentially equivalent to the original set of rules, only modified
to work on a compressed version of the problem.  
This can be seen by considering a path 
\[ X \; \stackrel{\pi}{\longrightarrow} \; Z\] 
and remove the compression. 
The path is a graphical representation of the equation.
\[X =^{?} \pi \times Z\]
where the string generated by $\pi$ is of the form 
$X_1 \cdot X_2 \cdot \ldots \cdot X_n$, for some 
label variables $X_1, \ldots, X_n$.
This is a compressed form of the following equation.
\[X =^{?} X_1 \times X_2 \times \ldots \times X_n \times Z\]
This was constructed from $n$ equations in standard form using variable replacement.
These equations are of the form
\[X =^{?} X_1 \times K_1, ~K_1 =^{?} X_2 \times K_2,  ~\ldots, ~K_{n-1} =^{?} X_n \times Z\]

Therefore, the result follows from 
Theorem~\ref{TA_sound} which is proven in~\cite{TidenArnborg87}. Let us now examine the rules.
\begin{itemize}
\item Rule $(x)$ and rule $(0)$ follow from variable replacement.
\item Rule $(i)$ follows from Theorem~\ref{TA_sound} part~1.

\item Now consider rules $(ii)$ 
and $(iii)$, since $\pi = \eta$, these rules follow from
$|\pi|$ applications of Theorem~\ref{TA_sound} part~1. 
The same holds for rule $(iv)$ except that $\eta$ is a prefix.
\item Rules $(v)$ and $(vi)$ 
also follow from Theorem~\ref{TA_sound} part~1 because by
part~1 all the label variables in $\pi$ would have 
to be equated to the corresponding 
variables in $\eta$, so we are safe in selecting one pair.
More specifically, consider two paths and $(v)$.   
\[ X \; \stackrel{\pi}{\longrightarrow} \; Z, \; X \; \stackrel{\eta}{\longrightarrow} \; Z \]
correspond to two equations, uncompressed
\[X =^{?} X_{i_1} \times X_{i_2} \times \ldots \times X_{i_n} \times Z, \; 
~X =^{?} X_{j_1} \times X_{j_2} \times \ldots \times X_{j_n} \times Z\]
Both could be expanded out into standard form and by 
applying Theorem~\ref{TA_sound} part~1
we equate each pair $X_{i_l} = X_{j_l}$.
\item Finally, rules $(viii)$ and $(vii)$ follow from
Theorem~\ref{TA_sound} part~2. \qedhere
\end{itemize}
\end{proof}

Note that rule $(ix)$ is a failure condition that is handled by cycle checking the graph. 

\begin{lemma}\label{complete_sub1}
If Algorithm~\ref{alg_A_1} exits with success on a system
$S$ in standard form, then $S$ is unifiable.
\end{lemma}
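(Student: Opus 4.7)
The plan is to mirror the strategy used for Lemma~\ref{dagsolved2} in the Single Homomorphism case, adapted to accommodate the $SLP$-compressed representation and label variables. By Lemma~\ref{sound} the non-failure inference rules preserve the set of unifiers, so it suffices to show that when Algorithm~\ref{alg_A_1} halts with success, the final graph $\mathcal{LD}$ (together with $S$) represents a system that can be placed in dag-solved form and is therefore unifiable.

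First, I would fix notation and unpack what ``exits with success'' means. Success is reached in Step~3 when both $\mathcal{LD}$ and $\mathcal{LP}$ are acyclic and are in dag-solved form, i.e., no rule from Figure~\ref{Gen_Inf_Rules} is applicable. I would then expand each edge $X\;\stackrel{\pi}{\longrightarrow}\;Y$, with $\pi$ generating the string $X_{1}\cdot X_{2}\cdots X_{n}$, into the chain of standard-form equations
\[ X =^{?} X_{1}\times K_{1},\ K_{1} =^{?} X_{2}\times K_{2},\ \ldots,\ K_{n-1} =^{?} X_{n}\times Y \]
using fresh auxiliary variables $K_{i}$, exactly as in the argument for Lemma~\ref{sound}. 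Call the resulting uncompressed set of equations $\widehat{S}$. By construction, the set of unifiers of $\widehat{S}$ coincides with the set of unifiers of the final compressed system, which by Lemma~\ref{sound} coincides with the set of unifiers of the initial~$S$.

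Next, I would verify the two dag-solved conditions on $\widehat{S}$. For the first condition (distinct left-hand sides), inapplicability of rules $(0)$, $(i)$, $(ii)$ and $(iii)$ ensures that no variable appears as the left-hand side of two distinct equations: any such duplication would expose an occurrence pattern that one of these cancellation rules can still fire on. For the second condition (acyclicity of the substitution order), I would define an order on variables by combining the strict partial order $\gtrdot_{r}$ on $\sim_{r}$-equivalence classes (well-defined by acyclicity of $\mathcal{LP}$) with the intra-class lateral order on $\mathcal{LD}$ (well-defined by acyclicity of $\mathcal{LD}$, together with the relation edges that force each label variable appearing in a path to be lower than its source). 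Inapplicability of rule $(x)$ guarantees that every lateral path has been made available as a single edge when needed by the processing order, and inapplicability of rules $(vii)$ and $(viii)$ guarantees that downward splits have been propagated. Together these properties let us orient the equations of $\widehat{S}$ into a dag-solved list; the auxiliary $K_{i}$ are introduced fresh and occur only once as a left-hand side, so they slot into this order cleanly.

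Finally, I would invoke the standard fact recalled in the preliminaries that a unification problem in dag-solved form has a unique most general idempotent unifier (see~\cite{JouannaudKirchner91}). This yields a unifier for $\widehat{S}$, and hence, by the unifier-preservation chain above, for~$S$. The main obstacle I anticipate is the second step of the above argument: the compressed representation makes it delicate to certify the acyclicity of the substitution order, because paths may reintroduce label variables whose defining equations lie elsewhere in the graph; the relation edges of Definition~\ref{LD} are exactly what makes this certification possible, and I would verify carefully that the exhaustive applications of $(v),(vi),(iv),(x)$ prescribed in Step~4 produce no residual mismatch between a terminal in a path label and its governing equation.
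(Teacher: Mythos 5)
Your proposal is correct and follows essentially the same route as the paper's proof: invoke Lemma~\ref{sound} for unifier preservation, then argue that on success the system is in dag-solved form --- condition (a) because inapplicability of the cancellation rules forbids duplicate left-hand sides, condition (b) because the graphs are acyclic --- and conclude via the standard fact that dag-solved form implies unifiability. Your explicit decompression into $\widehat{S}$ is a more careful spelling-out of a step the paper leaves implicit (it appears in the paper only inside the proof of Lemma~\ref{sound}), but it is not a different argument.
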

\begin{proof}
The result follows from Lemma~\ref{sound} and 
the fact that the set of inference rules transforms
$S$ into a dag-solved form, which implies unifiability~\cite{Gallier89}. 
This can be seen by examining the set of rules and the definition of dag solved-form.\\
Part (a) is satisfied because if there existed some $X_i$ such that
$X_i =^{?} t_{1}$ and $X_i =^{?} t_{2}$ (for terms $t_1$ and $t_2$) one of the inference rules
would be applicable.\\
Part (b) is satisfied because there are no cycles in the graph and thus the
equations can be arranged in the proper order. 
\end{proof}

\begin{lemma}\label{complete_sub2}
If Algorithm~\ref{alg_A_1} terminates with failure on a system $S$ in standard form, 
then $S$ is not unifiable.
\end{lemma}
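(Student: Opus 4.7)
The plan is to observe that Algorithm~\ref{alg_A_1} reports failure only through rule~$(ix)$, namely when the current $\mathcal{LD}$ or $\mathcal{LP}$ graph contains a cycle. By Lemma~\ref{sound}, every non-failure rule preserves the complete set of $E$-unifiers as it transforms $S$ into the current system $S'$, so it suffices to show that at the moment rule~$(ix)$ fires, the current $S'$ is not unifiable; the conclusion for the original~$S$ then follows at once from Lemma~\ref{sound}.

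For the case of an $\mathcal{LD}$ cycle, I would argue along the lines of Lemma~\ref{LD_cycle}, lifted to the compressed setting. The one-sided distributivity axiom is subterm-collapse free (Theorem~\ref{TA_cycle}), hence simple, and in a simple theory a variable cannot be $E$-unified with a proper superterm of itself. Each kind of edge in $\mathcal{LD}$ encodes a strict subterm dependency in the decompressed system: a lateral edge $X \longrightarrow Y$ with label $\pi$ witnesses $X =^? \pi \times Y$, so $Y$ is a strict subterm of the right-hand side; a downward edge witnesses $X =^? X_1 + X_2$, so $X_1, X_2$ are strict subterms; and a relation edge from $X$ to a terminal $K$ of $\pi$ records that the label variable $K$ occurs as a left factor in the decompression of $\pi \times Y$. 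Thus any cycle in $\mathcal{LD}$, after full decompression of the SLPs along its edges, yields a chain of standard-form equations forcing some variable $Z$ to be $E$-equal to a strict superterm of itself, which contradicts simplicity.

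For the case of an $\mathcal{LP}$ cycle, I would invoke the correspondence, already noted when $\mathcal{LP}$ was defined, between $\mathcal{LP}$ and the Tid\'{e}n--Arnborg propagation graph $P(S')$ of the current decompressed system: both have as vertices the equivalence classes under the reflexive--symmetric--transitive closure of $\succ_{r_*}$, and both place an edge between classes exactly when some representative of one is $\succ_a$-related to a representative of the other. Consequently a cycle in $\mathcal{LP}$ is literally a cycle in $P(S')$, and Theorem~\ref{TA_Results}(3) delivers non-unifiability of $S'$. A simple witness of this mechanism is the situation covered by Lemma~\ref{infinite}.

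The main obstacle I expect is in the $\mathcal{LD}$-cycle case, specifically in justifying the relation edges. Because SLPs are built bottom-up by rules $(x)$ and $(iv)$, a terminal $K$ appearing in a compressed label may have entered the SLP indirectly, and one has to verify that $K$ still corresponds to a genuine multiplicative dependency present in the current decompressed~$S'$. I would discharge this by an induction on the rule applications that construct the SLPs, maintaining the invariant that any edge $X \longrightarrow Y$ with SLP label $\pi$ represents, after decompression, an equation $X =^? (K_{i_1} \times K_{i_2} \times \cdots \times K_{i_n}) \times Y$ derivable from~$S'$ by the unifier-preserving manipulations of Lemma~\ref{sound}. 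Given this invariant the relation edges faithfully capture strict subterm dependencies, and the simplicity argument closes the case.
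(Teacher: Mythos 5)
Your proposal is correct and follows essentially the same route as the paper: the paper's proof of this lemma simply invokes Lemma~\ref{failure_lemma}, whose proof is exactly your two-case argument ($\mathcal{LD}$ cycles via subterm-collapse-freeness / Theorem~\ref{TA_cycle}, and $\mathcal{LP}$ cycles via the correspondence with the Tid\'{e}n--Arnborg propagation graph and Theorem~\ref{TA_Results}), with the transfer back to the original system implicitly resting on Lemma~\ref{sound} as you make explicit. Your additional care about justifying the relation edges by an invariant on the SLP labels is a finer-grained elaboration of the same argument rather than a different approach.
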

\begin{proof}
Follows from Lemma~\ref{failure_lemma}.
\end{proof}
From these results we get the following.
\begin{theorem}
The decision Algorithm~\ref{alg_A_1} is correct.
\end{theorem}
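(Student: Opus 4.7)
The plan is to derive correctness of Algorithm~\ref{alg_A_1} by assembling the three pieces that we have already (or will have) established for the inference rules and the two graph-based failure conditions. Correctness here means two directions: if the algorithm reports success on a system $S$ in standard form then $S$ is $E$-unifiable, and if it reports failure then $S$ is not $E$-unifiable. Termination is a separate matter and will be handled in the complexity section; here I only need to argue that the output, whichever it is, is correct.

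First I would invoke Lemma~\ref{sound} to observe that every non-failure rule of Figure~\ref{Gen_Inf_Rules} preserves the set of $E$-unifiers. This gives an invariant: at every point during the run, the current graph/system represents a unification problem with exactly the same unifiers as the original $S$. Next, for the success direction, I would apply Lemma~\ref{complete_sub1}: when the algorithm halts at Step~3 with no cycle in either graph, the resulting $\mathcal{LD}$ graph encodes a system in dag-solved form (because each variable appears on at most one left-hand side, since rules $(i)$, $(ii)$, $(iii)$, $(v)$, $(vi)$, $(vii)$, $(viii)$ are otherwise applicable, and because absence of cycles in $\mathcal{LD}$ allows the remaining equations to be linearly ordered). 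Combined with the unifier-preservation invariant, this yields unifiability of the original $S$.

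For the failure direction, I would cite Lemma~\ref{complete_sub2} (itself derived from Lemma~\ref{failure_lemma}), which asserts that a cycle in $\mathcal{LD}$ or in $\mathcal{LP}$ implies non-unifiability. The justification ultimately traces back to Theorem~\ref{TA_cycle} (the one-sided distributive axiom is subterm-collapse free, so $\mathcal{LD}$-cycles correspond to occur-check failures in a simple theory) and to Lemma~\ref{infinite} together with the analogue of Theorem~\ref{TA_Results}(2)--(3) for the propagation graph (cycles in $\mathcal{LP}$ force infinitary $E$-unifiers, hence no finite unifier exists). Applying the invariant again, a cycle detected at any stage reflects non-unifiability of the original $S$.

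The main obstacle, and the place where the proof is not just a one-line combination of prior lemmas, is verifying that the compression does not disturb either direction. In particular, I need to check that rule $(x)$'s construction of new $SLP$s (bottom-up concatenation) and rule $(iv)$'s suffix construction preserve exactly the same semantic path equations that an uncompressed Tid\'en--Arnborg trace would produce; and that the $SLP$-based equality and prefix tests used by rules $(ii)$, $(iii)$, $(v)$, $(vi)$ faithfully implement the corresponding tests on the strings they generate. Both follow from the $SLP$ operations catalogued in Section~\ref{graph_op_section}, but a careful statement is needed so that the invocation of Theorem~\ref{TA_sound} inside Lemma~\ref{sound} is truly legitimate after decompression. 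Once that bridge is in place, the theorem is immediate: \emph{correctness} follows by combining Lemma~\ref{sound}, Lemma~\ref{complete_sub1}, and Lemma~\ref{complete_sub2}.
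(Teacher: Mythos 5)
Your proposal is correct and follows essentially the same route as the paper, which proves the theorem by combining Lemma~\ref{complete_sub1} (success implies unifiability, via Lemma~\ref{sound} and the dag-solved form) with Lemma~\ref{complete_sub2} (failure implies non-unifiability, via Lemma~\ref{failure_lemma}). Your additional concern about the $SLP$ operations faithfully implementing the uncompressed tests is legitimate but is already discharged in the paper by Lemma~\ref{sound} and Section~\ref{graph_op_section}, so it does not constitute a gap.
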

\begin{proof}
Follows from Lemma~\ref{complete_sub1} and Lemma~\ref{complete_sub2}.
\end{proof}

\subsection{Failure Conditions}\label{failure_section}
Graph cycle checking is employed to detect failure conditions. We argue in this
section that if a cycle is found this corresponds
to a non-unifiable system. 

\begin{lemma}\label{failure_lemma}
A system $S$ in standard form is not unifiable if there exists a cycle
in any of the corresponding $\mathcal{LP}$ or $\mathcal{LD}$ graphs for that system.
\end{lemma}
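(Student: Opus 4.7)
The plan is to split the proof into the two cases (cycle in $\mathcal{LD}$, cycle in $\mathcal{LP}$) and in each case reduce to a result already established in the Tid\'en--Arnborg setting. In both cases the underlying fact that makes the argument go through is Theorem~\ref{TA_cycle}: one-sided distributivity is subterm-collapse free.

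For the $\mathcal{LD}$ case, I would show that every edge of $\mathcal{LD}$ is ``subterm-producing'': for each directed edge from $X$ to $Y$ and every $E$-unifier $\sigma$, $\sigma(Y)$ occurs as a proper subterm of $\sigma(X)$. For a downward edge this is immediate from the equation $X =^? X_1 + X_2$. For a lateral edge $X \stackrel{\pi}{\longrightarrow} Y$ it is immediate once $\pi$ is decompressed into the underlying chain $X =^? X_1 \times K_1, \; K_1 =^? X_2 \times K_2, \; \ldots, \; K_{n-1} =^? X_n \times Y$, exactly the unpacking already used in the proof of Lemma~\ref{sound}. For a relation edge from $X$ to a terminal $K_i$ of some $\pi$ with $X \stackrel{\pi}{\longrightarrow} Y$, the same decompression shows that $K_i$ appears as a left multiplicand in a $\succ_{l_*}$-descendant chain rooted at $X$, and hence $\sigma(K_i)$ is again a proper subterm of $\sigma(X)$. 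Composing these relations along any cycle produces a variable $X$ whose $\sigma$-image is equal modulo $E$ to a term that has $\sigma(X)$ itself as a proper subterm, contradicting Theorem~\ref{TA_cycle}.

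For the $\mathcal{LP}$ case, I would appeal directly to the observation already recorded in the paper that $\mathcal{LP}$ has the same vertex set (the $\sim_r$-equivalence classes) and the same $\succ_a$-induced edges as the original Tid\'en--Arnborg propagation graph $P(S)$. Thus a cycle in $\mathcal{LP}$ is a cycle in $P(S)$, and Theorem~\ref{TA_Results} (part~3), which is the form in which Tid\'en and Arnborg package Lemma~11 of~\cite{TidenArnborg87}, immediately yields non-unifiability. This is exactly the mechanism already exploited in Lemma~\ref{infinite}.

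The main obstacle lies in Case~1, in the handling of the relation edges. Because these edges are added solely as a cycle-detection device built from the terminals of each $SLP$, one must be sure they do not introduce spurious cycles: a relation edge from $X$ to a terminal $K_i$ appearing in the $SLP$ of a lateral path from $X$ must actually be justified by an uncompressed multiplication chain emanating from $X$ in which $K_i$ appears as a left argument. This requires a small induction on the depth of the $SLP$ generating the path label, tracing how rules $(x)$ and $(iv)$ build larger $SLPs$ bottom up out of the initial single-production $SLPs$ assigned to the label variables. Once that link is made precise, arbitrary composition of lateral, downward and relation edges remains subterm-producing, and the contradiction with Theorem~\ref{TA_cycle} closes the argument.
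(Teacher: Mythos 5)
Your proposal is correct and follows essentially the same route as the paper: a cycle in $\mathcal{LD}$ forces a unifier to make some variable's image a proper subterm of itself, contradicting Theorem~\ref{TA_cycle}, while a cycle in $\mathcal{LP}$ is identified with a cycle in the Tid\'en--Arnborg propagation graph and handled by Theorem~\ref{TA_Results}. The only difference is one of detail: you make explicit the subterm-producing nature of each edge type (in particular the relation edges, whose justification via the decompressed multiplication chain the paper leaves implicit, and which can also be discharged directly by the soundness of the rules in Lemma~\ref{sound}), whereas the paper states this step in one line.
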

\begin{proof}
We consider the following cases:\\
\underline{Case 1:} Assume that
the $\mathcal{LD}$ graph for a system $\mathcal{S}$ contains a
cycle. Then the cycle was created by zero or more applications of the
inference rules and implies that a variable is a proper subterm of
itself. By Theorem~\ref{TA_cycle} these cycles correspond to 
non-unifiable systems.

\noindent
\underline{Case 2:} Assume the the $\mathcal{LP}$ graph for a system
$\mathcal{S}$ contains a cycle. This implies there is a cycle between
the $\sim_r$ equivalence classes.  It is shown in~\cite{TidenArnborg87} 
that cycles between  $\sim_r$ equivalence classes of this form correspond 
to non-unifiable systems due to the need for an infinite unifier 
(see Theorem~\ref{TA_Results}).
\end{proof}

Therefore, if cycles are found we can safely conclude that the system
is not unifiable and return an error.

Finally, one could ask if some infinite systems (of Lemma~\ref{infinite}) that
are found in the algorithm of~\cite{TidenArnborg87} could be missed by 
the current algorithm due to not creating the same number of new variables.
This is shown not to be the case in the following lemma.
\begin{lemma}\label{catch_inf_systems}
Cycles in the sum propagation graph of~\cite{TidenArnborg87} for a system
$S$ in standard form imply cycles in the $\mathcal{LP}$ graph for $S$.  
\end{lemma}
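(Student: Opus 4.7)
The plan is to establish that for any system $S$ in standard form, the sum propagation graph $P(S)$ of Tid\'{e}n and Arnborg and the path labeled propagation graph $\mathcal{LP}$ are identical as directed graphs, which makes the conclusion immediate: every cycle in $P(S)$ is, verbatim, a cycle in $\mathcal{LP}$.

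First I would verify that the vertex sets coincide. In Definition~\ref{orig_DG}, an $r_\times$-edge from $X$ to $Z$ in $D(S)$ exists precisely when $S$ contains an equation $X =^? Y \times Z$, which is exactly the defining condition for $X \succ_{r_*} Z$ in the general framework of Section~\ref{data_structs}. Hence the binary relation generated by $r_\times$-edges on variables and the relation $\succ_{r_*}$ on the $\mathcal{LD}$ graph of $S$ are literally the same relation. Taking the symmetric, reflexive and transitive closure preserves this identity, so the $\sim$-equivalence classes used as vertices of $P(S)$ coincide one-to-one with the $\sim_r^{}$-classes used as vertices of $\mathcal{LP}$.

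Next I would verify that the edges coincide. An edge of $P(S)$ from $[U]$ to $[V]$ exists iff there is an $l_+$- or $r_+$-labeled edge from some $u \in [U]$ to some $v \in [V]$ in $D(S)$, i.e., iff $S$ contains an equation of the form $u =^? v + w$ or $u =^? w + v$. By the definition of $\succ_a$, this is precisely the condition that yields an edge of $\mathcal{LP}$ from $[U]$ to $[V]$. So the two graphs have exactly the same edges between exactly the same vertices, hence are equal as directed graphs. A cycle in $P(S)$ therefore lifts unchanged to a cycle in $\mathcal{LP}$.

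I do not expect any serious obstacle, since the argument is essentially a definition-chase that the excerpt already signposts in the paragraph preceding Figure~\ref{lp_examp}. The only point requiring a touch of care is that the correspondence relies on $S$ being in standard form, so that every lateral edge of $\mathcal{LD}$ carries a single-variable label (no SLP compression has yet been introduced) and thus tracks the relation $\succ_{r_*}$ verbatim rather than through a compressed path; this matches the situation in which Tid\'{e}n and Arnborg define $D(S)$ and $P(S)$ for an abc-reduced system, so the two constructions are being applied to the same underlying combinatorial object.
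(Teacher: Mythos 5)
Your argument establishes only the easy half of what this lemma is meant to say, and it misses the case the paper's proof is actually about. Read the sentence immediately preceding the lemma: the worry being addressed is that the Tid\'{e}n--Arnborg sum propagation graph is maintained \emph{dynamically} while their algorithm runs, and in particular rule~(d) keeps creating fresh variables (the $W_1, W_2$ of the splitting rule) that Algorithm~\ref{alg_A_1} does not create. So a cycle might appear in $P(\cdot)$ only after some number of applications of rule~(d), among variables that simply do not exist in the $\mathcal{LD}$/$\mathcal{LP}$ graphs built by the new algorithm. Your definition-chase, which shows $P(S)$ and $\mathcal{LP}(S)$ coincide as directed graphs on the \emph{same} fixed system $S$, handles only the situation where the cycle is already present initially (or arises from the cancellation rules, which create no new variables); that much is already stated as an observation earlier in the section ("It can be seen that the propagation graph of~\cite{TidenArnborg87} and the $\mathcal{LP}$ graph are the same"), and if that were all the lemma claimed it would be redundant.

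The missing step is the one the paper's proof spends its effort on: if the cycle in $P(\cdot)$ is created by equations introduced by rule~(d), say $X =^{?} X_1 + X_2$ and $X =^{?} X_3 \times X_4$ with $X_1,\ldots,X_4$ fresh, then closing the cycle requires identifying the fresh variable $X_4$ with $X_2$; such an identification can only come from the cancellation rules (b)/(c), which forces the \emph{pre-existing} system to already contain a pair of equations of the shape $X =^{?} L_1 + L_2$, $X =^{?} L_3 \times L_2$ --- and that pair already yields a cycle in the $\mathcal{LP}$ graph of $S$. Without an argument of this kind (tracing a late-appearing cycle in the evolving Tid\'{e}n--Arnborg graph back to a cycle visible in the graph the new algorithm actually builds), your proof does not rule out that Algorithm~\ref{alg_A_1} silently accepts a system that the original algorithm would reject via its propagation graph, which is precisely the failure mode the lemma exists to exclude.
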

\begin{proof}
 Clearly if the cycle exists in the initial system or is created by one or more
applications of the cancellation rules (a)--(c) then the same cycle will
be created in the $\mathcal{LP}$ graph.\\
Thus, assume the the cycle is created by creating new equations by rule (d).
That is, by rule (d) the following equations are created
\[X =^{?} X_1 + X_2, ~X=^{?}X_3 \times X_4\] 
where $X_1, \ldots, X_4$ are newly created variables. But, then we need to equate
$X_4$ and $X_2$. Equating variables can only happen through rules (b) and (c)
and would require two pre-existing equations of the form 
$X=^{?}L_1 + L_2, ~X=^{?}L_3 \times L_2$ but this is already a cycle
in the $\mathcal{LP}$ graph.
\end{proof}

\subsection{Complexity}\label{complexity_section}
We establish the polynomial time bound in this section.

\begin{lemma}\label{equiv_classes}
The number of $\sim_r^{}$ equivalence classes never increases.
\end{lemma}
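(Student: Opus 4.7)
The plan is to inspect each inference rule of Algorithm~\ref{alg_A_1} and check its effect on two things: the node set of~$\mathcal{LD}$ and the $\succ_{r_*}$-edges of~$\mathcal{LD}$, since these two data determine the $\sim_r^{}$-equivalence classes. A new class can only be born if the rule introduces a fresh variable \emph{and} that variable is not placed, by the very same rule, into the $\sim_r^{}$-class of some pre-existing variable via a $\succ_{r_*}$-edge.

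First I would dispose of the rules that introduce no new variables. Rules~$(i)$, $(ii)$, $(iii)$, $(v)$, $(vi)$, $(ix)$ clearly do not add nodes, and either add no lateral edges or only produce $SLP$-equality/prefix constraints; hence $\succ_{r_*}$ can only shrink (through rule~$(0)$) or stay the same. Rules~$(iv)$ and~$(x)$ add a new lateral edge between two nodes that already exist in $\mathcal{LD}$; such an edge can only merge the $\sim_r^{}$-classes of its endpoints (or leave them alone if they already coincided), so the class count cannot grow. Rule~$(0)$ replaces one variable by another throughout $S$ and thus can only identify two existing classes, which again never increases the count.

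The key case is rule~$(viii)$, the only rule that introduces fresh variables $W_1$ and $W_2$. The same rule, however, simultaneously adds the lateral edges $U_1 \stackrel{\eta}{\longrightarrow} W_1$ and $U_2 \stackrel{\eta}{\longrightarrow} W_2$, giving $U_1 \succ_{r_*} W_1$ and $U_2 \succ_{r_*} W_2$. Since $U_1$ and $U_2$ are pre-existing nodes of $\mathcal{LD}$, each of $W_1, W_2$ is immediately placed into the $\sim_r^{}$-class of $U_1$ and $U_2$ respectively, rather than forming a class of its own. Rule~$(vii)$ is even easier: both $W_1$ and $W_2$ are pre-existing, and the new edges $U_1 \stackrel{\eta}{\longrightarrow} W_1$, $U_2 \stackrel{\eta}{\longrightarrow} W_2$ can only cause class mergers. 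Hence in neither case does the total count of $\sim_r^{}$-classes grow.

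Finally, I would address the loop in step~5 of Algorithm~\ref{alg_A_1}: when label variables get equated and the algorithm returns to step~1 to rebuild the graphs. Rebuilding from a system obtained by applying rule~$(0)$ to identify two variables is equivalent to performing that substitution on $\mathcal{LD}$ and recomputing the $\sim_r^{}$-classes, which only merges classes or leaves them unchanged. The main subtlety I expect is this rebuilding step together with rule~$(viii)$: one must be careful that the fresh $W_1, W_2$ are introduced with both their defining $\succ_{a}$ edge and their $\succ_{r_*}$ edges \emph{atomically}, so that at no intermediate moment are they singleton classes that only later get absorbed. This is guaranteed by reading rule~$(viii)$ as producing the four-equation conclusion as a single step, which is how it is stated in Figure~\ref{Gen_Inf_Rules}. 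Combining all cases, no rule of Algorithm~\ref{alg_A_1} can strictly increase the number of $\sim_r^{}$-equivalence classes, completing the argument.
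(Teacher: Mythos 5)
Your proposal is correct and its core is exactly the paper's argument: the paper's proof is a one-liner observing that rule $(viii)$ is the only rule that creates new variables and that these are contained in pre-existing $\sim_r^{}$-equivalence classes, which is precisely the key case you isolate (the $\succ_{r_*}$-edges $U_1 \stackrel{\eta}{\longrightarrow} W_1$ and $U_2 \stackrel{\eta}{\longrightarrow} W_2$ absorbing $W_1, W_2$ into the classes of $U_1, U_2$). Your additional case analysis of the remaining rules is a correct but routine elaboration of the same idea.
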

\begin{proof}
Rule $(viii)$ is the only rule that creates new variables but 
these variables are contained in pre-existing equivalence classes.
\end{proof}

\begin{lemma}\label{num_sinks}
The number of sinks in any equivalence class after processing is at most one. 
Besides, every non-sink node in the 
class has exactly one outgoing edge.
\end{lemma}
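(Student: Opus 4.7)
My plan is to prove the second assertion first --- that every non-sink node in the processed class has exactly one outgoing lateral edge --- and then derive the uniqueness of the sink from it via a path-connectivity argument.

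For the ``exactly one outgoing edge'' part, I would pick a non-sink node $X$ in the processed equivalence class. A non-sink has at least one outgoing lateral edge by definition, so only the upper bound needs work. Suppose for contradiction that $X$ has two distinct outgoing lateral edges $X \stackrel{\eta}{\longrightarrow} Y$ and $X \stackrel{\pi}{\longrightarrow} Z$. The compressed labels $\eta$ and $\pi$ fall into exactly one of four cases: (i) they generate equal strings; (ii) one is a proper prefix of the other; (iii) they have unequal length and neither is a prefix of the other; or (iv) they have equal length but are not equal. In case (i), rule $(iii)$ of Figure~\ref{Gen_Inf_Rules} is applicable (discharged by the Step~2 cleanup performed between successive processing rounds); in case (ii), rule $(iv)$ applies; in case (iii), rule $(vi)$ applies; in case (iv), rule $(v)$ applies. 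After Step~4's composite rule $(v+vi)(iv)^{!}(x)^{!}(viii)(vii)^{!}$ has been run to completion on the class and the Step~2 cleanup is applied, none of these rules can still fire. This forces $X$ to have at most --- and hence exactly --- one outgoing lateral edge.

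For the sink uniqueness, I would use an undirected-path argument. Suppose $S_1$ and $S_2$ are two distinct sinks inside the same $\sim_r$-equivalence class. Because $\sim_r$ is the reflexive, symmetric and transitive closure of $\succ_{r_*}$, there is a finite sequence $v_0 = S_1, v_1, \ldots, v_n = S_2$ with a lateral edge between every consecutive pair (in one direction or the other), which we may take to be simple. Since $v_0$ is a sink, the first edge must be oriented $v_1 \to v_0$; since $v_n$ is a sink, the last edge must be oriented $v_{n-1} \to v_n$. Recording the orientation at each step as ``backward'' or ``forward'' depending on whether the edge runs against or with the walk, the sequence begins with ``backward'' and ends with ``forward''. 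Thus at some intermediate index $i$ the orientations must flip from ``backward'' to ``forward'', and at that point $v_i$ has two outgoing lateral edges --- one to $v_{i-1}$ and one to $v_{i+1}$ --- contradicting the first part of the proof. Hence $S_1 = S_2$.

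The main obstacle is checking that the case split in the first part is genuinely exhaustive and that the Step~4 ordering really leaves no pair of distinct outgoing edges at any node of the current class. In particular, one has to verify that rule $(x)$, when applied from the sink back toward the source, replaces $X \stackrel{\pi}{\longrightarrow} Y$ by $X \stackrel{\pi\eta}{\longrightarrow} Z$ rather than duplicating $X$'s outgoing edge, and that rules $(vii)$ and $(viii)$ only introduce lateral edges into strictly lower $\sim_r$-classes (since the relevant $U_i$ and $W_i$ are $+$-children of $U$ and $W$, hence in classes below under $\gtrdot_r$). Once these bookkeeping points are discharged, the argument mirrors the proof of Lemma~\ref{Single_Sink} in the single-homomorphism setting.
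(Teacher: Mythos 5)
Your proof takes essentially the same route as the paper's: the paper likewise disposes of the ``at most one outgoing edge'' claim by appealing to rules $(ii)$--$(vi)$, though it does so in a single sentence. You add two things of value. First, the paper never actually derives sink uniqueness from the single-outgoing-edge property; your orientation-flip argument along a simple undirected path joining two putative sinks supplies exactly that missing step, and it is the right one (it is also the implicit content of Lemma~\ref{Single_Sink} in the single-homomorphism setting). Second, your explicit four-way case split on the pair of labels is more careful than the paper's assertion, and it surfaces a genuine imprecision in the published rule set: in your case (iv) --- equal-length, unequal labels, distinct targets $Y \neq Z$ --- rule $(v)$ of Figure~\ref{Gen_Inf_Rules} does not literally apply, because its premise requires both edges to end at the \emph{same} node, while rule $(vi)$ requires $|\eta| < |\pi|$ and rule $(iv)$ requires $\eta \prec \pi$; so no rule of the figure, read verbatim, fires on that configuration. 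The intended behaviour is clear from the soundness discussion (by cancellativity of $\times$ the labels must be equated position by position, after which rule $(iii)$ merges the targets), so this is a defect in the statement of the rules rather than in your argument, but you should not cite rule $(v)$ for that case without flagging the discrepancy. Your remaining bookkeeping points --- that rule $(x)$ replaces rather than duplicates the source's outgoing edge, and that rules $(vii)$/$(viii)$ only deposit lateral edges into $\gtrdot_r$-lower classes --- are both confirmed by the paper's own complexity lemmas. Finally, note that the paper treats the zero-sink case separately (no sink forces a $\succ_{r_*}$-cycle and hence failure by rule $(ix)$); since your claim is ``at most one'' this is not strictly needed, but it is why the lemma is only ever invoked for classes on which processing did not fail.
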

\begin{proof}
If there is no sink in the class, then 
this implies a cycle and thus a non-unifiable system. 
Therefore, let us assume there is no cycle and thus at least one sink. 
In addition, there must be at least one source node.  
It can be seen that rules $(ii)$, $(iii)$, $(iv)$, $(v)$ and
$(vi)$ ensure that all the nodes in the class have at most a single
outgoing edge.
\end{proof}

We now prove several small results about new variables and new lateral
edges that will be useful in the complexity result.

\begin{lemma}\label{var_add}
The maximum number of new variables added to the system $S$ 
is equal to twice the number of equivalence classes.
\end{lemma}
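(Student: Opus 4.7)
The plan is to identify the unique source of new variables and count how many times it can fire in total. By inspection of the ten inference rules in Figure~\ref{Gen_Inf_Rules}, only rule $(viii)$ introduces fresh variables, namely the pair $W_1, W_2$ in its conclusion; all other rules either delete equations, rename variables via rule $(0)$, add or split lateral edges with pre-existing labels (rules $(iv)$, $(vii)$, $(x)$), or fail (rule $(ix)$). Hence each application of rule $(viii)$ contributes exactly two fresh variables, and this is the only way to grow the variable set of $S$.

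Next I would bound the number of firings of rule $(viii)$ per equivalence class. The composite rule in Step~4 of Algorithm~\ref{alg_A_1} has the form $(v+vi)(iv)^{!}(x)^{!}(viii)(vii)^{!}$, and the accompanying text states explicitly that rule $(viii)$ is applied \emph{once} during the processing of a single equivalence class. Combined with Lemma~\ref{equiv_classes}, which guarantees that the number of $\sim_r^{}$-equivalence classes never exceeds its initial value, a single pass of Steps 2--5 over all classes therefore adds at most twice as many variables as there are equivalence classes.

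The delicate step is accounting for the restarts in Step~5, where the algorithm may return to Step~1 after label variables have been equated. The argument I would give is that once rule $(viii)$ has fired for a pair $U \stackrel{\eta}{\longrightarrow} W$ with $U =_{}^? U_1 + U_2$, it creates the equation $W =_{}^? W_1 + W_2$ and the new edges $U_1 \stackrel{\eta}{\longrightarrow} W_1$, $U_2 \stackrel{\eta}{\longrightarrow} W_2$. After this, any subsequent pass over the same equivalence class finds $W$ already decomposed, so the premise needed for rule $(viii)$ is discharged by rule $(vii)$, which does not introduce any fresh variable. Moreover, by Lemma~\ref{label_vars_lemma_1}, fresh variables never migrate into the label-variable set $\mathcal{V}$, so restarts triggered by Step~5 do not reconfigure the $\sim_r^{}$-partition in a way that resurrects unprocessed $(viii)$-opportunities.

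The main obstacle I anticipate is rigorously showing that no rebuild of the $\mathcal{LD}$ and $\mathcal{LP}$ graphs in Step~1, triggered by equating label variables, can create a brand-new $U \stackrel{\eta}{\longrightarrow} W$ configuration inside an already-processed class that would cause rule $(viii)$ to fire a second time. This requires invoking Lemma~\ref{equiv_classes} together with the observation that label-variable identification only merges nodes within existing classes and does not produce new splittings of the form handled by rule $(viii)$. Once this invariant is established, the total count of fresh variables over the whole execution is bounded by $2$ times the (non-increasing) number of equivalence classes, which is the claim.
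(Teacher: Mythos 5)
Your proposal is correct and follows essentially the same route as the paper's proof: rule $(viii)$ is the only variable-creating rule, it fires at most once (adding two fresh variables) per class processed, and by Lemma~\ref{equiv_classes} the number of $\sim_r^{}$-equivalence classes never grows, giving the bound of twice the number of classes. The paper phrases the ``once per class'' step via Lemma~\ref{num_sinks} (a unique sink per class) and the blunt assertion that ``a class is only processed one time,'' whereas you correctly flag that Step~5 restarts can reprocess classes and resolve this by observing that after the first firing the sink's decomposition $W =_{}^? W_1 + W_2$ already exists, so rule $(vii)$ rather than $(viii)$ applies on subsequent passes --- a subtlety the paper's own proof glosses over, and a welcome strengthening of the argument.
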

\begin{proof}
Rule $(viii)$ is the only rule that can add variables and this rule can
only add two variables for each sink. By Lemma~\ref{num_sinks} there
is a single sink for each class. By Lemma ~\ref{equiv_classes} the
number of equivalence classes never increases. Rule $(viii)$ adds two new
variables to a lower class for each sink in the upper class.  A class
is only processed one time, thus the number of variables that can be added
is double the number of equivalence classes.
\end{proof}

\begin{lemma}\label{eqdges_to_a_class}
Let $\left[ X \right]_r$ be a $\sim_r^{}$-equivalence class.
Assume there exist $K$ $\sim_r^{}$-equivalence classes one level above
$\left[ X \right]_r$ by the $\gtrdot_r$ ordering. Denote 
the $K$ classes as $C_1, C_2, \ldots, C_K$ and assume that each class
$C_i$ contains $n_i$ variables, such that $N_{K} = \sum_{i=1}^{K} n_i$. 
Then the total number of lateral edges added to 
$\left[ X \right]_r$ by the $K$ higher classes 
is $\leq 2N_{K}^{}$. 
\end{lemma}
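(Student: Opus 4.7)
The plan is to count, for each of the $K$ upper classes $C_i$, the number of new lateral edges whose creation is triggered by processing $C_i$, and show this is at most $2 n_i$; the claim then follows by summing. First I would check which rules of Figure~\ref{Gen_Inf_Rules} can produce a lateral edge landing in a strictly lower $\sim_r^{}$-class. The path-completion rules $(iv)$ and $(x)$ take as input lateral edges whose endpoints are already $\sim_r^{}$-related to each other, so the new lateral edge they output stays inside the same equivalence class; hence they do not add edges to $[X]_r$ from a higher class. Consequently, the only rules that can contribute a fresh lateral edge to $[X]_r$ from a strictly higher $C_i$ are the two splitting rules $(vii)$ and $(viii)$, and they can only do so while $C_i$ is being processed.

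Next I would use the processing schedule $(v+vi)(iv)^!(x)^!(viii)(vii)^!$ of step~4, together with Lemma~\ref{num_sinks}. After the $(iv)^!(x)^!$ phase, $C_i$ has a unique sink $W$, and every other node possesses a single outgoing lateral edge that, via a lateral path, leads to $W$. Rule $(viii)$ then fires at most once, splitting $W$ into fresh $W_1,W_2$ and creating two new lateral edges $U_1 \stackrel{\eta}{\longrightarrow} W_1$ and $U_2 \stackrel{\eta}{\longrightarrow} W_2$ for the particular $U$ at which it is applied. The exhaustive application $(vii)^!$ afterwards fires once for each other $U \in C_i$ with a $+$-decomposition $U =^? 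U_1 + U_2$, installing exactly the analogous pair of new lateral edges, where $\eta$ is the compressed label of the unique lateral path from $U$ to $W$. By rule $(i)$ applied in the cleanup phase, every $U$ carries at most one $+$-decomposition, so the number of splitting events while processing $C_i$ is at most $n_i$, producing a total of at most $2 n_i$ new lateral edges across all lower classes combined, and hence in particular at most $2 n_i$ that land in $[X]_r$.

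Summing over the $K$ upper classes yields $\sum_{i=1}^{K} 2 n_i \;=\; 2 N_K$, which is the advertised bound. The one subtlety I would flag is the restart mechanism in step~5: rules $(v)$ or $(vi)$ may equate label variables and cause Algorithm~\ref{alg_A_1} to return to step~1, after which $C_i$ is reprocessed from a rebuilt graph. Here I would appeal to Lemma~\ref{label_vars_lemma_2}, which bounds the number of restarts by $|\mathcal{V}_0|$, combined with the observation that the rebuild only identifies variables (rule~$(0)$ never splits them), so the $n_i$ of each class cannot grow across restarts. The per-pass estimate of $2 n_i$ per upper class therefore remains valid for the stabilized class sizes appearing in the statement. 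The main obstacle is not the counting itself but verifying that this per-pass argument composes correctly under restarts, and the two structural observations above take care of it.
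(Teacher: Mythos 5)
Your proof is correct and follows essentially the same route as the paper's: identify rules $(vii)$ and $(viii)$ as the only sources of lateral edges propagated into a lower class, bound the number of their applications during the processing of each upper class $C_i$ by roughly $n_i$ (the paper counts the $n_i-1$ lateral edges into the unique sink, you count the nodes carrying a $+$-decomposition), and sum to get $2N_K$. The extra paragraph on restarts is a reasonable additional precaution that the paper's own proof does not bother with, but it does not change the argument.
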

\begin{proof}
Processing each $C_i$ will produce $n_i - 1$ edges in $C_i$ each
connecting one node to the sink of that class. If each of these edges
is propagated down by rule $(vii)$ 
or rule $(viii)$ each class could add a
total of $2(n_i - 1)$ edges to $\left[ X \right]_r$. Doing the sum we
get that the $K$ higher classes could add no more then $2N_{K} -2K$
edges to $\left[ X \right]_r$.
\end{proof}

\begin{lemma}\label{edges_add}
The maximum number of lateral edges added to any
$\sim_r^{}$-equivalence class of a system $S$ in standard form is $\mathcal{O}(V_{0}+M)$, 
where $V_{0}$ is the initial number of variables in
$S$ and $M$ is the number of equivalence classes.
\end{lemma}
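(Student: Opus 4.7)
\medskip
\noindent\textbf{Proof proposal.}
The plan is to bound the lateral edges added to a fixed class $[X]_r$ by summing the contributions guaranteed by Lemma~\ref{eqdges_to_a_class} over all the classes that sit immediately above $[X]_r$ in the $\gtrdot_r$ ordering, and then use Lemma~\ref{var_add} to bound the total number of variables that can ever appear in those higher classes.

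First I would argue that every lateral edge added to $[X]_r$ is introduced by rule $(vii)$ or rule $(viii)$ applied during the processing of some class strictly higher than $[X]_r$ in $\gtrdot_r$. Inspecting rule~$(vii)$ and rule~$(viii)$ shows that the newly created edges $U_1\stackrel{\eta}{\longrightarrow}W_1$ and $U_2\stackrel{\eta}{\longrightarrow}W_2$ land inside $[X]_r$ only when the sink $W\in C_i$ of some class $C_i$ immediately above $[X]_r$ splits as $W=W_1+W_2$ with $W_1,W_2\in[X]_r$. Hence every edge added to $[X]_r$ must be accounted for by one of the \emph{immediately} higher classes, and nothing from higher up can contribute directly.

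Next I would let $C_1,\dots,C_K$ be exactly the classes immediately above $[X]_r$, with $n_i=|C_i|$ and $N_K=\sum_{i=1}^{K}n_i$. Lemma~\ref{eqdges_to_a_class} then gives the bound $\leq 2N_K-2K\leq 2N_K$ on the total number of lateral edges added to $[X]_r$. Since the $C_i$ are pairwise disjoint as variable sets, $N_K$ is at most the total number of variables present in the system at any point. By Lemma~\ref{var_add}, rule~$(viii)$ can add at most $2M$ new variables overall, and no other rule adds variables, so the total variable count is at most $V_0+2M$. Substituting yields that the number of lateral edges added to $[X]_r$ is at most $2(V_0+2M)=\mathcal{O}(V_0+M)$.

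The main subtlety I would be careful about is the ``only immediately higher'' claim: one has to check that indirect effects (e.g., a class two levels up processed first, creating new variables and new sinks that then cascade down) do not evade this accounting. This is handled by the observation that any cascade passes through a class one level above $[X]_r$, whose variable count is already included in $N_K$, so the total is still absorbed by the $V_0+2M$ global budget. No delicate calculation is needed beyond combining the two earlier lemmas; the rest is simply noting that $2(V_0+2M)\in\mathcal{O}(V_0+M)$.
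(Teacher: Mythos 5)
Your proof is correct and takes essentially the same route as the paper, whose own proof simply says ``follows from Lemma~\ref{eqdges_to_a_class}''; you additionally spell out the step the paper leaves implicit, namely bounding $N_K$ by the total variable count $V_0+2M$ via Lemma~\ref{var_add}. Your care about edges arriving only from immediately higher classes is a sensible sanity check but does not change the argument.
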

\begin{proof}  
Follows from Lemma~\ref{eqdges_to_a_class}. 
\end{proof}

These last two lemmas give a bound on the number of edges added to a class
from an outside class, using rules $(vii)$ and $(viii)$. 
We now need to consider the edges added to a class during the processing
of the class itself. 

\begin{lemma}
Let $\left[ X \right]_r$ be a $\sim_r^{}$-equivalence class. 
The number of lateral edges in $\left[ X \right]_r$ does not increase during 
application of step (2) or step (4).
\end{lemma}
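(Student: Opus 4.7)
The plan is to verify, rule by rule, that none of the inference rules applied during step~(2) or during a single invocation of step~(4) on $[X]_r$ itself can produce a new lateral edge whose endpoints both lie in $[X]_r$. For step~(2), which exhaustively applies $(0+i+ii+iii)$, the check is immediate: rule~$(0)$ merges two variables, and in graph terms this can only coalesce or remove edges, never introduce them; rule~$(i)$ operates solely on $+$-equations and hence only on downward edges; rules~$(ii)$ and $(iii)$ are cancellation rules that each consume two lateral edges sharing a common endpoint and output one (plus a node identification), strictly decreasing the lateral-edge count whenever they fire.

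For step~(4) processing $[X]_r$, the composite rule is $(v+vi)(iv)^!(x)^!(viii)(vii)^!$. Rules $(v)$ and $(vi)$ merely emit $SLP$-constraints of the form $\pi_1 \ueq \pi_2$ or $\pi_1^{}\,{\prec}_{}^?\,\pi_2^{}$ and do not alter the graph at all. Rule~$(iv)$ consumes one lateral edge $X' \stackrel{\eta}{\longrightarrow} Y'$ and produces one new lateral edge $Y' \stackrel{\eta^{-1}\pi}{\longrightarrow} Z'$ while preserving $X' \stackrel{\pi}{\longrightarrow} Z'$; similarly rule~$(x)$ consumes $X' \stackrel{\pi}{\longrightarrow} Y'$ and produces $X' \stackrel{\pi\eta}{\longrightarrow} Z'$ while preserving $Y' \stackrel{\eta}{\longrightarrow} Z'$. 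Each of these two rules therefore has net change zero on the lateral-edge count, and since lateral edges correspond to $\succ_{r_*}$-relations they stay within a single $\sim_r$-class, so the net-zero change is localized entirely within $[X]_r$.

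The splitting rules $(vii)$ and $(viii)$ are the only rules that genuinely create new lateral edges, namely $U_1 \stackrel{\eta}{\longrightarrow} W_1$ and $U_2 \stackrel{\eta}{\longrightarrow} W_2$ given a decomposition $U \ueq U_1 + U_2$ together with $U \stackrel{\eta}{\longrightarrow} W$. But since $U \succ_a U_i$ (and similarly $W \succ_a W_i$, with $W_1, W_2$ pre-existing in rule~$(vii)$ and freshly introduced in rule~$(viii)$), each $U_i$ and each $W_i$ lies in a $\sim_r$-class strictly below $[X]_r$ under $\gtrdot_r$; hence the new lateral edges are all placed in strictly lower classes and contribute nothing to $[X]_r$. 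The principal subtlety to pin down will be this last bookkeeping step --- in particular, verifying for rule~$(viii)$ that the freshly created $W_1, W_2$ do not sneak back into $[X]_r$, which holds because their only $\succ_{r_*}$-memberships arise via $U_1$ and $U_2$, both of which already sit strictly below $[X]_r$.
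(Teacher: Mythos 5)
Your proposal is correct and follows essentially the same route as the paper's proof: a rule-by-rule check showing that the splitting rules $(vii)$ and $(viii)$ only deposit their new lateral edges in strictly lower $\sim_r^{}$-classes, while the edge-creating rules acting inside the class each delete an edge for every edge they add. Your version is in fact slightly more careful than the paper's, which only mentions rule $(iv)$ as the net-zero case and omits rule $(x)$; you correctly observe that $(x)$ also replaces the edge $X \stackrel{\pi}{\longrightarrow} Y$ by $X \stackrel{\pi\eta}{\longrightarrow} Z$ and is therefore likewise net-zero.
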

\begin{proof}
This follows from the set of inference rules. 
Rules $(vii)$ and $(viii)$ only 
create edges at a lower equivalence class.
The only rule creating a new edge inside the class is rule $(iv)$. 
But rule $(iv)$ also deletes an
edge, thus leaving the number of lateral edges unchanged.
\end{proof}

\begin{lemma}\label{step2_bound}
The number of inference rule applications used during a single application of 
step (2) of Algorithm~\ref{alg_A_1} is bounded  $\mathcal{O}(N + E)$,
where $N$ is the number of variables/nodes and $E$ the number of edges 
in the $\mathcal{LD}$-graph at the start of step (2). 
\ignore{
The number of inference rule applications used during a single application of 
step (2) of Algorithm~\ref{alg_A_1} is polynomially bounded by the number of
edges in the $\mathcal{LD}$-graph at the start of step (2). 
}
\end{lemma}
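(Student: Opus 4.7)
The plan is to argue that step (2) of Algorithm~\ref{alg_A_1} strictly reduces a simple potential function of the current system, so the number of rule firings is bounded by the initial value of this potential. Let $N$ and $E$ denote, respectively, the number of nodes and the total number of (lateral and downward) edges in the $\mathcal{LD}$-graph at the moment step~(2) begins, and let $\Phi$ denote $N + 2E$ evaluated at any point during the execution of step~(2) using the current values of $N$ and $E$.

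First I would verify the following basic monotonicity facts about the four rules involved in step~(2), namely $(0)$, $(i)$, $(ii)$, $(iii)$. None of these rules introduces a new variable (the only rule that does so is $(viii)$, which is not used in step~(2)), and none of them introduces a new edge of any kind (lateral edges are only produced by $(iv)$, $(vii)$, $(viii)$, $(x)$, and new downward edges are produced only by $(vii)$ and $(viii)$). Consequently, throughout step~(2) both $N$ and $E$ are nonincreasing.

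Next I would check the effect of each rule on $\Phi$:
rule~$(0)$ eliminates the equation $U =^? V$, replaces $V$ throughout, and so decreases $N$ by~$1$ (it may also merge or delete edges, which can only help);
rule~$(i)$ deletes the duplicate equation $U =^? U_3 + U_4$, removing the two downward edges out of~$U$ to $U_3$ and $U_4$, so $E$ drops by at least~$2$ (the two freshly-created variable equations live in $S$ but add no edges to the $\mathcal{LD}$-graph and will be processed by later firings of~$(0)$);
rules~$(ii)$ and $(iii)$ each remove one lateral edge, decreasing $E$ by at least~$1$ (the produced equation $X =^? Z$ or $Y =^? Z$ again adds nothing to the graph and is a future trigger for~$(0)$). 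Thus every application of one of the four rules strictly decreases $\Phi$ by at least~$1$, so the total number of firings during one execution of step~(2) is at most the initial value $N + 2E = \mathcal{O}(N+E)$.

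The only subtlety, and what I would be most careful about, is the accounting for rule~$(0)$ firings that are themselves triggered by the variable equations produced by $(i)$, $(ii)$, or $(iii)$. These new variable equations do not contribute to $E$, but each such firing of $(0)$ still strictly reduces $N$; since $N$ is bounded above by its initial value and is monotonically nonincreasing throughout step~(2), the total number of $(0)$-firings (original or induced) is at most $N$. Adding this to the at most $E$ firings of $(i)$, $(ii)$, $(iii)$ combined gives the desired $\mathcal{O}(N+E)$ bound.
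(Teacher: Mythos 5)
Your proposal is correct and is essentially the same argument as the paper's (one-sentence) proof, which simply observes that rules $(i)$--$(iii)$ are bounded by the number of edges and rule $(0)$ by the number of variables; your potential function $N+2E$ and the careful accounting of induced $(0)$-firings just make that observation explicit.
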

\begin{proof}
Clearly rules $(i)$ - $(iii)$ are linearly bounded by the number of
edges and $(0)$ by the number of variables.  
\end{proof}

\begin{lemma}\label{equiv_proc}
The number of inference rule applications used to process a single
equivalence class, $C_i$, (step (4) of Algorithm~\ref{alg_A_1}) 
is bounded by $\mathcal{O}(N_i * L_i)$ where $L_i$ is the number of lateral edges and $N_i$ the number nodes 
in the class $C_i$ being processed.
\end{lemma}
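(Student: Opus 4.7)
The plan is to bound the number of applications of each inference rule in the composite rule $(v+vi)(iv)^{!}(x)^{!}(viii)(vii)^{!}$ separately, and then sum. I would track how many fresh rule-firings each of rules $(iv)$, $(v)$, $(vi)$, $(vii)$, $(viii)$, $(x)$ can contribute on a fixed class $C_i$ with $N_i$ nodes and $L_i$ lateral edges (counting edges present at the start of processing together with any created during processing).

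First I would dispatch the easy rules. Rules $(v)$ and $(vi)$ produce a label-variable equation that triggers Step~1 by Step~5, so they fire at most $O(1)$ times within one class-processing. Rule $(viii)$ is applied at most once by the algorithm's specification. Rule $(vii)$ requires a lateral edge $U \stackrel{\eta}{\longrightarrow} W$ together with $+$-decompositions of both $U$ and $W$; since the system is in standard form each variable has at most one $+$-decomposition, so each lateral edge enables at most one firing, giving $O(L_i)$. Rule $(x)$ is applied from the unique sink backwards (Lemma~\ref{num_sinks}), producing at most one concatenated edge per non-sink node, so at most $O(N_i)$ firings.

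The main obstacle is rule $(iv)$, which both depends on and produces lateral edges. The key observation is that a single application converts $\{X \stackrel{\eta}{\longrightarrow} Y,\; X \stackrel{\pi}{\longrightarrow} Z\}$ into $\{Y \stackrel{\eta^{-1}\pi}{\longrightarrow} Z,\; X \stackrel{\pi}{\longrightarrow} Z\}$, so the number of lateral edges in $C_i$ is \emph{preserved}: the edge $X \stackrel{\eta}{\longrightarrow} Y$ is effectively ``moved down'' to $Y \stackrel{\eta^{-1}\pi}{\longrightarrow} Z$, while $X \stackrel{\pi}{\longrightarrow} Z$ is kept. Since rule $(iv)$ is applied in the $\succ_{r_*}$ partial order from sources to sinks, each such move strictly lowers the source of the affected edge in this partial order, and the length of any $\succ_{r_*}$-chain inside $C_i$ is at most $N_i$. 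Hence each individual edge can be moved at most $N_i$ times; with at most $O(L_i + N_i) = O(L_i)$ edges ever present during $(iv)^{!}$ (rule $(x)$ contributes $O(N_i)$ additional edges but $(x)$ is applied after $(iv)^{!}$), the total number of $(iv)$-firings is $O(N_i\cdot L_i)$.

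Summing the bounds yields $O(1) + O(L_i) + O(N_i) + 1 + O(N_i \cdot L_i) = O(N_i \cdot L_i)$, as required. The only delicate point is the book-keeping for rule $(iv)$: one must verify both that it preserves edge count and that the edge movements are monotone along $\succ_{r_*}$, so the $N_i$-per-edge bound really holds; everything else reduces to counting edges or invoking the ``applied once per edge/node'' pattern.
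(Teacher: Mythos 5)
Your proposal is correct and follows essentially the same route as the paper: bound each rule of the composite $(v+vi)(iv)^{!}(x)^{!}(viii)(vii)^{!}$ separately, with the dominant $\mathcal{O}(N_i \cdot L_i)$ term coming from rule $(iv)$, using exactly the facts that $(iv)$ preserves the lateral-edge count and only pushes work downward along $\succ_{r_*}$. The lone difference is presentational: the paper charges at most $L_i$ applications of $(iv)$ to each of the $N_i$ nodes (via its notion of a ``$(iv)$-peak''), whereas you charge at most $N_i$ moves to each of the $L_i$ edges --- a dual counting of the same quantity.
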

\begin{proof}
Rules $(v)$ and $(vi)$ will equate label variables therefore by
Lemma~\ref{label_vars_lemma_2} the number of times they can be applied
is equal to the number of label variables.
Rule $(viii)$ can be applied at most once for each class.
Rule $(vii)$ can be applied once for each variable in the class.
Rule $(x)$ is applied by starting with the
sink node of the path and working back to the start node of the
path, thus if there are $l$ edges in the equivalence class
at the start of the application of rule $(x)$ it will be applied
at most $l$ times. In addition because at the start of the 
application of rule $(x)$ every node, but the sink, has at most
one outgoing lateral edge the number of application of rule
$(x)$ is also bound by $n-1$, where $n$ is the number of nodes in the class.   

Let us now consider rule $(iv)$. 
Let $l_i$ be the number of edges of the equivalence class
$C_i$ to be processed, including edges added from higher classes,
and let $n_i$ be the number of nodes in $C_i$. 
Let us also denote a node for which
rule $(iv)$ is applicable as a $(iv)$-peak.
That is a \textit{(iv)-peak} is a node, $X$, with two edges
leaving $X$ such that the inference rule $(iv)$ is satisfied.
Note that a single node can form more then one $(iv)$-peak.
Now, if we apply rule $(iv)$ to each node $X$ forming a $(iv)$-peak, 
based on the lateral edge partial ordering (by $\succ_{r_*}$), until $X$ is no longer
a peak we have removed $X$ from the set of nodes forming $(iv)$-peaks.
The number of times we can apply rule $(iv)$ to each node is bounded 
by the number of edges leaving that node.   
It can be seen
that each application of rule $(iv)$ removes a $(iv)$-peak but
could add a new $(iv)$-peak. But, the new $(iv)$-peak will be lower
in the $\succ_{r_*}$ path from the initial $(iv)$-peak node to the sink.
As each path must end in a sink the number of these new peaks is naturally
bounded by the length of the path. In addition, because rule $(iv)$ both
removes and adds and edge it can not increase the number of peaks.
Therefore, we can make the following worst case
assumption. Assume that rule $(iv)$ can be applied $l_i$ times
to each node. Then, as rule $(iv)$ will remove one node from the set of peaks
after $l_i$ applications the total number of applications
of rule $(iv)$ in $C_i$ is $\leq n_i * l_i$.    
\end{proof}

We have bounds on the number of classes, the number of new edges and
nodes, and the number of applications of the inference rules. Finally,
we need to bound the size of the $SLPs$. Recall Definition~\ref{slp_def}
for the size of a $SLP$.

\begin{lemma}\label{largest_slp}
The largest, in size, $SLP$ constructed 
by Algorithm~\ref{alg_A_1} on any unification problem~$S$ 
is $\mathcal{O}(|S|^{4})$ where $|S|$ is the initial number of equations. 
\end{lemma}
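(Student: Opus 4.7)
The plan is to bound the total number of SLP productions ever created during the execution of Algorithm~\ref{alg_A_1}, since productions are kept in a single shared pool (``we only keep a single copy of each unique SLP''), so the size of any individual SLP is bounded by the cardinality of that pool, and a cumulative bound on pool growth yields the stated bound on the largest SLP.

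First I would isolate the two rules that enlarge the pool. By Lemma~\ref{slp_concat_bound}, each application of rule $(x)$ contributes exactly one new production. By Lemma~\ref{slp_suffix_bound}, each application of rule $(iv)$ contributes at most $depth(I)+1$ new productions, where $I$ is the SLP being suffixed. All other rules act on $S$ (rule $(0)$), or produce equations / redirect existing lateral-edge labels without ever creating a fresh production (rules $(i)$--$(iii)$, $(v)$--$(ix)$), so the pool is untouched by them.

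Next I would count applications. For a single pass through Steps~2--4, Lemma~\ref{equiv_proc} together with Lemmas~\ref{var_add} and~\ref{edges_add} give $O(|S|)$ applications of $(x)$ in total and $O(|S|^2)$ applications of $(iv)$ in total; and by Lemma~\ref{label_vars_lemma_2} the outer restart loop of Step~5, which fires only when two label variables are equated, runs at most $|\mathcal{V}_0|=O(|S|)$ times. So over the entire execution $(x)$ is applied $O(|S|^2)$ times and $(iv)$ is applied $O(|S|^3)$ times.

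The hard part will be controlling the depth factor that weights the $(iv)$-contribution. Only $(x)$ increases depth (by $1$ per application) and $(iv)$ preserves it, so within a single pass through Steps~2--4 the maximum depth reachable is $O(|S|)$; hence one pass contributes at most $O(|S|) + O(|S|^2)\cdot O(|S|) = O(|S|^3)$ new productions, and summing over the $O(|S|)$ outer passes yields the claimed $O(|S|^4)$ bound. The naive global estimate, in which one bounds depth by the cumulative $(x)$-count $O(|S|^2)$ and multiplies by $O(|S|^3)$ applications of $(iv)$, gives a loose $O(|S|^5)$. Avoiding this overcounting by charging each $(iv)$-application a depth reflecting the current pass rather than the cumulative depth across all outer iterations --- equivalently, verifying that the concatenation tree produced within a single pass has depth bounded by the longest lateral chain encountered in that pass --- is the key estimate to justify carefully.
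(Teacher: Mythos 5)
Your proposal is correct and rests on the same two facts that drive the paper's proof: only rules $(x)$ and $(iv)$ enlarge the shared pool of productions, with $(x)$ contributing one production (Lemma~\ref{slp_concat_bound}) and $(iv)$ contributing at most $depth$-many (Lemma~\ref{slp_suffix_bound}), and only $(x)$ increases depth, by one per application. The bookkeeping, however, is organized differently. The paper bounds the productions created in a \emph{single} pass over the topologically ordered $\sim_r^{}$-classes, using a per-class chain argument (its ``Claim 1'') to bound the depth by $O(|S|)$ and bounding the $(iv)$-count by $M\cdot\max_i(n_i' l_i')=O(|S|^3)$, which already yields $O(|S|^4)$ without ever multiplying by the number of outer restarts. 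You instead aggregate globally: your per-pass count $O(|S|^2)$ for $(iv)$-applications is the tighter estimate $\sum_i N_iL_i \le (\max_i L_i)\sum_i N_i$ (valid by Lemmas~\ref{equiv_proc}, \ref{var_add} and~\ref{eqdges_to_a_class}), giving $O(|S|^3)$ new productions per pass, and you then multiply by the $O(|S|)$ outer iterations of Step~5. The point you flag as the delicate one --- that the depth charged to each $(iv)$-application must reflect only the current pass --- is settled by the algorithm itself: Step~1 rebuilds the $\mathcal{LD}$ graph from the system $S$, which only ever changes by variable replacement and hence remains in standard form, so every SLP is re-initialized as a single production and both the depth and the production pool reset at each restart. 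This observation is also why the paper can ignore the outer loop entirely; with it in place your argument closes, and your tighter per-pass accounting in fact shows that the largest SLP alive during any single pass is only $O(|S|^3)$, comfortably within the claimed $O(|S|^4)$.
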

\begin{proof}
 Assume that we have $M$ topologically sorted, by $\gtrdot_r$,
$\sim_r^{}$-equivalence classes,\\ 
$C_1, C_2, \ldots, C_M$, each
containing $l_1, l_2, \ldots, l_M$ lateral edges, for a total of $L$, and\\ 
$n_1, n_2, \ldots, n_M$, for a total of $N$, variables. 
In addition, let $l_i^{'}$ denote the number of lateral edges
in class $C_i$ at the start of processing and $n_i^{'}$ the
number of variables at the start of processing. $n_i^{'}$ and
$l_i^{'}$ may differ from $n_i$ and $l_i$ because nodes and edges can be added when processing classes above $C_i$.

We need to consider both rule $(x)$ and rule $(iv)$ as these are the 
rules that can add new grammar productions and create larger $SLPs$.
Recall two facts about these two rules, given two $SLPs$  $I = (\Sigma, N_I, P_I)$ and
$J = (\Sigma, N_I, P_J)$.
\begin{enumerate}
 \item For rule $(x)$, creating the new $SLP$ $K$, by Lemma~\ref{slp_concat_bound}
$ | K | = |P_I \cup P_J| + 1 $ and 
$ depth(K) \leq max \left\lbrace  depth(I), depth(J) \right\rbrace  +  1 $. 
\item For rule $(iv)$, creating the new $SLP$ $K$, by Lemma~\ref{slp_suffix_bound}
$ | K | \leq |I| + depth(I) $ and $ depth(K) \leq depth(I) $.
\end{enumerate}
For the analysis we assume that for rule $(x)$  
$depth(K) = max \left\lbrace  depth(I), depth(J) \right\rbrace  +  1$
and for rule $(iv)$ $ | K | = |I| + depth(I) $.
Therefore, rule $(x)$ adds just one new grammar production 
and rule $(iv)$ adds
$depth(I)$ new grammar productions. 
We want to give a bound on the grammar productions created at each level
in the sort of classes and thus the
largest $SLP$ produced will be bounded by the total number of
\emph{unique} productions. 

\emph{Compute the Maximum Depth:}
First note that the $depth$ of any $SLP$ is only increased 
by rule $(x)$ and only by $1$.
Let us first examine the $depth$ of a $SLP$ in a class $C_i$.
Let $D_i$ be the $depth$ of the largest, in $depth$, $SLP$ in $C_i$
at the start of processing. Then, since the application of rule
$(x)$ is bound by $n-1$, where $n$ is the number of nodes in the class, 
the largest, in $depth$, $SLP$ produced in $C_i$ by rule $(x)$ is
\begin{equation}\label{eq_one}
(n_i^{'} -1 ) + D_i 
\end{equation}
where if $C_i$ is a source node in the $\gtrdot_r$ ordering, $D_i = 1$. 
Now assume there are $k$ classes, denoted as $C_{j_1}, \ldots, C_{j_k}$, above $C_i$ in the $\gtrdot_r$ ordering.
Thus, from the $i-1$ classes above $C_i$ in the sort, at the start of processing $C_i$, 
$k$ of them are related to $C_i$ by $\gtrdot_r$. As the classes not related to $C_i$
by $\gtrdot_r$ will not contribute any nodes or edges to $C_i$
we need only consider the $k$ classes.\\ 
\underline{Claim 1:}
\[D_i \leq \sum_{x=1}^{k}(n_{j_x}^{'} -1) +1 \]
\underline{Proof of Claim 1:}\\
First, there must exist at least one ``source'' class in the $k$ classes. By
the bound given in~(\ref{eq_one}), the more source classes we have the less $depth$ we add
as for each source class, $C_s$, $D_s = 1$. Thus, for the worst case analysis let us assume there
is only one source class, say $C_{j_1}$, from the $k$ classes.    
Second, for the worst case analysis when processing any one of the $k$ class 
we want to ensure we are always adding $depth$ to the previous largest, in $depth$, $SLP$.
Thus, assume that the $k$ classes form a chain, each
class adding the maximum number of productions to the largest, in depth, $SLP$ passed down
from the class above and then passing that new $SLP$ to the next class. The process starts
with class $C_{j_1}$ and ends at $C_i$, i.e., like a total ordering.
If we compute the depth of the final deepest $SLP$ in this chain,
using~(\ref{eq_one}) as a bound of the depth at each level, we obtain the following bound on maximum depth.
 \begin{equation}\label{eq_two}
\sum_{x=1}^{k}(n_{j_x}^{'} -1) +1
\end{equation}
\hfill $\Box$ 
\ignore{
Therefore, the depth of the largest $SLP$ at the start of processing $C_i$ is given by~\ref{eq_two}. }

\emph{Compute the Maximum Size:}
Now let us consider rule $(iv)$ on the same class $C_i$.
Lemma~\ref{equiv_proc} bounds the number of applications of rule $(iv)$ for any class based
on $l_i^{'}$ and $n_i^{'}$.  
From Lemma~\ref{eqdges_to_a_class}, we can make the worst case assumption, 
$l_i^{'} = l_i+2 \sum_{x=1}^{k}(n_{j_x} -1)$.
This results in a larger then worst-case bound
for the number of applications of rule $(iv)$ on the
equivalence class $C_i$: 
 \begin{equation}\label{eq_three}
n_i^{'}(l_i+2 \sum_{x=1}^{k}(n_{j_x} -1)) 
\end{equation} 
By Lemma~\ref{slp_suffix_bound}
rule $(iv)$ can add up to $depth(\pi)$ new grammar productions
when applied to a $SLP$ $\pi$. We can make a worst-case assumption that each time
rule $(iv)$ is applied the $SLP$ it is applied to has the maximum
depth. Therefore, combining the bound~(\ref{eq_three}) 
with~(\ref{eq_two}) 
we get the following bound on the number of new 
grammar productions rule $(iv)$ can add during processing of a class
$C_i$:
\begin{equation}\label{eq_four}
 n_i^{'}(l_i+2 \sum_{x=1}^{k}(n_{j_x} -1)) (\sum_{x=1}^{k}(n_{j_x}^{'} -1) +1) 
\end{equation}
We also have from~(\ref{eq_one}) 
that the number of new grammar productions
produced by rule $(x)$ during processing of class $C_i$ is bounded
by
\begin{equation}\label{eq_five}
n_i^{'} -1
\end{equation}
Let us make an additional worst-case assumption, that
each edge in each initial class contains a unique single production $SLP$.
This of course cannot happen as the initial number of unique $SLPs$
before processing for all classes combined is the number of label variables.
Combining this assumption about the unique starting edges with the number 
of applications of $(x)$ 
and the number of new grammar productions created by
$(iv)$ we get the following bound on the number of possible new grammar
productions added by the processing of a class $C_i$:
\begin{equation}\label{eq_six}
 n_i^{'} -1 + n_i^{'}(l_i+2 \sum_{x=1}^{k}(n_{j_x} -1)) (\sum_{x=1}^{k}(n_{j_x}^{'} -1) +1)
\end{equation}
Therefore, to get the total number we add up this value from each class from $1$ to $M$.
Lemma~\ref{var_add} implies that the total number of new variables added to the system
is $\leq 2M$, thus we can assume that $\sum_{x=1}^{k}(n_{j_x} -1) \leq (N+2M)$ and
$\sum_{x=1}^{k}(n_{j_x}^{'} -1) +1 \leq (N + 2M)$. Recall that $N$ is the total number
of initial variables. 
With these assumptions for any class $C_i$, $1 \leq i \leq M$:
\[n_i^{'} -1 + n_i^{'}(l_i+2 \sum_{x=1}^{k}(n_{j_x} -1)) (\sum_{x=1}^{k}(n_{j_x}^{'} -1) +1) \]
\[ \leq (N +2M) + (N+2M)(L+2(N+2M))(N+2M) \] 
Therefore, adding $M$ of these we get:
\[ M \left[  (N +2M) + (N+2M)(L+2(N+2M))(N+2M) \right]  \] 
Since the equations are in standard form there are at most $3$ variables per equation. This implies that $N$, $L$ and $M$ are
$\leq$ $3|S|$, where $|S|$ is the total number of equations. Therefore, 
we get the upper bound $\mathcal{O}(|S|^{4})$.
\end{proof}

\begin{definition}
Let $P_{slp}$ denote the largest polynomial which bounds the run-time for any of the required $SLP$ operations. 
This polynomial is in terms of the largest $SLP$, which by Theorem~\ref{largest_slp}
is $\mathcal{O}(|S|^{4})$.
\end{definition}

From~\cite{Lifshits07} we could assume that $P_{slp} = \mathcal{O}(n^3)$, where
$n$ is the size of the largest $SLP$. 

\begin{theorem}\label{complexity_result}
The worst-case running time of Algorithm~\ref{alg_A_1}
is $\mathcal{O}(|S|^{4} * P_{slp}(|S|^{4}))$, where $|S|$ is the initial number
of equations in standard.
\end{theorem}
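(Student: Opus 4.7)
The plan is to multiply the total number of inference-rule applications by the worst-case cost per rule application, which is dominated by a single SLP operation of size $\mathcal{O}(|S|^4)$ by Lemma~\ref{largest_slp}. All the combinatorial bookkeeping is already in place; I just need to stitch it together carefully across the two nested loops (the outer ``go back to Step~1'' triggered by equating label variables, and the inner ordered sweep over $\sim_r$-equivalence classes).

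First I would bound the number of times the outer loop can fire. By Lemma~\ref{label_vars_lemma_2}, the set of label variables $\mathcal{V}$ is non-increasing and bounded above by $|\mathcal{V}_0| \le 3|S|$, and the only way Step~5 sends control back to Step~1 is when rules $(v)$ or $(vi)$ equate two members of $\mathcal{V}$, strictly decreasing $|\mathcal{V}|$. Hence the outer loop iterates $\mathcal{O}(|S|)$ times.

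Next I would bound the work inside one outer iteration. By Lemma~\ref{equiv_classes} the number $M$ of $\sim_r$-equivalence classes never grows, and combining Lemma~\ref{var_add} with the fact that the input is in standard form we have $M, N \le 3|S|$ where $N$ counts all variables (initial plus those created by rule $(viii)$); by Lemma~\ref{edges_add} the number of lateral edges ever appearing in any single class is $\mathcal{O}(V_0 + M) = \mathcal{O}(|S|)$, so the total edge count $L$ is $\mathcal{O}(|S|^2)$. Step~4 processes each class exactly once per outer iteration, and by Lemma~\ref{equiv_proc} processing $C_i$ costs $\mathcal{O}(N_i L_i)$ rule applications; summing and using $\sum_i N_i \le N$ and $L_i \le L$ gives at most $\mathcal{O}(N \cdot L) = \mathcal{O}(|S|^3)$ rule applications for all Step~4 invocations in one outer iteration. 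The Step~2 cleanups between class processings cost $\mathcal{O}(N+E)$ each by Lemma~\ref{step2_bound}, for a total of $\mathcal{O}(M(N+L)) = \mathcal{O}(|S|^3)$ per outer iteration. Graph cycle checks (Lemma~\ref{graph_cycle_poly}) and the generation of relation edges are polynomial in the graph size and are absorbed into this bound.

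Multiplying the $\mathcal{O}(|S|)$ outer iterations by the $\mathcal{O}(|S|^3)$ rule applications per iteration gives $\mathcal{O}(|S|^4)$ rule applications overall. Each individual rule application performs at most a constant number of SLP manipulations --- concatenation (rule $(x)$), equality test (rules $(ii),(iii),(v)$), prefix test plus suffix extraction (rules $(iv),(vi)$), or one-pair mismatch extraction (rules $(v),(vi)$) --- and by Lemma~\ref{largest_slp} every SLP touched has size $\mathcal{O}(|S|^4)$, so each such operation costs at most $P_{slp}(|S|^4)$ time. Combining everything yields the claimed bound $\mathcal{O}(|S|^4 \cdot P_{slp}(|S|^4))$.

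The main delicate point --- and the one I would be most careful about --- is the interaction between the outer loop restart and the per-iteration accounting: after a restart, the graphs are rebuilt from the updated system $S$, but the number of \emph{variables} and \emph{equations} never grows (rule $(0)$ only merges variables, and the only fresh variables come from rule $(viii)$, which is applied at most once per equivalence class; these get absorbed into the $N \le 3|S|$ bound that already holds throughout). Once this invariant is made explicit, the per-iteration bound $\mathcal{O}(|S|^3)$ applies uniformly across all $\mathcal{O}(|S|)$ outer iterations, and the final product is the stated complexity.
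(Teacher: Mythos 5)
Your proposal is correct and follows essentially the same decomposition as the paper's proof: the outer loop is bounded by the non-increasing set of label variables ($\mathcal{O}(|S|)$ restarts), the inner work per restart is bounded by the number of equivalence classes times the per-class rule-application count from Lemmas~\ref{step2_bound} and~\ref{equiv_proc}, and each rule application is charged one SLP operation of cost $P_{slp}(|S|^4)$ via Lemma~\ref{largest_slp}. The only differences are cosmetic (slightly looser constants in your $N \le 3|S|$ and $L_i \le L$ bounds), and they do not affect the stated asymptotic result.
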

\begin{proof}
First let $V$ denote the initial number of label variables and let 
$M$ denote the initial number of equivalence classes. 
Lemma~\ref{label_vars_lemma_2} shows that $V$ does not increase and 
Lemma~\ref{equiv_classes} shows that $M$ does not increase. 

First we consider a general overview of the run time of the algorithm.
\begin{enumerate}
\item Since $V$ does not increase and each time the algorithm returns to 
step (1) it equates two label variables, thus decreasing $V$,  step (1) 
is applied at most $V$ times.
\item Since the algorithm process an equivalence class once and $M$ does
not increase the algorithm applies steps (2) though (5) a maximum of $M$ times.  
\end{enumerate}
Now let us assume that each step (1) through (5), has an associated polynomial,
$P_i$, $i \in \{ 1,2,3,4,5 \}$, which bounds the maximum run-time of that step
in terms of $|S|$ the initial number of equations in standard form.
Based on the general observations above we get the following polynomial, $\mathcal{P}$, 
which bounds the running time of  Algorithm~\ref{alg_A_1}
\begin{equation}\label{bound_gen}
 \mathcal{P} = V(P_{1} + M(P_{2} +P_{3} + P_{4})) 
\end{equation}

It remains to be shown that each $P_i$ is indeed a polynomial in $|S|$ which bounds
the run-time of step $i$. Before examining each step in more detail we present a few useful 
facts. 
\begin{itemize}
\item First, rule $(0)$ is the only rule that affects the initial set of equations and this rule only
equates two variables. Therefore, we can bound the run-time by $|S|$ without concern that $|S|$ will
increase.
\item Let $V_{0}$ denote the set of all variables in the initial standard form set of equations. Then, 
by the structure of the equations we can see that there are at most $3$ variables per equation and
$V_{0} \leq 3 * |S|$. This also implies that $V \leq 3 * |S|$ and $M \leq 3 * |S|$. 
\item It also easily follows that $L$ the number of lateral edges is bounded by $|S|$, in fact
$L \leq |S|$. 

\end{itemize}
We now consider each step in Algorithm~\ref{alg_A_1}.
\begin{enumerate}
\item Step (1), by standard graph construction methods, is bounded by $\mathcal{O}(V_{0} * |S|)$, which results 
in $P_1 \leq \mathcal{C}_1 * |S|^{2}$, for some constant~$\mathcal{C}_1$.
\item By Lemma~\ref{step2_bound} the number of  inference rules applied at step (2) is
$\mathcal{O}(N +E)$, where $N$ is the number of nodes and $E$ the number of edges in the
$\mathcal{LD}$ graph. Let $V_{0}$ denote the initial number of variables in $S$, then
 $N \leq V_{0} + 2*M$, since by Lemma~\ref{var_add} the maximum number of variables
 that can be added is twice the number of equivalence classes. Next, by
 Lemma~\ref{edges_add}
The maximum number of lateral edges added to any
equivalence class is $\mathcal{O}(V_{0}+M)$. 
We can thus conservatively say that the maximal number of lateral
edges is $\mathcal{O}(M(V_{0}+M))$.
Since the number of downward edges does not change we get the bound of
$C_2(V_{0} + 2*M) + M(V_{0}+M))$. Rewriting in terms of $|S|$,
$P_2 \leq \mathcal{C}_2 *(|S|^{2} + |S|^{2} + |S|^{2})* P_{slp}(|S|^{4})$, for some constant~$\mathcal{C}_2$.
\item Using standard graph cycle checking we get that
$P_3 \leq \mathcal{C}_3 * |S|$, for some constant~$\mathcal{C}_3$.
\item By Lemma~\ref{equiv_proc},
the number of rules applied for class $i$ is
$\mathcal{O}(N_i * L_i)$ where $L_i$ is the number of lateral edges and $N_i$ the number nodes 
in the class $i$ being processed. Thus for each class we get a run-time bound of $\mathcal{O}((N_i * L_i)* P_{slp}(|S|^{4}))$.
Rewriting in terms of $|S|$ we get $P_4 \leq \mathcal{C}_4 * |S|^{2} P_{slp}(|S|^{4})$, for some constant $\mathcal{C}_4$.
\end{enumerate}
Now plugging all these into Equation~(\ref{bound_gen}),  letting $\mathcal{C} = Max(\mathcal{C}_1, ~\mathcal{C}_2, ~\mathcal{C}_3, ~\mathcal{C}_4)$ 
and replacing $V$ and $M$ in terms of $S$ we get
\ignore{
\begin{equation}
\mathcal{P} \leq V((\mathcal{C}_1 * |S|^{2}) + M((\mathcal{C}_2 *(|S|^{2} + |S|^{2} + |S|^{2})* P_{slp}(|S|^{4}) ) 
+(\mathcal{C}_3 * |S| ) + (\mathcal{C}_4 * |S|^{2} * P_{slp}(|S|^{4}) ))) 
\end{equation}
Letting $\mathcal{C} = Max(\mathcal{C}_1, ~\mathcal{C}_2, ~\mathcal{C}_3, ~\mathcal{C}_4)$ 
and replacing $V$ and $M$ in terms of $S$ we get
}
\begin{equation}
\mathcal{P} \leq \mathcal{C} * (|S|^{3} + |S|^{4}* P_{slp}(|S|^{4})  +|S|^{3}  + |S|^{4} * P_{slp}(|S|^{4} ) )
\end{equation}
or 
\begin{equation}
\mathcal{O}( |S|^{4} * P_{slp}(|S|^{4} ) )
\end{equation}

\end{proof}

\section{On Asymmetric Unification and One-Sided Distributivity}\label{asym_unif_sec}
Our work on a polynomial bounded procedure was partially motivated 
by its potential application to cryptographic protocol analysis. Since our initial results~\cite{marshall12c}, 
a new unification paradigm has been developed in~\cite{Erbatur13a} and is based on newly identified requirements arising
from the symbolic analysis of cryptographic protocols.
The analysis involves the unification-based exploration of a space in which the states obey equational theories that can be expressed as a decomposition $R \uplus E$, where $R$ is a set of rewrite rules that is confluent, terminating and coherent modulo $E$. 
In order to apply state space reduction techniques, it is usually necessary for at least part of this state to be in normal form, and  to remain in normal form even after unification is performed.  This requirement can be expressed as an {\em asymmetric} unification
problem $\{ s_1 =^{\downarrow} t_1, ~\ldots, ~s_n =^{\downarrow} t_n \} $ where the $=^{\downarrow}$ denotes a unification problem with the  restriction that any unifier leaves the
right-hand side of each equation irreducible.

Let us review a few definitions needed for asymmetric unification problems. 
A \emph{rewrite rule} is an ordered pair $l \rightarrow r$ such that $l, r \in T(\Sigma, \mathcal{X})$ and $l \not\in \mathcal{X}$.
The rewrite relation on $T(\Sigma, \mathcal{X})$, written $t  \rightarrow_R s$, holds between $t$ and $s$ iff there 
exists a non-variable $p \in Pos_{\Sigma}(t)$, $l \rightarrow r \in R$ and a substitution $\delta$, such that $t |_p = l \delta$
and $s =t[r \delta]_p$.
The relation $\rightarrow_{R/E}$ on $T (\Sigma, \mathcal{X})$ is $=_{E} \circ \rightarrow_{R} \circ =_{E}$. 
The relation $\rightarrow_{R,E}$ on $T (\Sigma, \mathcal{X})$ is defined as: 
$t \rightarrow_{R,E} t'$ if there exists a position $p \in Pos_{\Sigma}(t)$, a rule $l \rightarrow r \in R$ and
a substitution $\delta$ such that $t|_{p} =_{E} l \delta$ and $t' = t[r \delta]_{p}$.
The transitive (resp. transitive and reflexive) closure of $\rightarrow_{R,E}$ is denoted by $\rightarrow^{+}_{R,E}$ 
(resp. $\rightarrow^{*}_{R,E}$). 
A term $t$ is $\rightarrow_{R, E}$ irreducible if there is no term $t'$ such that $t \rightarrow_{R,E} t'$. $t$ is then said to be a $R,E$-normal form (or just normal form).
 If $\rightarrow_{R,E}$ is confluent and terminating we denote the irreducible version of a term 
 $t$ by  $t \downarrow_{R,E}$.
\begin{definition}
We call $(\Sigma, ~E, ~R)$ a \emph{decomposition} of an equational theory $\Delta$ over
a signature $\Sigma$ if $\Delta = R \uplus E$ and $R$ and $E$ satisfy the following conditions:
\begin{enumerate}
\item $E$ is variable preserving, i.e., for each $s = t$ in $E$ we have $Var(s) = Var(t)$.
\item $E$ has a finitary and complete unification algorithm.
\item For each $l \rightarrow r \in R$ we have $Var(r) \subseteq Var(l)$.
\item $R$ is confluent and terminating modulo $E$, i.e., the relation
$\rightarrow_{R/E}$ is confluent and terminating.
\item $R$ is coherent modulo $E$, i.e., $\forall t_1, t_2, t_3$ if $t_1 \rightarrow_{R,E} t_2$ and 
$t_1 =_{E} t_3$ then $\exists ~t_4, t_5$ such that $t_2 \rightarrow^{*}_{R, E} t_4$, $t_3 \rightarrow^{+}_{R, E} t_5$, and
$t_4 =_{E} t_5$.  
\end{enumerate}
\end{definition}  

\begin{definition}
\textbf{(Asymmetric Unification)}. Given a decomposition $(\Sigma, E, R)$
of an equational theory, a substitution $\delta$
is an \emph{asymmetric $R,E$-unifier} of a set $\mathcal{S}$ of asymmetric equations
$\{ s_1 =^{\downarrow} t_1, ~\ldots, ~s_n =^{\downarrow} t_n \} $ iff for each
asymmetric equations $s_i =^{\downarrow} t_i$, $\delta$ is an $(E \cup R)$-unifier
of the equation $s_i =^{?} t_i$ and $(t_i \downarrow_{R,E})\delta$ is in $R,E$-normal
form. A set of substitutions $\Omega$ is a \emph{complete set of asymmetric} 
$R, E$-unifiers of $\mathcal{S}$ (denoted $CSAU(\mathcal{S})$) iff: (i) every member of $\Omega$ is an asymmetric 
$R,E$-unifier of $\mathcal{S}$, and (ii) for every asymmetric $R,E$-unifier $\theta$ of
$\mathcal{S}$ there exists a $\delta \in \Omega$ such that $\delta \leq_{E}^{Var(\mathcal{S})} \theta$.
\end{definition}

\begin{example}
Let $R = \{ X \oplus 0 \rightarrow X, ~X \oplus X \rightarrow 0 , ~X \oplus X \oplus Y \rightarrow Y \}$
and $E$ be the associativity and commutativity~(AC) axioms for $\oplus$.
Consider the equation $Y \oplus X =^{\downarrow} X \oplus a$.
The substitution $\delta_1 = \{ Y \mapsto a \}$ 
is an asymmetric solution since the right hand side will
remain irreducible after applying $\delta_1$. 
But, $\delta_2 = \{ Y \mapsto a, ~X \mapsto 0 \}$ 
is not an asymmetric unifier, 
although it is a unifier, 
since $0 \oplus a \rightarrow_{R, E} a$. 
\end{example}

We consider the one-sided distributivity theory in this new asymmetric setting. First, we need to present the
axioms as a theory decomposition. In this case the theory decomposition is simple.
Let  $\Delta= R \cup E$, where $R= \{ X \times (Y + Z) \rightarrow X \times Y + X \times Z \}$ and $E= \emptyset$.

One way of approaching the asymmetric unification problem is to start with the
symmetric unifiers and then try modifying them, if need be, into asymmetric
unifiers. Thus we could have first obtained the symmetric unifier using the
original Tid\'{e}n-Arnborg algorithm. This method looks feasible as far as
decidability is concerned, but instead we develop an algorithm where
failures can be detected much earlier.

In what follows we are going to assume that variables are 
always mapped to $R, \emptyset$-normal forms.
We can do this by assuming, without loss of generality, that all substitutions are $R, \emptyset$-normalized.

Based on $\Delta$, the following inference rules represent an asymmetric algorithm and are 
a simple modification of the original Tid\'{e}n-Arnborg algorithm to the new asymmetric domain. 
The soundness of the rules follow directly from the rules presented in Section~3. In addition,
since the asymmetric restriction does not affect the system being subterm collapse free, the error conditions of the original 
algorithm, and the graphs used to detect them, remain unchanged.  The only additional 
error conditions, rules (e) and (f), follow due to the rewrite rule, $R= \{X \times (Y + Z) \rightarrow X \times Y + X \times Z \}$ , 
which requires that we apply a reduction to the $\times$ rooted term.
Likewise, rules (e') and (f') would imply failure because a reduction
could be applied to a term with an irreducible restriction. 
We denote the algorithm, consisting of the following 
inference rules along with the error checking, as 
\emph{Algorithm 3}.

\begin{figure}[ht]
  \centering
\scalebox{.95}{
\fbox{
\begin{small}
\begin{tabular}{lcc}
(a) & $\qquad$ & $\vcenter{
\infer[\qquad \mathrm{if} ~ U ~ \mathrm{occurs ~ in} ~ \eq ]{\{U =_{}^{\downarrow} V\} \cup \, \{ U \mapsto V\}(\eq) }
      { \eq ~ \uplus ~\{U =_{}^{\downarrow} V\} }
}
$\\[+20pt]
(b) & & $\vcenter{
\infer[\qquad \mathrm{where} ~\circ ~\mathrm{is ~either} ~+ ~\mathrm{or} ~ \times ]{\eq ~ \cup ~ 
\{ U =_{}^{\downarrow} V \circ W, \; X =_{}^{\downarrow} V, \; Y =_{}^{\downarrow} W \}}
{\eq ~ \uplus ~ \{ U =_{}^{\downarrow} V \circ W, \; U =_{}^{\downarrow} X \circ Y \}}
}
$\\[+20pt]
(c) & & $\vcenter{
\infer{\eq ~ \cup ~ \{ U =_{}^{\downarrow} V \circ W, \; X =_{}^{\downarrow} V, \; Y =_{}^{\downarrow} W \}}
{\eq ~ \uplus ~ \{ U =_{}^{\downarrow} V \circ W, \; X \circ Y =_{}^{\downarrow} U   \}}
}
$\\[+20pt]
(d) & & $\vcenter{
\infer{\eq ~ \cup ~ \{ V \circ W =_{}^{\downarrow} U, \; X =_{}^{\downarrow} V, \; Y =_{}^{\downarrow} W \}}
{\eq ~ \uplus ~ \{ V \circ W =_{}^{\downarrow} U, \; X \circ Y =_{}^{\downarrow} U   \}}
}
$\\[+20pt]
(e) & & $\vcenter{
\infer{FAIL}
{\eq ~ \uplus ~ \{ U =_{}^{\downarrow} V \times W, \; U =_{}^{\downarrow} X + Y \}}
}
$\\[+20pt]
(e') & & $\vcenter{
\infer{FAIL}
{\eq ~ \uplus ~ \{ U =_{}^{\downarrow} V \times W, \; 
W =_{}^{\downarrow} X + Y \}}
}
$\\[+20pt]
(f) & & $\vcenter{
\infer{FAIL}
{\eq ~ \uplus ~ \{ U =_{}^{\downarrow} V \times W, \; 
X + Y =_{}^{\downarrow}  U \}}
}
$\\[+20pt]
(f') & & $\vcenter{
\infer{FAIL}
{\eq ~ \uplus ~ \{ U =_{}^{\downarrow} V \times W, \; 
X + Y =_{}^{\downarrow}  W \}}
}
$\\[+20pt]
(g) & & $\vcenter{
\infer{\eq ~ \cup ~ \{ V \times W =_{}^{\downarrow} U, \; W_1^{} + W_2^{} =_{}^{\downarrow} W , 
\; V \times W_1^{} =_{}^{\downarrow} X, \; V \times W_2^{} =_{}^{\downarrow}  Y \}}
{\eq ~ \uplus ~ \{ V \times W =_{}^{\downarrow} U, \; U =_{}^{\downarrow} X + Y \}}
}
$\\[+20pt]
(h) & & $\vcenter{
\infer{\eq ~ \cup ~ \{ V \times W =_{}^{\downarrow} U, \; W_1^{} + W_2^{} =_{}^{\downarrow} W  , 
\; V \times W_1^{} =_{}^{\downarrow} X, \; V \times W_2^{} =_{}^{\downarrow}  Y \}}
{\eq ~ \uplus ~ \{ V \times W =_{}^{\downarrow} U, \; X + Y =_{}^{\downarrow} U  \}}
}
$
\end{tabular}
\end{small}
} }
 \caption{Asymmetric Inference Rules for \emph{Algorithm 3}.}
  \label{Asym_Inf_Rules}
 \end{figure} 

In addition to error checking remaining the same, the soundness of the above procedure can be shown by 
showing each rule is sound and this follows since each rule is just an asymmetric instantiation of the 
sound symmetric rules presented in Section~\ref{TA_Section}. In addition, we can assume 
termination since the original algorithm is terminating. 

\ignore{
\begin{theorem}(Tid\'{e}n and Arnborg~\cite{TidenArnborg87})\label{TA_solutions}
\begin{itemize}
\item $U =^{?} X + Y ~\land ~U=^{?} V + W$ have the same set of solutions as\\
$U =^{?} X + Y ~\land ~X=^{?} V ~\land ~Y=^{?} W$.
\item $U =^{?} X \times Y ~\land ~U=^{?} V \times W$ have the same set of solutions as\\
$U =^{?} X \times Y ~\land ~X=^{?} V ~\land ~Y=^{?} W$. 
\item $U =^{?} X + Y ~\land ~ U=^{?} V \times W$ have the same set of solutions as\\
$\exists W_1, W_2:  U=^{?} V \times W ~\land ~ W=^{?}W_1 + W_2 ~\land~ 
X=^{?} V \times W_1 ~\land ~ Y=^{?} V \times W_2$.
\end{itemize}
\end{theorem}
}

In the following let $\circ$ denote either $+$ or $\times$.
\begin{lemma}\label{A2_rules1}
The set of equations \\
$\{ U =^{\downarrow} V \circ W, ~U=^{\downarrow} X \circ Y \}$\\
and the set of equations\\
$\{ U =^{\downarrow} V \circ W, ~ X \circ Y =^{\downarrow} U \}$\\
have the same asymmetric solutions as the set \\
$\{ U =^{\downarrow} V \circ W, ~X =^{\downarrow} V ,~ Y=^{\downarrow} W \}$.
\end{lemma}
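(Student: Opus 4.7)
The plan is to decompose the lemma into two independent pieces: equivalence of the three systems' underlying \emph{symmetric} unifier sets, and equivalence of the three right-hand side irreducibility conditions on those unifiers. Throughout, as noted just before the lemma, I assume substitutions range over $R,\emptyset$-normal forms.

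First I would appeal to Theorem~\ref{TA_sound}(1). Since the underlying symmetric equation $s =^{?} t$ of an asymmetric equation $s =^{\downarrow} t$ does not depend on the orientation of the equation, Theorem~\ref{TA_sound}(1) applies to all three systems and shows they share the same set of symmetric unifiers $\delta$, namely those satisfying $\delta(U) =_{R} \delta(V) \circ \delta(W) =_{R} \delta(X) \circ \delta(Y)$. Hence the task reduces to comparing the right-hand side irreducibility conditions over this common set of candidate $\delta$'s.

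Next I would carry out the comparison term by term. For $\{U =^{\downarrow} V \circ W,\; X =^{\downarrow} V,\; Y =^{\downarrow} W\}$ the requirement is that $(V \circ W)\delta$, $V\delta$ and $W\delta$ all be in $R,\emptyset$-normal form; since $V\delta$ and $W\delta$ occur as subterms of $(V \circ W)\delta$, and $R,\emptyset$-normal forms are closed under taking subterms, the latter two constraints are redundant and the requirement collapses to the single condition that $(V \circ W)\delta$ be normal. For $\{U =^{\downarrow} V \circ W,\; U =^{\downarrow} X \circ Y\}$ the requirement is that $(V \circ W)\delta$ and $(X \circ Y)\delta$ both be normal; these two terms are $=_{R}$-equal by the unification step, and $R$ being confluent and terminating forces two $=_{R}$-equal normal forms to be syntactically identical, so once more the requirement collapses to the same single condition. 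For $\{U =^{\downarrow} V \circ W,\; X \circ Y =^{\downarrow} U\}$ the requirement is that $(V \circ W)\delta$ and $U\delta$ be normal, and the identity $U\delta =_{R} (V \circ W)\delta$ together with both sides being normal again yields the same single condition.

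The only nontrivial ingredient is the subterm-closure of $R,\emptyset$-normal forms, which is immediate from the definition of $\rightarrow_{R,\emptyset}$: any rewrite at a position strictly inside a subterm lifts to a rewrite of the enclosing term, contradicting the supposed normality. No case split on $\circ \in \{+,\times\}$ is needed, since the collapsing argument uses only subterm closure and confluence. The potentially delicate point, should any arise, is simply the book-keeping that checks which irreducibility requirements become logically redundant once $\delta$ is already a symmetric unifier; once that is in place, the three-way equivalence follows.
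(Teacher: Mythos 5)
Your proof is correct and follows essentially the same route as the paper's: the symmetric unifier sets coincide by Theorem~\ref{TA_sound}, and the irreducibility restrictions all collapse to irreducibility of the instances of $V \circ W$, $V$ and $W$. The paper's own argument is only two sentences long; your version supplies the supporting details (subterm-closure of normal forms, the standing assumption that substitutions are $R,\emptyset$-normalized, and uniqueness of normal forms under confluence and termination) that the paper leaves implicit.
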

\begin{proof}
The fact that the equations have the same unifiers is a result of Theorem~\ref{TA_sound}.
Next note that the asymmetric restrictions are maintained since 
the instances of $V \circ W$, 
$V$ and $W$ must be irreducible. 
\end{proof}

\begin{lemma}\label{A2_rules2}
The set of equations\\
$\{ V \circ W=^{\downarrow} U, ~ X \circ Y =^{\downarrow} U \}$\\
has the same asymmetric solutions as the set \\
$\{ V \circ W=^{\downarrow} U, ~X =^{\downarrow} V, ~ Y=^{\downarrow} W \}$.
\end{lemma}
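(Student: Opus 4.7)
The plan is to follow the same template as the proof of Lemma~\ref{A2_rules1}: first obtain the coincidence of the underlying (symmetric) unifiers from Theorem~\ref{TA_sound} part~1, then verify that the asymmetric normal-form restriction is preserved in both directions. The crucial structural observation, distinguishing the present case from Lemma~\ref{A2_rules1}, is that every equation in \emph{both} systems has a \emph{variable} on its right-hand side, so each asymmetric side-condition is merely that $\delta$ sends some variable to a normal form.

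For the forward direction, I take an asymmetric $R,\emptyset$-unifier $\delta$ of $\{ V \circ W =^{\downarrow} U,\ X \circ Y =^{\downarrow} U \}$. By Theorem~\ref{TA_sound} part~1, $\delta$ is also a (symmetric) unifier of $\{ V \circ W =^{?} U,\ X =^{?} V,\ Y =^{?} W \}$. To upgrade this to an asymmetric unifier of the transformed set I need $\delta(U)$, $\delta(V)$ and $\delta(W)$ to be in $R,\emptyset$-normal form, and this is immediate from the convention stated at the beginning of Section~\ref{asym_unif_sec} that all substitutions are $R,\emptyset$-normalized. The backward direction is symmetric: any asymmetric unifier of the transformed set is in particular a symmetric unifier, hence by Theorem~\ref{TA_sound} a symmetric unifier of the original pair, and the only asymmetric restriction that must be rechecked, namely $\delta(U)$ in normal form, is already supplied by the equation $V \circ W =^{\downarrow} U$ shared by both systems.

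I do not anticipate any real obstacle. Unlike Lemma~\ref{A2_rules1}, where one had to argue that irreducibility of the compound right-hand side $\delta(V) \circ \delta(W)$ descends to its subterms $\delta(V)$ and $\delta(W)$, here the right-hand side of every equation involved is already a variable, so the normalization convention on substitutions dispatches all side-conditions at once. The resulting proof should therefore be little more than a two-line appeal to Theorem~\ref{TA_sound} part~1 together with that normalization convention.
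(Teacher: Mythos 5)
Your proposal is correct and follows essentially the same route as the paper: both derive the coincidence of symmetric unifiers from Theorem~\ref{TA_sound} and then discharge the asymmetric side-conditions $X =^{\downarrow} V$, $Y =^{\downarrow} W$ by noting that their right-hand sides are variables, which the standing convention maps to $R,\emptyset$-normal forms. Your explicit two-directional phrasing and the contrast with Lemma~\ref{A2_rules1} just make the same argument slightly more verbose.
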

\begin{proof}
The fact that the equations have the same unifiers is a result of Theorem~\ref{TA_sound}.
We can, without loss of generality, assume that all substitutions are $R, \emptyset$ 
normalized. This implies that variables are always mapped to 
$R, \emptyset$-normal forms and we
can apply an irreducibility restriction to them without restricting the solution space. This implies
the correctness of the last two equations $X =^{\downarrow} V, ~ Y=^{\downarrow} W$. 
\end{proof}

\begin{lemma}\label{A2_rules3}
The following sets of equations have no asymmetric $R, \emptyset$ solutions:\\
$\{ U =^{\downarrow} V \times W, ~U=^{\downarrow} X + Y \}$,\\
$\{ U =^{\downarrow} V \times W, ~X + Y=^{\downarrow} U \}$.
\end{lemma}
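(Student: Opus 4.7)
The plan is to exploit the structural mismatch between the top symbols of the two right-hand sides, combined with uniqueness of $R$-normal forms. Since $R = \{X \times (Y+Z) \to X\times Y + X\times Z\}$ is confluent and terminating, every $=_{R,\emptyset}$-equivalence class contains a unique normal form. The essential observation is that the only rewrite rule never introduces a $\times$ at the root of a term and never removes a $+$ from the root: applying the rule at a position $p$ rewrites a $\times$ at $p$ to a $+$ at $p$, so any reduction of a term rooted in $+$ occurs strictly below the root. Consequently $(s_1 + s_2)\downarrow_{R,\emptyset}$ is always $+$-rooted, while a term of the form $s_1 \times s_2$ is in $R,\emptyset$-normal form exactly when $s_1$ and $s_2$ are in normal form and $s_2$ is not $+$-rooted.

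For the first set $\{U =^{\downarrow} V \times W,\; U =^{\downarrow} X + Y\}$, I would suppose for contradiction that an asymmetric $R,\emptyset$-unifier $\delta$ exists; without loss of generality $\delta$ is $R,\emptyset$-normalized. The asymmetric condition on \emph{both} equations forces $(V\times W)\delta = \delta(V)\times\delta(W)$ and $(X+Y)\delta = \delta(X)+\delta(Y)$ to each be in $R,\emptyset$-normal form. Since $\delta$ is a unifier, both equal $\delta(U)$ modulo $R$ and hence are $=_{R,\emptyset}$-equivalent to one another. Uniqueness of normal forms then forces them to be syntactically identical, but one is $\times$-rooted and the other $+$-rooted: contradiction.

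For the second set $\{U =^{\downarrow} V \times W,\; X + Y =^{\downarrow} U\}$, the argument is analogous. The asymmetric conditions yield that $\delta(V)\times\delta(W)$ is in normal form (from the first equation) and that $\delta(U)$ is in normal form (from the second, which is automatic once $\delta$ is normalized). As before these are two $=_{R,\emptyset}$-equivalent normal forms, so $\delta(U) = \delta(V)\times\delta(W)$, which is $\times$-rooted. On the other hand $\delta(U) =_{R,\emptyset} \delta(X)+\delta(Y)$ with $\delta(U)$ already in normal form, so $\delta(U) = (\delta(X)+\delta(Y))\downarrow_{R,\emptyset}$, which by the observation above is $+$-rooted. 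The same root-symbol contradiction closes this case.

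There is no serious obstacle; the only point requiring care is that the asymmetric condition constrains the \emph{entire} right-hand side of each equation to be irreducible, not merely the values $\delta$ assigns to the variables occurring in it. Once this is made explicit, confluence of $R$ and preservation of the root $+$ under reduction do all of the work.
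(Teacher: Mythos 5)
Your proof is correct and follows essentially the same idea as the paper's (one-sentence) argument: the orientation of $R$ means a $\times$-rooted term can only become $+$-rooted by being reduced, which the irreducibility constraint forbids. You have merely spelled this out rigorously via uniqueness of $R,\emptyset$-normal forms and preservation of the root $+$ under rewriting, which is a faithful elaboration rather than a different route.
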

\begin{proof}
This is due to the orientation of $R$ which requires a reduction in the $\times$-rooted
equation in order to move a $+$ to the top.
\end{proof}

\begin{lemma}\label{A2_rules4}
The set of equations\\
$\{ V \times W=^{\downarrow} U, ~ U=^{\downarrow} X + Y \}$\\
and the set of equations\\
$\{V \times W=^{\downarrow} U, ~ X + Y=^{\downarrow} U  \}$\\
have the same asymmetric solutions as the set\\
$\{V \times W=^{\downarrow} U, ~W_1 + W_2 =^{\downarrow} W,
 ~ V \times W_1 =^{\downarrow} X, ~V \times W_2 =^{\downarrow} Y \}$, where
$W_1$ and $W_1$ are fresh variables.
\end{lemma}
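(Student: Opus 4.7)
The plan is to leverage Theorem~\ref{TA_sound} part~2, which already establishes the corresponding equivalence at the level of symmetric unifiers, and then show that the irreducibility conditions imposed by $=^{\downarrow}$ transfer correctly in both directions. Throughout, we use the assumption (made just before stating Algorithm~3) that substitutions map variables to $R,\emptyset$-normal forms, which lets us reason about the root symbols of $\delta(V)$, $\delta(W)$, etc. Note also that since $+$ is not the root of any left-hand side of $R$, a term $s + t$ is in normal form iff both $s$ and $t$ are.

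For the direction $(\Leftarrow)$, suppose $\delta$ is an asymmetric unifier of the third set. By Theorem~\ref{TA_sound} part~2, $\delta$ is a (symmetric) unifier of both $\{V\times W=^{?} U,\; U=^{?} X+Y\}$ and $\{V\times W=^{?} U,\; X+Y=^{?} U\}$. It remains to check the irreducibility conditions. The equation $V\times W=^{\downarrow} U$ already appears in the third set, so $\delta(U)$ is irreducible. For the right-hand side $\delta(X+Y) = \delta(X)+\delta(Y)$ of the first set, irreducibility of both $\delta(X)$ and $\delta(Y)$ is given by the third set, and by the observation above their sum is also irreducible. Thus $\delta$ is an asymmetric unifier of both the first and second sets.

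For the direction $(\Rightarrow)$, suppose $\delta$ is an asymmetric unifier of the first set (the argument for the second set is identical). Then $\delta(U) = \delta(X)+\delta(Y)$ is irreducible, and $\delta(V\times W) =_{R} \delta(U)$. The main point is that the root symbol of $\delta(W)$ \emph{must} be $+$: if it were not, then $\delta(V)\times\delta(W)$ would already be in normal form (since $\delta(V),\delta(W)$ are normal forms and no $R$-rule fires at the root) and could not equal the $+$-rooted term $\delta(U)$. Write $\delta(W) = w_1 + w_2$ with $w_1,w_2$ in normal form. Then $\delta(V\times W) \to_{R} \delta(V)\times w_1 + \delta(V)\times w_2$, and since further $R$-reductions take place only at $\times$-subterms while the root is already $+$, we must have $\delta(V)\times w_i \to_R^{*} \delta(X)$ for $i=1$ and $\delta(Y)$ for $i=2$. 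Now extend $\delta$ to $\delta'$ by setting $\delta'(W_i) = w_i$. All four equations of the third set are then verified directly, and the irreducibility of $\delta'(W), \delta'(X), \delta'(Y), \delta'(U)$ follows from the data of $\delta$ together with the choice of $w_1,w_2$.

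The only non-routine step is the argument that $\delta(W)$ must be $+$-rooted, which is the key structural observation that makes the extension of $\delta$ to the fresh variables $W_1,W_2$ possible; this is the step I would write out most carefully. Everything else is a bookkeeping check using Theorem~\ref{TA_sound} part~2 and the fact that $+$ is collapse-free and not itself a redex root.
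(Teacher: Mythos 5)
Your proposal is correct and follows essentially the same route as the paper: invoke Theorem~\ref{TA_sound} part~2 for the equivalence of (symmetric) unifiers, then use the standing assumption that substitutions map variables to $R,\emptyset$-normal forms to see that the irreducibility constraints (all of which sit on variables or on $+$-rooted terms over normal forms) transfer in both directions. The paper's own proof is just two sentences; your writeup supplies the details it leaves implicit, in particular the $+$-rootedness of $\delta(W)$ and the explicit extension of $\delta$ to the fresh variables $W_1,W_2$.
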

\begin{proof}
The fact that the equations have the same unifiers is a result of Theorem~\ref{TA_sound}.
In addition, since we can assume that variables are 
mapped to $R, \emptyset$-normal forms and 
all substitutions are normalized we a free to place an irreducibility restriction on a variable
without reducing the set of solutions.
\end{proof}

Recall that in addition to the two failure rules we maintain the two graphs used to detect
failure in the original  symmetric algorithm. 

\begin{lemma}\label{A2_errors_correct}
The error conditions of Algorithm 3 are correct.
\end{lemma}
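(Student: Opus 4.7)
The plan is to split the error conditions into two disjoint groups and handle each separately, since they arise for different reasons. The first group consists of the two graph-based conditions inherited verbatim from the Tid\'{e}n--Arnborg algorithm (cycles in the dependency graph $D(S)$ and cycles in the propagation graph $P(S)$). The second group consists of the new asymmetry-specific failure rules (e), (e'), (f), (f') in Figure~\ref{Asym_Inf_Rules}.

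For the first group, I would observe that the asymmetric annotation $=^{\downarrow}$ does not alter the underlying equational theory: if a symmetric unifier does not exist, no asymmetric unifier can exist either, since every asymmetric $R,\emptyset$-unifier is in particular an $(E \cup R)$-unifier. Therefore the non-unifiability arguments already established carry over. Specifically, cycles in $D(S)$ reduce to Theorem~\ref{TA_cycle} (one-sided distributivity is subterm-collapse free, so cycles witness a variable forced to equal a proper superterm of itself), and cycles in $P(S)$ reduce to Theorem~\ref{TA_Results}(2--3) and the discussion preceding Lemma~\ref{infinite} (they force the need for an infinite unifier). I would simply cite these results and note that neither argument uses anything more than the structure of the symmetric theory.

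For the second group, I would argue directly from the shape of $R = \{X \times (Y+Z) \to X \times Y + X \times Z\}$. For rules (e) and (f), a system containing both $U =^{\downarrow} V \times W$ and an equation that equates $U$ (on either side) to $X + Y$ is covered by Lemma~\ref{A2_rules3}: any $(E \cup R)$-unifier $\delta$ satisfies $\delta(V \times W) =_{\Delta} \delta(X+Y)$, but since $R$ is confluent and terminating and rewrites $+$ \emph{out} of any $\times$-rooted term, the asymmetric constraint forces $\delta(V \times W)$ to be in normal form while simultaneously being $=_{\Delta}$-equal to a $+$-rooted normal form, which is impossible. For (e') and (f'), the argument is the same one level down: if $W$ is constrained to be the sum $X+Y$ in normal form, then $\delta(V \times (X+Y))$ matches the left-hand side of the single rewrite rule and reduces, contradicting the irreducibility constraint on $\delta(V \times W)$.

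The main obstacle I anticipate is the rules (e') and (f'), because there the failure is not visible in a single equation but only in the combination of two asymmetric equations whose interaction activates the rewrite rule. The careful point to make is that asymmetric unifiers are assumed $R,\emptyset$-normalized (as stated just before Figure~\ref{Asym_Inf_Rules}), so the instance $\delta(W)$ really is $\delta(X) + \delta(Y)$ in normal form, and hence $\delta(V)\times(\delta(X) + \delta(Y))$ is a redex; this yields the contradiction cleanly. Once this observation is in place, all four asymmetric failure rules are justified, and together with the symmetric graph-based arguments the lemma follows.
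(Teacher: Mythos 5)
Your proposal is correct and follows essentially the same route as the paper: the graph-based conditions are justified by noting that the asymmetric restriction preserves simplicity and the non-termination detection of the symmetric theory, and the failure rules (e)/(f) are justified by Lemma~\ref{A2_rules3}. In fact your explicit redex argument for (e') and (f') is slightly more careful than the paper, which only covers those two rules informally in the text preceding Figure~\ref{Asym_Inf_Rules} and cites Lemma~\ref{A2_rules3} (which formally addresses only (e) and (f)) in the proof itself.
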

\begin{proof}
This follows from Lemma~\ref{A2_rules3} and the fact that
adding the asymmetric restriction does not change the fact that the theory is still 
simple, i.e., cycles imply failure. Thus, the use of the graph based method to detect cycles is still valid.
In addition, it has been shown in~\cite{TidenArnborg87} that systems that cause non-termination are not unifiable and are detectable via the sum propagation graph method. 
Since adding irreducibility constraints does not increase the types of systems which cause non-termination and the systems are still detectable via the graph method, the use of a propagation graph to detect all non-terminating systems is still correct.
\end{proof}

\begin{theorem}\label{A2_correct}
Asymmetric $R, \emptyset$ unification is decidable.
\end{theorem}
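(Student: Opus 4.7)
The plan is to argue that Algorithm~3, consisting of the inference rules of Figure~\ref{Asym_Inf_Rules} together with the cycle-based failure detection inherited from the Tid\'{e}n--Arnborg procedure, is a sound, complete and terminating decision procedure for the asymmetric $R,\emptyset$-unification problem. Decidability then follows immediately.

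First I would establish \emph{soundness}: every rule of Figure~\ref{Asym_Inf_Rules} either preserves the set of asymmetric $R,\emptyset$-unifiers or correctly reports failure. For the non-failure rules this is exactly the content of Lemma~\ref{A2_rules1}, Lemma~\ref{A2_rules2} and Lemma~\ref{A2_rules4}, which cover rules (b)--(d) and rules (g)--(h); rule (a) is just variable replacement and is trivially sound under the asymmetric interpretation since substitution commutes with the irreducibility predicate on normalized substitutions. For the failure rules (e), (e$'$), (f), (f$'$) soundness is Lemma~\ref{A2_rules3} applied to each orientation of the problematic equation; for the graph-based failure detection it is Lemma~\ref{A2_errors_correct}, which reuses the subterm-collapse-freeness of one-sided distributivity together with the sum propagation graph analysis of~\cite{TidenArnborg87}. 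Soundness therefore amounts to a rule-by-rule invocation of the lemmas already proved.

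Next I would address \emph{termination}. Here I would appeal directly to the termination of the Tid\'{e}n--Arnborg procedure (Theorem~\ref{TA_Results}(1)): the shape of the asymmetric rules is identical to that of the symmetric rules, the only changes being (i) the replacement of $=^{?}$ by $=^{\downarrow}$ and (ii) the splitting of each symmetric rule into the two orientations that the asymmetric constraint distinguishes. Since the rewrite system on the underlying multiset of \emph{unoriented} equations induced by Algorithm~3 is a refinement of the one induced by the original algorithm (each asymmetric step projects to at most one symmetric step, and failure steps strictly shrink the system), any well-founded measure that works for Tid\'{e}n--Arnborg after exhaustive application of the cancellation rules, combined with the finite bound on applications of the splitting rule guaranteed by cycle detection in the propagation graph, carries over. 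The main thing to verify is that the orientation-tracking does not introduce new infinite derivations; this is where I expect the most careful bookkeeping, because rules (d), (g), (h) can produce equations with a flipped orientation and one must check that this does not defeat the termination measure.

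Finally, for \emph{completeness} I would argue that whenever no rule of Figure~\ref{Asym_Inf_Rules} applies and no failure has been detected, the resulting system is in dag-solved form and its unique most general symmetric unifier is already asymmetric. The first claim is standard: exhaustion of rules (a)--(d) yields a standard-form system with distinct left-hand sides and no cycles in $D(S)$ or $P(S)$, which is precisely dag-solved form. The second claim is the crucial new point: after exhaustive application of the failure rules, no $\times$-rooted equation shares a left-hand side with a $+$-rooted equation in either orientation, so the mgu produced by reading off the dag-solved form maps every right-hand side marked with $=^{\downarrow}$ to a term in $R,\emptyset$-normal form. Combined with soundness, this shows that the algorithm returns success iff a complete asymmetric unifier exists, giving a decision procedure. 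The main obstacle, as noted above, is verifying termination in the presence of the orientation asymmetry, and more specifically checking that the interaction of rules (d) and (g)/(h) with rule (a) does not revive the splitting rule infinitely often on the same equivalence class --- an argument that ultimately reduces to the one given for rule (d) in~\cite{TidenArnborg87} because the asymmetric constraint only \emph{restricts}, never \emph{expands}, the set of applicable rewrites.
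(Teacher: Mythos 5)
Your proposal is correct and follows essentially the same route as the paper: it decomposes the argument into soundness of the rules (via Lemmas~\ref{A2_rules1}, \ref{A2_rules2}, \ref{A2_rules3} and \ref{A2_rules4}), correctness of the failure/cycle detection (Lemma~\ref{A2_errors_correct}), and termination inherited from the Tid\'{e}n--Arnborg procedure, concluding that the resulting solved form yields an asymmetric unifier. Your version is more detailed on the termination bookkeeping and on why the final dag-solved form respects the irreducibility constraints, but the underlying structure and the lemmas invoked are the same as in the paper's proof.
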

\begin{proof}
Lemma~\ref{A2_errors_correct} shows that
if a system is not asymmetrically unifiable it will be detected by one or more of the failure rules or graphs. 
In addition, Lemma~\ref{A2_errors_correct} shows that non-terminating systems
will also be detected implying Algorithm 3 terminates. 

Lemma~\ref{A2_rules1}, Lemma~\ref{A2_rules2} and Lemma~\ref{A2_rules4} show that
Algorithm 3 transforms a system into a solved-form maintaining the set of solutions.
This implies that the substitution constructed from the final solved form is an 
asymmetric solution to the initial problem. 
\end{proof}

\begin{theorem}\label{A2_CSAU}
Algorithm 3 produces a complete set of asymmetric unifiers. 
\end{theorem}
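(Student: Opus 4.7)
The plan is to derive completeness by combining the solution-preservation lemmas already established with a standard dag-solved-form argument. Along any successful derivation I would show that the set of asymmetric $R,\emptyset$-unifiers is invariant modulo the fresh variables introduced by rules (g) and (h); the terminal system is in dag-solved form, and its canonical most general unifier $\delta$ is itself asymmetric. Then the universal property of the solved-form mgu delivers the completeness clause of $CSAU$.

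First I would prove invariance by induction on the derivation length. The non-failure rules fall into three groups whose solution-preservation is exactly Lemmas~\ref{A2_rules1}, \ref{A2_rules2}, and \ref{A2_rules4}; for the splitting rules (g) and (h), an asymmetric unifier $\theta$ of the pre-rule system extends uniquely to the fresh $W_1, W_2$ because $\theta(W_1) + \theta(W_2) = \theta(W)$, together with the new equations $V \times W_1 =^{\downarrow} X$ and $V \times W_2 =^{\downarrow} Y$, pins the split down in $R,\emptyset$-normal form (using that $\theta$ is already $R,\emptyset$-normalized). No completeness obligation is lost at the failure rules: by Lemma~\ref{A2_rules3} and Lemma~\ref{A2_errors_correct}, any system to which a failure rule applies, or whose dependency/propagation graph contains a cycle, has no asymmetric unifier whatsoever, so the existential clause of $CSAU$ is vacuous for such inputs.

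Next, by Theorem~\ref{A2_correct} the algorithm terminates; in the success case the final system $\mathcal{S}_f$ is in dag-solved form, in the same sense as for the symmetric procedure (Theorem~\ref{TA_Results}(4)). Let $\delta$ be its canonical most general idempotent unifier. A routine case analysis on the equations $U =^{\downarrow} t$ remaining in $\mathcal{S}_f$ shows that $\delta(t)$ is $R,\emptyset$-irreducible: otherwise $\delta(t)$ would contain a redex $V' \times (W'_1 + W'_2)$, which would force an equation of the shape forbidden by rule (e), (e'), (f), or (f') to still be present in $\mathcal{S}_f$, contradicting successful termination. Hence $\delta$ is an asymmetric $R,\emptyset$-unifier of $\mathcal{S}_f$, and therefore an output candidate.

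Completeness now follows immediately. Given any asymmetric unifier $\theta$ of the input $\mathcal{S}_0$, invariance produces an asymmetric unifier $\theta'$ of $\mathcal{S}_f$ agreeing with $\theta$ on $Var(\mathcal{S}_0)$; since $\delta$ is the mgu of the dag-solved $\mathcal{S}_f$ in the ordinary sense, there is a $\tau$ with $\delta\tau(X) =_{\Delta} \theta'(X)$ for every $X \in Var(\mathcal{S}_f)$, whence $\delta \leq_{\Delta}^{Var(\mathcal{S}_0)} \theta$. The main obstacle I anticipate is precisely the irreducibility verification in the previous paragraph: one must argue that asymmetric constraints attached to equations introduced by rules (g) and (h) really do survive all subsequent transformations in a form still detectable by the failure rules, so that $\delta$ cannot secretly create a redex that is invisible in $\mathcal{S}_f$. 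I expect this to reduce to a small invariant, maintained by each non-failure rule, stating that every right-hand side of a $=^{\downarrow}$-equation is a variable, a constant, or a $+$-rooted (respectively $\times$-rooted) term whose top-level factors are themselves variables not constrained to be $+$-rooted by any other equation of the system.
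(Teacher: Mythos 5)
Your proposal is correct and follows essentially the same route as the paper: both rest on the solution-preservation lemmas (Lemmas~\ref{A2_rules1}, \ref{A2_rules2}, \ref{A2_rules4}) together with the fact that the terminal system is in dag-solved form, whose canonical idempotent unifier is then more general than any asymmetric unifier of the input. The paper's version is just more compact---it takes the witnessing substitution $\tau$ to be $\theta$ itself and verifies $X\delta_{S'}\theta =_{\Delta} t_i\theta =_{\Delta} X\theta$ directly, delegating the check that $\delta_{S'}$ is itself asymmetric to Theorem~\ref{A2_correct}, whereas you spell out that verification and the extension of $\theta$ to the fresh variables explicitly.
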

\begin{proof}
Consider an asymmetric problem, $S$, and its solved form, $S'$ produced by Algorithm~3.
Let $\delta_{S'}$ denote the substitution obtained from $S'$ in the standard way. 
Recall that a substitution obtained from a dag solved form is idempotent, i.e., $\delta_{S'} = \delta_{S'} \delta_{S'}$.
Let $\theta$ be an asymmetric solution to $S$ and let $X \in Var(S)$.

\begin{enumerate}
\item If $X \not \in Var(S')$,  $X \delta_{S'} = X$ and $X \delta_{S'} \theta = X \theta$.
\item If $X \in Var(S')$, then there are two cases.
\begin{enumerate}
\item  $X \delta_{S'} \mapsto X $, in which case  $X \delta_{S'} \theta = X \theta$.
\item $X \delta_{S'} \mapsto t_i$, for some term $t_i$. This implies there is an equation in $S'$ of the
form $X =^{\downarrow} t_i$. Recall Lemma~\ref{A2_rules1}, Lemma~\ref{A2_rules2} and Lemma~\ref{A2_rules4} show that
Algorithm 3 transforms a system into a solved-form maintaining the set of solutions.
Thus, $X \theta =_{\Delta} t_i \theta$. 
This implies that $X \delta_{S'} \theta =_{\Delta}   t_i \theta =_{\Delta}  X \theta$. \qedhere
\end{enumerate}
\end{enumerate}

\end{proof}
\noindent Therefore, we can obtain an asymmetric unification algorithm by modifying the original symmetric algorithm. This new algorithm has the following 
beneficial characteristics:
\begin{itemize}
 \item Much as the original  algorithm of~\cite{TidenArnborg87}, this new algorithm is conceptually easy to grasp,
 and easy to implement.
 \item Again, like the  Tid\'{e}n and Arnborg algorithm, the new asymmetric
 algorithm should perform well computationally on most problem instances,
 since it is unlikely a problem will have the structure needed to force
 the exponential behavior.    
 \end{itemize} 
\subsubsection*{Complexity}
\begin{definition} \label{sigma_2} 
For $n \ge 0$, let $\sigma'(n)$ be the set of equations
\begin{eqnarray*}
X_{1^{i+1}}+X_{1^{i}2} &  =^{\downarrow}  & X_{1^{i}}, \\
Y_{2^{i}1}+Y_{2^{i+1}} &  =^{\downarrow}  & Y_{2^{i}}, \\
T \times X_{1^{i}2} &  =^{\downarrow}  & Y_{2^{i}1},\\
T \times Y &  =^{\downarrow}  & X, \\
X_{1^{i+2}}+X_{1^{i+1}2} &  =^{\downarrow}  & X_{1^{i+1}}  
\end{eqnarray*}
for all $0 \le i \le n$, where $X_{l^{i}}$ denotes $i$ concatenations of $l \in \{1,2\}$, i.e.,
$X_{1^{3}2} = X_{1112}$. 
\end{definition}
A simple modification to the $\sigma(n)$ definition 
(see Definition~\ref{sigma_2}) again results in a family of equations, this time asymmetric, which cause exponentially many applications of the
inference rules. The new definition, $\sigma'(n)$, simply places the irreducibility restriction on the variables
which are already irreducible with respect to the rewrite rule. 
Since we can assume, without loss of generality, that substitutions are fully reduced via $R, \emptyset$-rewriting, the
irreducibility restrictions will not be violated. Therefore, the action of the new algorithm does not
change, in terms of complexity, from the action of the original algorithm on $\sigma(n)$.

\subsubsection*{An Open Problem: Polynomial-Time Decision Algorithms for Asymmetric Unification modulo One-Sided Distributivity}
The polynomial algorithm developed in Section~\ref{general_section} relies 
on the use of $SLPs$ to ensure the polynomial bound. The $SLP$ compression
method can be used because the critical information, path labels, are maintained by the compression method. In addition, there are polynomial bounded algorithms for answering the required questions regarding $SLP$
compressed strings. However, when asymmetric unification is considered 
we are forced to also keep track of the irreducibility restriction.
This information would unfortunately be lost in the current 
compression method. The current compression scheme used in Algorithm~\ref{alg_A_1} would need to be modified, to maintain the irreducibility constraints, before the algorithm could be applied to the asymmetric case.

Therefore, a polynomial time asymmetric algorithm based on compression is still an open problem. There are a couple of possible approaches:
\begin{enumerate}
\item Develop a method of encoding the irreducibility restriction into
the same $SLPs$. This seems like it may be possible, but it also requires ensuring the $SLP$ algorithms used in Section~\ref{slp_ops} can be applied,
in polynomial time, to these new encodings.
\item Use a different compression method. This may also be possible, for example perhaps using the methods developed in~\cite{gascon11}. 
Again, we would need to ensure all the operations used in Section~\ref{slp_ops} could be done on the new compression method in polynomial time.  
\end{enumerate}

\section{Conclusions}
Three problems are solved in this paper:
\begin{enumerate}
\item We have developed a new polynomial 
time algorithm which solves the \emph{decision} problem for
a non-trivial subcase, based on a typed theory, of 
unification modulo one-sided distributivity.
This subcase happens to be sufficient to express the negative 
complexity result in~\cite{narendran10}. The
new algorithm is conceptually easy to understand and more efficient than the new algorithm solving the general problem.
\item We developed the first polynomial time algorithm which
solves the \emph{decision} problem for unification modulo 
one-sided distributivity. 
\item We developed the first algorithm that solves the asymmetric
unification problem for unification modulo 
one-sided distributivity. That is, the algorithm produces the most general asymmetric unifier. 
Although this new algorithm is not polynomial, it is conceptually 
easy to grasp and easily implemented. In addition, it should perform 
well computationally on most problem instances.   
\end{enumerate}
Although the focus of this paper is on decision procedures and complexity we can note that all the algorithms
presented in the paper compute unifiers. In the asymmetric case a complete set of unifiers can be obtained from 
the computed solved forms. In the compressed case, a resulting solved form is actually a 
compressed representation of a unifier. 

\section{Acknowledgments}
We thank the reviewers for their helpful comments, especially their 
suggestions for simplifying Algorithm~\ref{alg_A_1}.
  

\bibliographystyle{plain} 
\bibliography{TA3_4}

\begin{thebibliography}{10}

\bibitem{Baader98}
Franz Baader and Tobias Nipkow.
\newblock {\em {Term rewriting and all that}}.
\newblock Cambridge University Press, New York, NY, USA, 1998.

\bibitem{BaaderSnyder}
Franz Baader and Wayne Snyder.
\newblock Unification theory.
\newblock In John~Alan Robinson and Andrei Voronkov, editors, {\em Handbook of
  Automated Reasoning}, pages 445--532. Elsevier and MIT Press, 2001.

\bibitem{Burckert89}
Hans-J\"{u}rgen B\"{u}rckert, Alexander Herold, and Manfred Schmidt-Schau{\ss}.
\newblock On equational theories, unification, and (un)decidability.
\newblock {\em J. Symb. Comput.}, 8(1-2):3--49, July 1989.

\bibitem{Claude09}
Francisco Claude and Gonzalo Navarro.
\newblock Self-indexed text compression using straight-line programs.
\newblock In Rastislav Kr\'{a}lovi\v{c} and Damian Niwi\'{n}ski, editors, {\em
  Mathematical Foundations of Computer Science 2009}, volume 5734 of {\em
  Lecture Notes in Computer Science}, pages 235--246. Springer Berlin
  Heidelberg, 2009.

\bibitem{Erbatur13a}
Serdar Erbatur, Santiago Escobar, Deepak Kapur, Zhiqiang Liu, ChristopherA.
  Lynch, Catherine Meadows, Jos\'{e} Meseguer, Paliath Narendran, Sonia
  Santiago, and Ralf Sasse.
\newblock Asymmetric unification: A new unification paradigm for cryptographic
  protocol analysis.
\newblock In MariaPaola Bonacina, editor, {\em Automated Deduction, CADE-24},
  volume 7898 of {\em Lecture Notes in Computer Science}, pages 231--248.
  Springer Berlin Heidelberg, 2013.

\bibitem{Escobar06}
Santiago Escobar, Catherine Meadows, and Jos\'{e} Meseguer.
\newblock A rewriting-based inference system for the {NRL} {P}rotocol
  {A}nalyser and its meta-logical properties.
\newblock {\em Theoretical Computer Science}, 367:162--202, 2006.

\bibitem{Gallier89}
Jean~H. Gallier and Wayne Snyder.
\newblock Complete sets of transformations for general {E}-unification.
\newblock {\em Theoretical Computer Science}, 67:203--260, 1989.

\bibitem{gascon09}
Adria Gasc{\'o}n, Guillem Godoy, and Manfred Schmidt-Schau{\ss}.
\newblock Unification with singleton tree grammars.
\newblock In Ralf Treinen, editor, {\em 20th International Conference on
  Rewriting Techniques and Applications (RTA)}, volume 5595 of {\em Lecture
  Notes in Computer Science}, pages 365--379. Springer, 2009.

\bibitem{gascon11}
Adri\'{a} Gasc\`{o}n, Guillem Godoy, and Manfred Schmidt-Schau{\ss}.
\newblock Unification and matching on compressed terms.
\newblock {\em ACM Transactions on Computational Logic}, 12(4), 2011.

\bibitem{JouannaudKirchner91}
Jean-Pierre Jouannaud and Claude Kirchner.
\newblock Solving equations in abstract algebras: A rule-based survey of
  unification.
\newblock In {\em Computational Logic - Essays in Honor of Alan Robinson},
  pages 257--321, 1991.

\bibitem{levy04}
Jordi Levy, Manfred Schmidt-Schau{\ss}, and Mateu Villaret.
\newblock Monadic second-order unification is {NP}-complete.
\newblock In Vincent van Oostrom, editor, {\em 15th International Conference on
  Rewriting Techniques and Applications (RTA)}, volume 3091 of {\em Lecture
  Notes in Computer Science}, pages 55--69. Springer, 2004.

\bibitem{Levy08}
Jordi Levy, Manfred Schmidt-Schau{\ss}, and Mateu Villaret.
\newblock The complexity of monadic second-order unification.
\newblock {\em SIAM J. Computation}, 38(3):1113--1140, 2008.

\bibitem{Lifshits07}
Yury Lifshits.
\newblock Processing compressed texts: A tractability border.
\newblock In Bin Ma and Kaizhong Zhang, editors, {\em CPM}, volume 4580 of {\em
  Lecture Notes in Computer Science}, pages 228--240. Springer, 2007.

\bibitem{Lohrey12}
Markus Lohrey.
\newblock Algorithms on {SLP}-{C}ompressed {S}trings: {A} {S}urvey.
\newblock {\em Groups Complexity Cryptology}, 4(2):241--299, 2012.

\bibitem{marshall12c}
Andrew~M. Marshall and Paliath Narendran.
\newblock New algorithms for unification modulo one-sided distributivity and
  its variants.
\newblock In Bernhard Gramlich, Dale Miller, and Uli Sattler, editors, {\em
  Automated Reasoning}, volume 7364 of {\em Lecture Notes in Computer Science},
  pages 408--422. Springer Berlin Heidelberg, 2012.

\bibitem{MiyazakiST97}
Masamichi Miyazaki, Ayumi Shinohara, and Masayuki Takeda.
\newblock An improved pattern matching algorithm for strings in terms of
  straight-line programs.
\newblock In Alberto Apostolico and Jotun Hein, editors, {\em CPM}, volume 1264
  of {\em Lecture Notes in Computer Science}, pages 1--11. Springer, 1997.

\bibitem{narendran10}
Paliath Narendran, Andrew~M. Marshall, and Bibhu Mahapatra.
\newblock On the complexity of the {T}iden-{A}rnborg algorithm for unification
  modulo one-sided distributivity.
\newblock In {\em UNIF 24, EPTCS}, volume~42, pages 54--63, 2010.
\newblock \url{http://dx.doi.org/10.4204/EPTCS.42.5}.

\bibitem{plandowski94}
Wojciech Plandowski.
\newblock Testing equivalence of morphisms on context-free languages.
\newblock In {\em Second Annual European Symposium on Algorithms (ESA 94)},
  pages 460--470. LNCS 855, Springer-Verlag, 1994.

\bibitem{Rytter99}
Wojciech Rytter.
\newblock Algorithms on compressed strings and arrays.
\newblock In Jan Pavelka, Gerard Tel, and Miroslav Barto\v{s}ek, editors, {\em
  SOFSEM’99: Theory and Practice of Informatics}, volume 1725 of {\em Lecture
  Notes in Computer Science}, pages 48--65. Springer Berlin Heidelberg, 1999.

\bibitem{Rytter03}
Wojciech Rytter.
\newblock Application of {L}empel-{Z}iv factorization to the approximation of
  grammar-based compression.
\newblock {\em Theoretical Computer Science}, 302:211 -- 222, 2003.

\bibitem{schmidtschauLIPIcs2012}
Manfred Schmidt-Schau{\ss}.
\newblock {Matching of Compressed Patterns with Character-Variables}.
\newblock In Ashish Tiwari, editor, {\em 23rd International Conference on
  Rewriting Techniques and Applications (RTA'12)}, volume~15 of {\em Leibniz
  International Proceedings in Informatics (LIPIcs)}, pages 272--287, Dagstuhl,
  Germany, 2012. Schloss Dagstuhl--Leibniz-Zentrum fuer Informatik.

\bibitem{TidenArnborg87}
Erik Tid{\'e}n and Stefan Arnborg.
\newblock Unification problems with one-sided distributivity.
\newblock {\em Journal of Symbolic Computation}, 3(1/2):183--202, 1987.

\end{thebibliography}

\section*{Appendix A: Computing a Suffix}\label{comp_compress_suff_sec}
For completeness we present in a very simple recursive algorithm for computing
the compressed suffix in polynomial time. One could also use the methods
developed in~\cite{gascon09, Levy08, Lifshits07, schmidtschauLIPIcs2012} to efficiently compute
the suffix and prefix. 
The algorithm only requires the size of the string produced by the prefix 
and the actual $SLP$ containing the prefix and suffix. 
We assume also that the size of the string produced for each $SLP$ is contained in the data structure.  
Let $\pi_1$ denote the large $SLP$ containing the suffix and let $\pi_2$ denote the
prefix. Let $diff = || \pi_1 || - || \pi_2 ||$, i.e., $diff$ is the size of the string produced by the
suffix. The algorithm returns the suffix $SLP$, denoted as $\pi_3$. 

\begin{algorithm}
\caption{BuildSuffix}
\label{alg2}
\begin{algorithmic}
\STATE (Input:~$\pi_1$, $diff$)
\STATE  Create a $SLP$ pointer: temp = $\pi_1$
\WHILE{$||RightChild(temp)||$ $\geq$ $diff$}
\item $temp=RightChild(temp)$
\ENDWHILE 
\IF{$||temp|| == d_f$}
\item return $temp$
\ELSE
\item Create new non-terminal $\pi_3$.
\item $RightChild( \pi_3 )=RightChild(temp)$
\item $LeftChild( \pi_3 )=BuildSuffix(LeftChild(temp), diff - || RightChild(\pi_3) ||)$
\item return $ \pi_3 $
\ENDIF
\end{algorithmic}
\end{algorithm}
\ignore{
\begin{lemma}
Given a straight line program $\pi_1$ and a prefix $\pi_2$, 
Algorithm~\ref{alg2} will return a straight line program $\pi_3$
that is the result of quotient operation ${\pi_2}_{}^{-1}  \pi_1$.
\end{lemma}
\begin{proof}
The SLP $\pi_1$ generates exactly one string. It can be seen that
Algorithm~\ref{alg2} removes from the SLP $\pi_1$ the section
that generates the prefix string of length $| \pi_2 |$. Therefore
the remaining SLP generates the suffix $\pi_3$.
\end{proof}
}

\begin{theorem} 
Algorithm~\ref{alg2} runs in $\mathcal{O}(depth(\pi_1))$, 
$| \pi_3 | \leq |\pi_1| + depth(\pi_1) $ and $depth(\pi_3) \leq depth(\pi_1)$. 
\end{theorem}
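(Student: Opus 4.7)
The plan is to argue all three bounds simultaneously by induction on $depth(\pi_1)$, exploiting the fact that the algorithm walks down the right spine of $\pi_1$ once and then recurses on a single left child.

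First I would analyze a single invocation. The \textbf{while} loop starts at $temp = \pi_1$ and each iteration replaces $temp$ by its right child, so it executes at most $depth(\pi_1)$ times before termination. When the loop stops, we are at a node $temp$ whose right child generates a string shorter than $diff$ (or $temp$ is a terminal, handled by the base case). Either $\|temp\| = diff$, in which case we return the existing non-terminal $temp$ with no additional cost, or exactly one fresh non-terminal $\pi_3$ is created whose right child is $RightChild(temp)$ (an already-existing production) and whose left child is obtained by one recursive call on $LeftChild(temp)$ with the updated suffix length $diff - \|RightChild(\pi_3)\|$.

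Next I would close the induction. The recursive call is invoked on $LeftChild(temp)$, which is a strict sub-SLP of $\pi_1$, so its depth is strictly smaller than $depth(\pi_1)$; this makes the induction well-founded and gives the running-time bound since each level contributes (i) one traversal down the right spine at that level, plus (ii) the creation of at most one new top production, both $\mathcal{O}(1)$ per level, for a total of $\mathcal{O}(depth(\pi_1))$. For the size bound, I would observe that each recursive level produces at most one new grammar production while reusing every production already in $\pi_1$; summing over the at-most-$depth(\pi_1)$ recursive levels yields $|\pi_3| \leq |\pi_1| + depth(\pi_1)$. For the depth bound, the inductive hypothesis gives $depth(LeftChild(\pi_3)) \leq depth(LeftChild(temp))$, and by construction $depth(RightChild(\pi_3)) = depth(RightChild(temp))$; therefore
\[ depth(\pi_3) \leq \max\{depth(LeftChild(temp)), depth(RightChild(temp))\} + 1 = depth(temp) \leq depth(\pi_1). \]

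The only subtle point, and the main obstacle I anticipate, is verifying the depth bound cleanly at the point where a new non-terminal is introduced: naively, adding a new production of the form $\pi_3 \rightarrow LeftChild \; RightChild$ looks like it should add $1$ to the depth, but the bound $depth(\pi_3) \leq depth(\pi_1)$ relies on the fact that $temp$ already has depth at least $\max\{depth(LeftChild(temp)), depth(RightChild(temp))\} + 1$ in $\pi_1$, so $\pi_3$ replaces $temp$ without ever exceeding $depth(temp)$. Making this bookkeeping precise, together with the base case $\|temp\| = diff$ (where $\pi_3$ is just an existing node and the claim is trivial), completes the proof.
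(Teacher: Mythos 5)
Your proof is correct and follows essentially the same approach as the paper's: observe that each invocation makes a single recursive call on a node of strictly smaller depth and adds at most one new production, giving both the $\mathcal{O}(depth(\pi_1))$ running time and the $|\pi_3| \leq |\pi_1| + depth(\pi_1)$ size bound. Your explicit inductive argument for $depth(\pi_3) \leq depth(\pi_1)$ (comparing the new production's children to those of $temp$) is a welcome addition, since the paper's proof asserts only the first two bounds and leaves the depth bound implicit.
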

\begin{proof}
 Consider the recursive call in Algorithm~\ref{alg2}. 
The algorithm only uses a single linear recursive call and the recursion is always called on a non-terminal one level lower in $\pi_1$. 
Therefore, the algorithm is bounded by $depth(\pi_1)$. In addition, a new rule/non-terminal,
is created for each recursive call for a maximum of 
$depth(\pi_1)$ new rules, thus  $| \pi_3 | \leq |\pi_1| + depth(\pi_1)$. 
\end{proof}
 
{\cW Blah}
\vspace{-30 pt}
\end{document}